\newtheorem{definition}{Definition}
\newtheorem{proposition}{Proposition}
\newtheorem{lemma}{Lemma}
\newtheorem{claim}{Claim}
\newcommand{\graph}{\mathcal{G}}
\newcommand{\schema}{\mathcal{S}}
\newcommand{\Nodes}{\mathcal{N}}
\newcommand{\Values}{\mathcal{V}}
\newcommand{\Predicates}{\mathcal{P}}
\newcommand{\Keys}{\mathcal{K}}
\newcommand{\Records}{\mathcal{R}}
\newcommand{\nodes}{\mathsf{Nodes}}
\newcommand{\keys}{\mathsf{Keys}}
\newcommand{\values}{\mathsf{Values}}
\newcommand{\sel}{\mathit{sel}}
\newcommand{\IRIs}{\mathsf{IRIs}}
\newcommand{\Blanks}{\mathsf{Blanks}}
\newcommand{\Literals}{\mathsf{Literals}}
\newcommand{\Exkey}[1]{\mathit{#1}}
\newcommand{\Exprop}[1]{\mathsf{#1}}
\newcommand{\Exvt}[1]{\mathbbm{#1}}
\newcommand{\exowns}{\Exprop{ownsAccount}}
\newcommand{\exaccess}{\Exprop{hasAcccess}}
\newcommand{\exinvited}{\Exprop{invited}}
\newcommand{\exemail}{\Exkey{email}}
\newcommand{\excard}{\Exkey{card}}
\newcommand{\exprivileged}{\Exkey{privileged}}
\newcommand{\OMIT}[1]{}
\newcommand{\shapeTerm}{SHACL shape\xspace}
\newcommand{\selTerm}{SHACL selector\xspace}
\newcommand{\SHACLSchema}{\schema}
\newcommand{\pathExpr}{\pi}
\newcommand{\id}{\mathsf{id}}
\newcommand{\eq}{\mathsf{eq}}
\newcommand{\disj}{\mathsf{disj}}
\newcommand{\geqn}[2]{\exists^{\geq #1}#2.}
\newcommand{\leqn}[2]{\exists^{\leq #1}#2.}
\newcommand{\hasvalue}{\mathsf{test}}
\newcommand{\test}{\mathsf{test}}
\newcommand{\closed}{\mathsf{closed}}
\newcommand{\iexpr}[2]{\llbracket #1 \rrbracket^{#2}}
\newcommand{\sem}[1]{\llbracket{#1}\rrbracket}
\newcommand{\gsem}[1]{\llbracket{#1}\rrbracket^{\graph}}
\newcommand{\rp}[1]{\texttt{\{} #1 \texttt{\}}}  
\newcommand{\closedRT}[1]{\rp{#1}}  
\newcommand{\emptyRec}{\textbf{r}_{\emptyset}}
\newcommand{\tOr}{\mathbin{\texttt{|}}}
\newcommand{\tAnd}{\mathbin{\texttt{\&}}}
\newcommand{\ValueTypes}{\mathcal{T}} 
\newcommand{\vtype}{\mathbbm{v}} 
\newcommand{\ntype}{\mathbbm{c}} 
\newcommand{\etype}{\mathbbm{e}} 
\newcommand{\gDef}{{\color{orange} \ \Coloneqq \ }}
\newcommand{\gMid}{{\color{orange} \ \big|\ }}
\newcommand{\gEnd}{{\color{orange} \ .\ }}
\newcommand{\pexpr}{\pi}
\newcommand{\ppexpr}{\bar{\pi}}
\newcommand{\et}[3]{{#1}\stackrel{#2}{\rightarrow}{#3}}  
\newcommand{\pwc}{\star}  
\newcommand{\Key}{\textsf{\bf Key}}
\newcommand{\keyIsVal}[2]{[{#1}={#2}]}
\newcommand{\shexschema}{\schema}
\newcommand{\shexsel}{\mathit{sel}}
\newcommand{\shexneigh}[1]{\left\{ #1 \right\}}
\newcommand{\shexneighopen}[1]{\left\{ #1 \right\}^{\circ}}
\newcommand{\shexallte}{\top}
\newcommand{\shextop}{\shexneigh{\shexallte}}
\newcommand{\se}{\mathit{\varphi}}
\newcommand{\te}{\mathit{e}}
\newcommand{\tte}{e}
\newcommand{\shextest}{\mathsf{test}}
\newcommand{\shexhasvalue}{\mathsf{test}}
\newcommand{\shexeach}{\mathop{;}}
\newcommand{\shexone}{\mathop{|}}
\newcommand{\shexinverse}[1]{#1^{-}}
\newcommand{\shexneg}[1]{\neg #1}
\newcommand{\shexneginv}[1]{\neg{\shexinverse{#1}}}
\newcommand{\ttopen}{\textit{op}_{\pm}}
\newcommand{\ttclosed}{\textit{op}_{-}}
\newcommand{\copyswap}{\mathit{copyswap}}
\newcommand{\neigh}{\mathsf{Neigh}}
\newcommand{\Triples}{\mathcal{E}}
\newcommand{\any}{\mathbbm{any}}
\newcommand{\ognjen}[1]{\todo[inline]{Ognjen: #1}}
\begin{document}

\title{Common Foundations for SHACL, ShEx, and PG-Schema}

\author{Shqiponja Ahmetaj}
\orcid{0000-0003-3165-3568}
\email{shqiponja.ahmetaj@tuwien.ac.at}
\affiliation{
  \institution{TU Wien} \city{Vienna}\country{Austria}
}

\author{Iovka Boneva}
\orcid{0000-0002-2696-7303}
\email{iovka.boneva@univ-lille.fr}
\affiliation{%
  \institution{Univ. Lille, CNRS, Inria, Centrale Lille, UMR 9189 CRIStAL}
  \postcode{F-59000}\city{Lille}\country{France}
}

\author{Jan Hidders}
\orcid{0000-0002-8865-4329}
\email{j.hidders@bbk.ac.uk}
\affiliation{
  \institution{Birkbeck, University of London} \city{London}\country{UK}
}

\author{Katja Hose}
\orcid{0000-0001-7025-8099}
\email{katja.hose@tuwien.ac.at}
\affiliation{
  \institution{TU Wien} \city{Vienna}\country{Austria}
}

\author{Maxime Jaku{\-}bow{\-}ski}
\orcid{0000-0002-7420-1337}
\email{maxime.jakubowski@tuwien.ac.at}
\affiliation{
  \institution{TU Wien} \city{Vienna}\country{Austria}
}

\author{Jose-Emilio Labra-Gayo}
\orcid{0000-0001-8907-5348}
\email{labra@uniovi.es}
\affiliation{
  \institution{University of Oviedo} \city{Oviedo}\country{Spain}
}

\author{Wim Martens}
\orcid{0000-0001-9480-3522}
\email{wim.martens@uni-bayreuth.de}
\affiliation{
  \institution{University of Bayreuth} \city{Bayreuth}\country{Germany}
}

\author{Fabio Mogavero}
\orcid{0000-0002-5140-5783}
\email{fabio.mogavero@unina.it}
\affiliation{
  \institution{Universit\`a di Napoli Federico II} \city{Naples}\country{Italy}
}

\author{Filip Mur{\-}lak}
\orcid{0000-0003-0989-3717}
\email{f.murlak@uw.edu.pl}
\affiliation{
  \institution{University of Warsaw} \city{Warsaw}\country{Poland}
}

\author{Cem Okulmus}
\orcid{0000-0002-7742-0439}
\email{cem.okulmus@uni-paderborn.de}
\affiliation{
  \institution{Paderborn University} \city{Paderborn}\country{Germany}
}

\author{Axel Polleres}
\orcid{0000-0001-5670-1146}
\email{axel.polleres@wu.ac.at}
\affiliation{
  \institution{WU Wien} \city{Vienna}\country{Austria} \\
  \institution{CSH Vienna} \city{Vienna}\country{Austria}
}

\author{Ognjen Savkovi\'c}
\orcid{0000-0002-9141-3008}
\email{ognjen.savkovic@unibz.it}
\affiliation{
  \institution{Free University of Bolzano} \city{Bolzano}\country{Italy}
}

\author{Mantas \v{S}imkus}
\orcid{0000-0003-0632-0294}
\email{mantas.simkus@tuwien.ac.at}
\affiliation{
  \institution{TU Wien} \city{Vienna}\country{Austria}
}

\author{Dominik Tomaszuk}
\orcid{0000-0003-1806-067X}
\email{d.tomaszuk@uwb.edu.pl}
\affiliation{
  \institution{TU Wien} \city{Vienna}\country{Austria} \\
  \institution{University of Bialystok} \city{Bialystok}\country{Poland}
}

\renewcommand{\shortauthors}{Ahmetaj, et al.}


\begin{abstract}
Graphs have emerged as an important foundation for a variety of applications,
including capturing and reasoning over factual knowledge, semantic data
integration, social networks, and providing factual knowledge for machine
learning algorithms.
To formalise certain properties of the data and to ensure data quality, there is
a need to describe the \emph{schema} of such graphs.
Because of the breadth of applications and availability of different data
models, such as RDF and property graphs, both the Semantic Web and the database
community have independently developed \emph{graph schema languages}: SHACL,
ShEx, and PG-Schema.
Each language has its unique approach to defining constraints and validating
graph data, leaving potential users in the dark about their commonalities and
differences.
In this paper, we provide formal, concise definitions of the \emph{core
components} of each of these schema languages.
We employ a uniform framework to facilitate a comprehensive comparison between
the languages and identify a common set of functionalities, shedding light on
both overlapping and distinctive features of the three languages.
\end{abstract}


\maketitle


\section{Introduction}

Driven by the unprecedented growth of interconnected data, \emph{graph-based
data representations} have emerged as an expressive and versatile framework for
modelling and analysing connections in data sets~\cite{SakrBVIAAAABBDV21}.
This rapid growth however, has led to a proliferation of diverse approaches,
each with its own identity and perspective.

The two most prominent graph data models are \emph{RDF} (Resource Description
Framework)~\cite{CWL14} and \emph{Property Graphs}~\cite{BFVY23}.
In RDF, data is modelled as a collection of triples, each consisting of a
subject, predicate, and object.
Such triples naturally represent either edges in a directed labelled graph
(where the predicates represent relationships between nodes), or
attributes-value pairs of nodes.
That is, objects can both be entities or atomic (literal) values.
In contrast, Property Graphs model data as nodes and edges, where both can have
labels and records attached, allowing for a flexible representation of
attributes directly on the entities and relationships.

Similarly to the different data models, we are also seeing different approaches
towards \emph{schema languages} for graph-structured data.
Traditionally, in the Semantic Web community, schema and constraint languages
have been \emph{descriptive}, focusing on flexibility to accommodate varying
structures.
However, there has been a growing need for more \emph{prescriptive} schemas
that focus on \emph{validation of data}.
At the same time, in the Database community, schemas have traditionally been
prescriptive but, since the rise of semi-structured data, the demand for
descriptive schemas has been growing.
Thus, the philosophies of schemas in the two communities have been growing
closer together.

For RDF, there are two main schema languages: SHACL (Shapes Constraint
Language)~\cite{KK17}, which is also a W3C recommendation, and ShEx (Shape
Expressions)~\cite{PGS14}.
In the realm of Property Graphs, the current main approach is
PG-Schema~\cite{ABDF23,ABDF21}; it was developed with liaisons to the GQL and SQL/PGQ standardization committees and is currently being used as a basis for extending these standards.
The development processes of these languages have been quite different.
For SHACL and ShEx, the formal semantics were only introduced after their
initial implementations, echoing the evolution of programming languages.
Indeed, an analysis of SHACL's expressive power and associated decision problems
appeared in the literature~\cite{LSRLS20,PKMN20,PK21,BJB22,PKM22,BJVdB24} only
after it was published as a W3C recommendation, leading up to a fully
\emph{recursive variant} of the
language~\cite{CRS18,CFRS19,ACORSS20,BJ21,PKM22}, whose semantics had been left
undefined in the standard.
A similar scenario occurred with ShEx, where formal analyses were only conducted
in later phases~\cite{BGP17,SBG15}.
PG-Schema developed in the opposite direction.
Here, a group of experts from industry and academia first defined the main ideas
in a sequence of research papers~\cite{ABDF21,ABDF23} and the implementation is
expected to follow.

Since these three languages have been developed in different communities, in the
course of different processes, it is no surprise that they are quite different.
SHACL, ShEx, and PG-Schema use an array of diverse approaches for defining how
their components work, ranging from \emph{declarative} (formulae that
\emph{specify what to look for}) to \emph{generative} (expressions that
\emph{generate the matching content}), and even combinations thereof.
The bottom line is that we are left with three approaches to express a ``schema
for graph-structured data'' that are very different at first glance.

As a group of authors coming from both the Semantic Web and Database
communities, we believe that there is a \emph{need for common understanding}.
While the functionalities of schemas and constraints used in the two communities
largely overlap, it is a daunting task to understand the essence of languages,
such as SHACL, ShEx, and PG-Schema.
In this paper, we therefore aim to shed light on the common aspects and the
differences between these three languages.
We focus on non-recursive schemas, as neither PG-Schema nor
standard SHACL support recursion and also in the academic community the
discussion on the semantics of recursive SHACL has not reached consensus
yet~\cite{CRS18,CFRS19,ACORSS20,BJ21,PKM22,OS24}.

Using a common framework, we provide crisp definitions of the main aspects of
the languages.
Since the languages operate on different data models, as a first step we
introduce the \emph{Common Graph Data Model}, a mathematical representation of
data that \emph{canonically embeds} both RDF graphs and Property Graphs (see
Section~\ref{sec:prl}, which also develops general common foundations).
Precise abstractions of the languages themselves are presented in
Sections~\ref{sec:shacl} (SHACL),~\ref{sec:shex} (ShEx),
and~\ref{sec:pgschema-simplified} (PG-Schema);
\todo{Remove or reformulate!}
in the Appendices we explain how and why we sometimes deviate from the original
formalisms.
Each of these sections contains examples to give readers an immediate intuition
about what kinds of conditions each language can express.
Then, in Section~\ref{sec:core}, we present the \emph{Common Graph Schema
Language (CoGSL)}, which consists of functionalities shared by them all.

Casting all three languages in a common framework has the immediate advantage
that the reader can identify common functionalities \emph{based on the syntax
only}: on the one hand, we aim at giving the same semantics to schema language
components that syntactically look the same, and on the other hand, we can
provide examples of properties that distinguish the three languages using simple
syntactic constructs that are not part of the common core.
Aside from corner cases, properties expressed using constructs outside the
common core are generally not expressible in all three languages.
By providing an understanding of fundamental differences and similarities
between the three schema languages, we hope to benefit both practitioners in
choosing a schema language fitting their needs, and researchers in studying the
complexity and expressiveness of schema languages.




\section{Foundations}
\label{sec:prl}

In this section we present some material that we will need in the subsequent
sections, and define a data model that consists of common aspects of RDF and
Property Graphs.

\subsection{A Common Data Model}

When developing a common framework for SHACL, ShEx, and PG-Schema, the first
challenge is establishing  a \emph{common data model}, since SHACL and ShEx work
on RDF, whereas PG-Schema works on Property Graphs.
Rather than using a model that generalises  both RDF and Property Graphs, we
propose a simple model, called \emph{common graphs}, which we obtained by asking
what, fundamentally, are the \emph{common aspects} of RDF and Property Graphs
(Appendix~\ref{sec:appendix-foundations} gives more details on the distilling of
common graphs).

Let us assume disjoint countable sets of nodes $\Nodes$, values $\Values$,
predicates $\Predicates$, and keys $\Keys$ (sometimes called properties).


\begin{definition}
  A \emph{common graph} is a pair $\graph = (E, \rho)$ where
  \begin{itemize}[\textbullet]
  \item
    $E \subseteq_{\mathit{fin}} \Nodes \times \Predicates \times \Nodes$ is its
    set of edges (which carry predicates), and
  \item
    $\rho \colon \Nodes \times \Keys \pto \Values$ is a finite-domain partial
    function mapping node-key pairs to values.
  \end{itemize}
  The set of nodes of a common graph $\graph$, written $\nodes(\graph)$,
  consists of all elements of $\Nodes$ that occur in $E$ or in the domain of
  $\rho$.
  Similarly, $\keys(\graph)$ is the subset of $\Keys$ that is used in $\rho$,
  and $\values(\graph)$ is the subset of $\Values$ that is used in $\rho$ (that
  is, the range of $\rho$).
\end{definition}


\begin{example}
  \label{ex:common-graph}
  Consider Figure~\ref{fig:common-graph}, containing a graph to store
  information about \emph{users} who may have access to (possibly multiple)
  \emph{accounts} in, \eg, a media streaming service.
  In this example, we have six nodes describing four persons ($u_1,...,u_4$) and
  two accounts ($a_1$, $a_2$).
  As a common graph $\graph = (E, \rho)$, the nodes are $a_1$, $u_1$, etc.
  Examples of edges in $E$ are $(u_2, \exaccess,a_1)$ and $(u_3, \exinvited,
  u_2)$.
  Furthermore, we have $\rho(u_2, \exemail) =$ d@d.d and $\rho(a_1,card) =
  1234$.
  So, $E$ captures the arrows in the figure (labelled with predicates) and
  $\rho$ captures the key/value information for each node.
  %
  %
  Notice that a person may be the owner of an account, and may potentially have
  access to other accounts.
  This is captured using the predicates $\exowns$ and $\exaccess$, respectively.
  In addition, the system implements an invitation functionality, where users
  may invite other people to join the platform.
  The previous invitations are recorded using the predicate $\exinvited$.
  Both accounts and users may be privileged, which is stored via a Boolean value
  of the key~$\exprivileged$.
  We note that the presence of the key $\exemail$ (\resp, of the key (credit)
  $\excard$) is associated with, and indeed identifies users (\resp, accounts).
\end{example}


\begin{figure}[t]
\resizebox{1\linewidth}{!}{
  \includegraphics{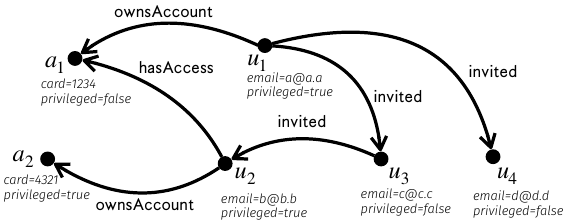}
}
\Description{A diagram of the user common graph.}
\caption{The media service common graph. }
\label{fig:common-graph}
\end{figure}

It is easy to see that every common graph is a property graph (as per the formal
definition of property graphs~\cite{ABDF23}).
A common graph can also be seen as a set of triples, as in RDF.
Let
\[
  \Triples
=
  \left( \Nodes \times \Predicates \times \Nodes \right)
\;\cup\;
  \left( \Nodes \times \Keys \times \Values \right)\,.
\]
Then, a common graph can be seen as a finite set $\graph \subseteq \Triples$
such that for each $u \in \Nodes$ and $k \in \Keys$ there is at most one
$v \in \Values$ such that $(u, k, v) \in \graph$.
Indeed, a common graph $(E, \rho)$ corresponds to
\[
  E \;\cup\; \{ (u, k, v) \mid \rho(u,k) = v\}\;.
\]
When we write $\rho(u, k) = v$ we assume that $\rho$ is defined on $(u, k)$.

\medskip

\noindent\emph{Throughout the paper we see property graph $\graph$
simultaneously as a pair $(E, \rho)$ and as a set of triples from $\Triples$,
switching between these perspectives depending on what is most convenient at a
given moment.}

\subsection{Node Contents and  Neighbourhoods}

Let $\Records$ be the set of all \emph{records}, \ie, finite-domain partial
functions $r \colon \Keys \pto \Values$.
We write records as sets of pairs $\left\{ (k_1, w_1), \dots (k_n, w_n)
\right\}$ where $k_1, \dots, k_n$ are all different, meaning that $k_i$ is
mapped to $w_i$.

For a common graph $\graph = (E,\rho)$ and node $v$ in $\graph$, by a slight
abuse of notation we write $\rho(v)$ for the record $\left\{ (k, w) \mid
\rho(v,k) = w \right\}$ that collects all key-value pairs associated with node
$v$ in $\graph$.
We call $\rho(v)$ the \emph{content} of node $v$ in $\graph$.
This is how PG-Schema interprets common graphs: it views key-value pairs in
$\rho(v)$ as \emph{properties} of the node $v$, rather than independent,
navigable objects in the graph.

SHACL and ShEx, on the other hand, view common graphs as sets of triples and
make little distinction between keys and predicates.
The following notion---when applied to a node---uniformly captures the local
context of this node from that perspective: the content of the node and all
edges incident with the node.






\begin{definition}[Neighbourhood]
  Given a common graph $\graph$ and a node or value $v \in \Nodes \cup \Values$,
  the \emph{neighbourhood} of $v$ in $\graph$ is $\neigh_\graph(v) = \left\{
  (u_1, p, u_2) \in \graph \mid u_1 = v \text{ or } u_2 = v \right\}$.
  %
\end{definition}

\todo{JH: Is this actually used anywhere?}

When $v \in \Nodes$, then $\neigh_\graph(v)$ is a star-shaped graph
where only the central node has non-empty content.
When $v \in \Values$, then $\neigh_\graph(v)$ consists of all the nodes in
$\graph$ that have some key with value $v$, which is a common graph with no
edges and a restricted function $\rho$.


\subsection{Value Types}

We assume an enumerable set of \emph{value types} $\ValueTypes$.
The reader should think of value types as \texttt{integer}, \texttt{boolean},
\texttt{date}, \etc
Formally, for each value type $\vtype \in \ValueTypes$, we assume that there is
a set $\sem{\vtype} \subseteq \Values$ of all values of that type and that each
value $v \in \Values$ belongs to some type, \ie, there is at least one $\vtype
\in \ValueTypes$ such that $v \in \sem{\vtype}$.
Finally, we assume that there is a type $\any \in \ValueTypes$ such that
$\sem{\any} = \Values$.

\subsection{Shapes and Schemas}
\label{ssec:shapes}

We formulate all three schema languages using \emph{shapes}, which are unary
formulas describing the graph's structure around a \emph{focus} node or a value.
Shapes will be expressed in different formalisms, specific to the schema
language; for each of these formalisms we will define when a focus node or value
$v \in \Nodes \cup \Values$ \emph{satisfies} shape $\varphi$ in a common graph
$\graph$, written $\graph, v \models \varphi$.

Inspired by ShEx \emph{shape maps}, we abstract a schema $\schema$ as a set of
pairs $(\sel,\varphi)$, where $\varphi$ is a shape and $\sel$ is a
\emph{selector}.
A selector is also a shape, but usually a very simple one, typically checking
the presence of an incident edge with a given predicate, or a property with a
given key.
A graph $\graph$ is \emph{valid} \wrt $\schema$, in symbols $\graph \models
\schema$, if
\[
  \graph, v \models \sel
\quad \text{implies} \quad
  \graph, v \models \varphi,
\]
for all $v \in \Nodes \cup \Values$ and $(\mathit{sel}, \varphi) \in \schema$.
That is, for each focus node or value satisfying the selector, the graph around
it looks as specified by the shape.
We call schemas $\schema$ and $\schema'$ \emph{equivalent} if $\graph \models
\schema$ \iff $\graph \models \schema'$, for all $\graph$.
In what follows, we may use $\mathit{sel} \Rightarrow \varphi$ to indicate a
pair $(\mathit{sel}, \varphi)$ from a schema $\SHACLSchema$.



\begin{example}
  \label{ex:constraint-desc}
  We next describe some constraints one may want to express in the domain of
  Example~\ref{ex:common-graph}.
  \begin{enumerate}[(C1)]
  \item
    We may want the values associated to certain keys to belong to concrete
    datatypes, like strings or Boolean values.
    In our example, we want to state that the value of the key $\excard$ is
    always an integer.
  \item
    We may expect the existence of a value associated to a key, an outgoing
    edge, or even a complex path for a given source node.
    For our example, we require that all owners of an account have an email
    address defined.
  \item
    We may want to express database-like uniqueness constraints.
    For instance, we may wish to ensure that the email address of an account
    owner uniquely identifies them.
  \item
    We may want to ensure that all paths of a certain kind end in nodes with
    some desired properties. For example, if an account is privileged, then all
    users that have access to it should also be privileged.
  \item
    We may want to put an upper bound on the number of nodes reached from a
    given node by certain paths. For instance, every user may have access to at
    most 5 accounts.
\end{enumerate}


\end{example}




\section{SHACL on common graphs}
\label{sec:shacl}

\begin{table}[t]
  \caption{Evaluation of a path expressions.}
  \label{tab:seme2}
  \centering
  \begin{tabular}{cl}
    \toprule
    $\pathExpr$ & $\iexpr{\pathExpr}{\graph} \subseteq (\Nodes\cup\Values)\times(\Nodes \times \Values)\ $ \\ 
    \midrule
    $\id$ & $\{ (v,v) \mid v \in \Nodes  \cup \Values \}$\\[2pt]
    $q$ & $\{(v,u)\mid (v,q,u)\in \graph \}$ \\[2pt]
    $\pathExpr^{-}$ & $\{(v,u)\mid (u,v) \in \iexpr{\pathExpr}{\graph}\}$ \\[2pt]
    $\pathExpr \cdot \pathExpr'$ & $\{(v,u) \mid \exists v': (v,v')\in\iexpr{\pathExpr}{\graph}\land (v',u)\in\iexpr{\pathExpr'}{\graph}\}$ \\[2pt]
    $\pathExpr\cup \pathExpr'$ & $\iexpr{\pathExpr}{\graph}\cup\iexpr{\pathExpr'}{\graph}$\\[2pt]
   $\pathExpr^{*}$ & $ \iexpr{\id}{\graph} \cup \iexpr{\pathExpr}{\graph}  \cup \iexpr{\pathExpr \cdot \pathExpr}{\graph} \cup \ldots $ \\[2pt]
    \bottomrule
  \end{tabular}
\end{table}

\newcommand{\defs}{\mathit{def}}
\renewcommand{\models}{\vDash}
\newcommand{\nmodels}{\nvDash}

We first treat SHACL, because it is conceptually the simplest of the three
languages.
It is essentially a logic---some call it a \emph{description logic in
disguise}~\cite{BJB22}.
Our abstraction is inspired by~\cite{MJPHD}.
We focus on the standard, non-recursive SHACL, leaving recursive extensions
\cite{CRS18,ACORSS20,BJ21,PKM22,OS24} for the future.
Some features of SHACL are incompatible with common graphs, and are therefore
omitted (see Appendix~\ref{app:standard-shacl}).
\todo{Remove reference to the appendix.}




\begin{definition}[Path Expression]
  A \emph{path expression} $\pathExpr$ is given by the following grammar:
  \[
    \pathExpr
  \gDef
        \id
  \gMid q
  \gMid \pathExpr^{-}
  \gMid \pathExpr \cdot \pathExpr
  \gMid \pathExpr \cup \pathExpr
  \gMid \pathExpr^{*}
  \gEnd
  \]
  with $q \in \Predicates \cup \Keys$ and $\id$ the identity relation (or empty
  word).
\end{definition}

\begin{definition}[SHACL Shape]
  \label{def:shacl-shape}
  A \emph{SHACL shape} $\varphi$ is given by the following grammar:
  \begin{align*}
    \varphi
  \gDef \
  & \top
  \gMid \hasvalue(c)
  \gMid \test(\vtype)
  \gMid \closed(Q)
  \gMid \eq(\pathExpr, p)
  \gMid \\
  & \disj(\pathExpr, p)
  \gMid \neg \varphi
  \gMid \varphi \land \varphi
  \gMid \varphi \lor \varphi
  \gMid \geqn{n}{\pathExpr}{\varphi}
  \gMid \leqn{n}{\pathExpr}{\varphi}
  \gEnd
  \end{align*}
  with $c \in \Values$, $\vtype \in \ValueTypes$, $Q \subseteq_{\mathit{fin}}
  \Predicates \cup \Keys$, $p \in \Predicates$, and $n$ a natural number.
  We may use $\exists \pathExpr \ldotp \varphi$ as syntactic sugar for
  $\exists^{\geq 1} \pathExpr \ldotp \varphi$.
\end{definition}

\begin{definition}[SHACL Selector]
  A \selTerm $\mathit{sel}$ is a SHACL shape of a restricted form, given by the
  following grammar:
  \[
    \mathit{sel}
  \gDef
        \exists\, q \ldotp \top
  \gMid \exists\, q^{-} \ldotp \top
  \gMid \hasvalue(c)
  \gEnd
  \]
  with $q \in \Predicates \cup \Keys$, and
  $c \in \Values$.
\end{definition}

Putting it together, a \emph{SHACL Schema} $\SHACLSchema$ is a finite set of
pairs $(\mathit{sel}, \varphi)$, where $\mathit{sel}$ is a \selTerm and
$\varphi$ is a \shapeTerm.

To define the semantics of SHACL schemas, we first define in
Table~\ref{tab:seme2} the semantics of a SHACL path expression $\pi$ on a graph
$\graph$ as a binary relation $\iexpr{\pathExpr}{\graph}$ over $\Nodes \cup
\Values$.
The semantics of SHACL shapes is defined in Table~\ref{tab:semphi2}, which
specifies when a node or value $v$ \emph{satisfies} a \shapeTerm  $\varphi$ \wrt
a $\graph$, written $\graph, v \models \varphi$.
Note that both $\iexpr{\pathExpr}{\graph}$ and $\{ v \in \Nodes \cup \Values
\mid \graph, v \models \varphi\}$ may be infinite: for example,
$\sem{\text{id}}^\graph$ is the identity relation over the infinite set $\Nodes
\cup \Values$.

The semantics of SHACL schemas then follows \Cref{ssec:shapes}.
Importantly, SHACL selectors always select a finite subset of $\Nodes \cup
\Values$: the selected nodes or values come either from the selector itself, in
the case of $\hasvalue(c)$, or from $\graph$, in the remaining four cases.
For example, $\exists p \ldotp \top$  selects those nodes of $\graph$ that have
an outgoing $p$-edge in $\graph$---it is grounded to $\graph$ in the second line
of Table~\ref{tab:seme2}.
In consequence, each pair $(\mathit{sel}, \varphi)$ in a SHACL schema tests the
inclusion of a finite set of nodes or values in a possibly infinite set.

\begin{table}[t]
  \caption{Semantics of a \shapeTerm $\varphi$ .}
  \label{tab:semphi2}
  \centering
  \begin{tabular}{cl}
    \toprule
    $\varphi$ & $\graph, v \models \varphi$ if: \\
    \midrule

    $\top$ & trivially satisfied \\[2pt]
        $\hasvalue(c)$ & $v = c$\\[2pt]
    $\test(\vtype)$ & $v \in \sem{\vtype}$\\[2pt]

    $\closed(Q)$ &
                    $  \forall p \in (\Predicates\cup\Keys) \setminus Q:$ not $\graph, v \models \exists^{\geq 1} p. \top$
                 \\[2pt]


    $\eq(\pathExpr,p)$ & $\{ u \mid (v,u) \in \iexpr{\pathExpr}{\graph} \} = \{ u \mid (v,u) \in \iexpr{p}{\graph} \} $ \\[2pt]
    $\disj(\pathExpr,p)$ & $\{ u \mid (v,u) \in \iexpr{\pathExpr}{\graph} \} \cap \{ u \mid (v,u) \in \iexpr{p}{\graph} \} = \emptyset$\\[2pt]
    $\neg \varphi $ & not $\graph, v \models \varphi$  \\[2pt]
    $\varphi \land \varphi' $ & $\graph, v \models \varphi$ and $\graph, v  \models \varphi'$  \\[2pt]
    $\varphi \lor \varphi' $  & $\graph, v \models \varphi$ or $\graph, v  \models \varphi'$\\[2pt]

    $\geqn{n}{\pathExpr}{\varphi}$ & $\#\{u\mid (v,u) \in \iexpr{\pathExpr}{\graph} \land \graph,u\models\varphi
    \}\geq n$\\[2pt]
    $\leqn{n}{\pathExpr}{\varphi}$ & $\#\{u\mid (v,u) \in \iexpr{\pathExpr}{\graph} \land\graph,u\models\varphi \}\leq n$\\
    \bottomrule
  \end{tabular}
\end{table}

\begin{example}

For better readability we write $\exists \pi$ instead of $\exists^{\ge 1} \pi
\ldotp \top$ (that is, we omit $\top$) and $\forall \pathExpr \ldotp \varphi$
instead of $\leqn{0}{\pathExpr}{\lnot \varphi}$.
Let us see how the constraints from Example~\ref{ex:constraint-desc} can be
handled in SHACL.
For (C1), we assume the value type $\mathbbm{int}$ with the obvious meaning.
The  following SHACL constraints express the constraints (C1--C5):
\begin{align*}
  & \exists \excard^{-} \Rightarrow \test(\mathbbm{int})
    & \mbox{(C1)} \\
  & \exists \exowns     \Rightarrow \exists \exemail
    & \mbox{(C2)} \\
  & \exists\exemail^{-} \Rightarrow \exists^{\leq 1} \exemail^{-}
    & \mbox{(C3)} \\
  & \exists \excard      \Rightarrow (\exists \exprivileged \ldotp \neg
      \hasvalue(\mathit{true})) \,\lor
    & \\
  & \qquad  \forall \exaccess^{-} \ldotp (\exists \exprivileged \ldotp
      \hasvalue(\mathit{true}))
    & \mbox{(C4)} \\
  & \exists \exemail \Rightarrow \leqn{5}{\exaccess}
    & \mbox{(C5)}
\end{align*}
Concerning constraint (C3), notice that by using inverse \textit{email} edges,
the constraint indeed states that the email addresses uniquely identify users.
\end{example}

The constructs $\eq(\pathExpr, p)$ and $\disj(\pathExpr, p)$ are unique to
SHACL.
Let us see them in use.

\begin{example}\label{ex:fancy-shacl-eq}
  Using $\eq(\pathExpr, p)$, we can say, for instance, that an owner of an
  account also has access to it:
  \[
    \exists \exowns \Rightarrow  \eq(\exaccess \cup \exowns, \exaccess)\,.
  \]
  Note how we use $\eq$ and $\cup$ to express that the existence of one path
  ($\exowns$) implies the existence of another path ($\exaccess$) with the same
  endpoints.
\end{example}

A key feature in SHACL that is not available in ShEx is the ability to use
regular expressions to talk about complex paths.
This provides a limited, still non-trivial, form of recursive navigation in the
graph, even though the standard SHACL does not support recursive constraints
(in contrast to standard ShEx).

\begin{example}
  \label{ex:fancy-shacl-paths}
  Suppose that in~\Cref{fig:common-graph}, we impose that for every node with a
  $\exprivileged$ key, either its value is $\mathit{false}$ or, along inverse
  $\exinvited$ edges there is a unique, privileged ``ancestor'', which has no
  further inverse  $\exinvited$ edges.
  This is expressible as follows:
  \begin{align*}
    & \exists \exprivileged \Rightarrow \exists \exprivileged \ldotp
      \hasvalue(\mathit{false}) \lor \\
    & \quad \exists^{\leq 1} {\exinvited^{-}}^{*} \ldotp \big( \exists
      \exprivileged \ldotp \hasvalue(\mathit{true}) \land \exists^{\leq 0}
      \exinvited^{-}  \big) \,.
  \end{align*}
\end{example}

\section{ShEx on common graphs}
\label{sec:shex}

While SHACL is conceptually the simplest of the three languages, ShEx lies at the opposite end of the spectrum. It is an intricate, nested combination of a simple logic for shapes and a powerful formalism (triple expressions) for generating the allowed neighbourhoods. In this work we focus on non-recursive ShEx, where shapes and triple expressions can be nested multiple times, but cannot be recursive.
This allows us to simplify the abstraction without compromising our primary goal of understanding the common features, as neither PG-Schema nor standard SHACL support such a general recursion mechanism.
The abstraction of ShEx over common graphs is based on the treatment of ShEx on RDF triples~\cite{BGP17}. Deviations from standard ShEx are discussed in Appendix~\ref{app:shex-appendix}.

\begin{definition}[shapes and triple expressions]
\label{def:shex-syntax}
ShEx \emph{shapes} $\se$ and
\emph{closed triple expressions} $e$
are defined by the  grammar
\begin{align*}
\se  \gDef & \shexhasvalue(c)
    \gMid
    \ \shextest(\text{$\vtype$})
    \gMid\! \shexneigh{\tte \shexeach \ttclosed }
    \gMid\! \shexneigh{\tte \shexeach \ttopen }
    \gMid\! \se \land \se
    \gMid\! \se \lor \se
    \gMid\! \lnot \se
    \gEnd \\
\tte \gDef &
    \ \varepsilon
    \ \ \gMid \ \
    q.\se
    \ \ \gMid \ \
    \shexinverse q.\se
    \ \ \gMid \ \
    \tte \shexeach \tte
    \ \ \gMid \ \
    \tte \shexone \tte
    \ \ \gMid \ \
    \tte^{*}
  \gEnd \\
  \ttclosed \gDef &\ (\shexneginv{R})^{*} \gEnd \\
\ttopen \gDef &\  (\shexneginv{R})^{*} \shexeach (\shexneg{Q})^{*}  \gEnd
\end{align*}
where $c \in\Values$,
$\vtype\in\ValueTypes$, $q \in \Predicates \cup \Keys$, and $R,Q\subseteq_{\mathit{fin}}\Predicates \cup \Keys$. We refer to expressions derived from $\tte \shexeach \ttclosed$ and $\tte \shexeach \ttopen$ as \emph{half-open} and \emph{open triple expressions}, respectively.
\end{definition}

The notion of satisfaction for ShEx shapes and the semantics of triple expressions are defined by mutual recursion in Table~\ref{tab:semantics-shex-shape-expr} and Table~\ref{tab:semantics-shex-triple-expr-denotational}.
Triple expressions are used to specify neighbourhoods of nodes and values. They require to consider incoming and outgoing edges separately. For this purpose we decorate incoming edges with ${}^{-}$. Formally, we introduce a fresh predicate $p^{-}$ for each $p\in\Predicates$ and a fresh key $k^{-}$ for each $k\in\Keys$. We  let $\Predicates^{-}= \left\{p^{-} \mid p\in \Predicates\right\}$,
$\Keys^{-}= \left\{k^{-} \mid k\in \Keys\right\}$,
$\Triples^{-} = \Nodes \times \Predicates^{-} \times \Nodes \cup \Values \times \Keys^{-} \times \Nodes$, and define $\neigh^{\pm}_\graph(v) \subseteq \Triples \cup\Triples^{-}$ as
\[ \left \{ (v, p, v') \mid (v, p, v') \in \graph\right\} \cup \left\{(v, p^{-}, v') \mid (v', p, v) \in \graph \right\}.\] Compared to $\neigh_\graph(v)$, apart from flipping the incoming edges and marking them with ${}^{-}$, we also represent each loop $(v,p,v)$ twice: once as an outgoing edge $(v,p,v)$ and once as an incoming edge $(v,p^{-},v)$.
In Table~\ref{tab:semantics-shex-triple-expr-denotational}, we treat $\shexneg{Q}$ and  $\shexneginv{R}$ as triple expressions. So, the rule for $e^{*}$ gives semantics to $(\shexneg{Q})^{*}$ and  $(\shexneginv{R})^{*}$, and the rule for $e_1\shexeach e_2$ gives semantics to open and half-open triple expressions. In Table~\ref{tab:semantics-shex-shape-expr}, $f$ is an  open or half-open triple expression.

Closed triple expressions $\tte$ define neighbourhoods that use only a finite number of predicates and keys (also called \emph{closed} in ShEx terminology) and cannot be directly used in shape expressions.
Half-open triple expressions $\te\shexeach (\shexneginv{R})^{*}$ allow any \emph{incoming} triples whose predicate or key is not in  $R$. Open triple expressions $\te\shexeach (\shexneginv{R})^{*} \shexeach (\shexneg{Q})^{*}$ additionally allow any \emph{outgoing} triples whose predicate or key is not in $Q$.
Let $\shexallte = \varepsilon \shexeach (\shexneginv{\emptyset})^{*} \shexeach {(\shexneg{\emptyset})}^{*}$. Then $\shexallte$ describes all possible neighbourhoods, and $\shextop$ is satisfied in every node and in every value of every graph.

\begin{example}
The ShEx shape $\shexneigh{p.\varphi_1 \shexeach p.\varphi_2;\shexallte}$ specifies nodes with at least two different $p$-successors, one satisfying $\varphi_1$ and one satisfying $\varphi_2$. Note that this is different from SHACL shape $\exists p. \varphi_1 \land \exists p.\varphi_2$ which says that the node has a $p$-successor satisfying $\varphi_1$ and a $p$-successor satisfying $\varphi_2$, but they might not be different.
\end{example}

\begin{example}
    Assume that integers and strings are represented by $\mathbbm{int},\mathbbm{str}\in \ValueTypes$.
    The ShEx shape
    \[ \shexneigh{\exemail.\shextest(\mathbbm{str}) \,\shexeach\, (\excard.\shextest(\mathbbm{int}) \,\shexone\, \varepsilon) \,\shexeach\, (\shexneginv{\emptyset})^{*}}
    \]
    specifies nodes with an $\exemail$ property with a string value, an optional $\excard$ property with an integer value, arbitrary incoming edges, and no other properties or outgoing edges.
    To allow additional properties and outgoing edges, we replace $ (\shexneginv{\emptyset})^{*}$ with $\shexallte$.
    The modified shape can be rewritten using $\land$ as
    \[ \shexneigh{\exemail.\shextest(\mathbbm{str}) \,\shexeach\,\shexallte} \land \shexneigh{(\excard.\shextest(\mathbbm{int}) \,\shexone\, \varepsilon) \,\shexeach\, \shexallte}
    \] but the original shape cannot be rewritten in a similar way.
\end{example}


\begin{definition}[ShEx Selectors]
\label{def:shex-selector}
A ShEx selector is a ShEx shape  of a restricted form, defined by the grammar
\begin{align*}
\shexsel \gDef & \shexhasvalue(c)
    \gMid \shexneigh{q.\shexhasvalue(c) \shexeach \shexallte}
    \gMid 
    \shexneigh{q.\shextop \shexeach \shexallte}
    \gMid \shexneigh{\shexinverse q.\shextop \shexeach \shexallte}
    \gEnd
\end{align*}
where $q \in \Predicates \cup \Keys$ and
$c \in \Values$.
\end{definition}

Following Section~\ref{ssec:shapes}, a \emph{ShEx schema} $\shexschema$ is a set of pairs of the form $(\shexsel, \se)$ where $\se$ is a ShEx shape and $\shexsel$ is a ShEx selector.

\begin{table}
  \caption{Satisfaction of ShEx shapes.}
  \label{tab:semantics-shex-shape-expr}
  \centering
  \begin{tabular}{cl}
    \toprule
    $\se$ & $\graph, v \models \se\ $ for $v\in\Nodes\cup\Values$ \\
    \midrule
    $\shexhasvalue(c)$ & $v = c$ \\
    $\shextest(\vtype)$ & $v \in \sem{\vtype}$ \\
    $\shexneigh{f}$ & $\neigh^\pm_\graph(v) \in \sem{f}_v^\graph$ \\
    $\se_1 \land \se_2$ & $\graph, v \models \se_1$ and $\graph, v \models \se_2$ \\
    $\se_1 \lor \se_2$ & $\graph, v \models \se_1$ or $\graph, v \models \se_2$ \\
    $\lnot \se$ & not $\graph, v \models \se$ \\
    \bottomrule
  \end{tabular}
\end{table}

\begin{table}
  \caption{Semantics of triple expressions.}
  \label{tab:semantics-shex-triple-expr-denotational}
  \centering
  \begin{tabular}{cl}
    \toprule
    $\te$ & $\sem{\te}_v^\graph \subseteq 2^{\Triples \cup \Triples^{-}}$ \\[2pt]
    \midrule
    $\varepsilon$ & $\{\emptyset\}$ \\[2pt]
    $q.\se$ & $\big\{\{(v, q, v')\} \subseteq \Triples \ \big|\  \graph, v' \models \varphi\big\}$ \\[2pt]
    $\shexinverse q.\se$ & $\big\{\{(v, \shexinverse q, v')\} \subseteq \shexinverse\Triples\ \big|\ \graph, v' \models \varphi\big\}$  \\[2pt]
    $\te_1 \shexeach \te_2$ & $\left\{ T_1 \cup T_2 \ \middle|\  T_1\in \sem{\te_1}_v^\graph\,,\  T_2\in \sem{\te_2}_v^\graph\,,\ T_1\cap T_2 = \emptyset \right\}$ \\[2pt]
    $\te_1 \shexone \te_2$ & $\sem{\te_1}_v^\graph \cup \sem{\te_2}_v^\graph$ \\[2pt]
    $\te^{*}$
    & $ \{\emptyset\} \cup \bigcup_{n=1}^\infty \left \{\, T_1 \cup \dots \cup T_n \ \bigg| \begin{array}{l}
    T_1, \dots, T_n \in \sem{\te}_v^\graph \text{ and } \\
    T_i\cap T_j = \emptyset \text{ for all } i\neq j
    \end{array} \!
    \right\}$ \\[2pt]
    $\shexneg{Q}$ & $\big\{ \{(v, q , v')\} \subseteq  \Triples\ \big|\  q\notin Q\big\}$ \\[2pt]
    $\shexneginv{R}$ & $\big\{ \{(v, \shexinverse q , v')\} \subseteq  \shexinverse\Triples\ \big|\  q\notin R\big\}$
    \\
    \bottomrule
  \end{tabular}
\end{table}

In what follows, for a positive integer $n$, we write $\te^n$ for $ \te \shexeach \ldots \shexeach \te$ where  $\te$ is repeated $n$ times,
$\te^{\leq n}$ for $\varepsilon \shexone \te^{1} \shexone \ldots \shexone \te^{n}$, and $\te^{\geq n}$ for $\te^{n} \shexeach \te^{*}$.
For a closed triple expression $\tte$, we let $\shexneighopen{\tte} = \shexneigh{\tte \shexeach (\shexneginv{R})^{*} \shexeach (\shexneg{Q})^{*}}$ where $Q$ is the set of predicates and keys that appear \emph{directly} in $\tte$ (as opposed to appearing in $\varphi$ for a sub-expression $q.\se$ of $\tte$) and $R$ is the set of predicates and keys whose inversions appear directly in $\tte$. For instance, if $\tte = p.\shexneigh{q.\shexneigh{\top}\shexeach\shexinverse{p}.\shexneigh{\top}}$, then $Q = \{p\}$ and $R=\emptyset$.

\begin{example}
Let us now see how the concrete constraints from Example~\ref{ex:constraint-desc} can be handled in ShEx.
\begin{align*}
& \shexneigh{
\shexinverse{\excard}.\shextop \shexeach \shexallte} \Rightarrow   \test(\mathbbm{int})  & \mbox{(C1)} \\
& \shexneigh{\exowns.\shextop \shexeach \shexallte} \Rightarrow \shexneigh{ \exemail.\shextop
 \shexeach \shexallte} & \mbox{(C2)} \\
&    \shexneigh{\shexinverse{\exemail}.\shextop \shexeach \shexallte} \Rightarrow \shexneighopen{ (\exemail^{-}.\shextop)^{\leq 1}   } & \mbox{(C3)} \\
& \shexneigh{\excard . \shextop \shexeach \shexallte}  \Rightarrow   \shexneighopen{\exprivileged.\neg \shexhasvalue(\mathit{true})} \lor  &  \\
& \qquad \shexneighopen{(\exaccess^{-}.\shexneighopen{\exprivileged.\shexhasvalue(\mathit{true})})^{*} }  & \mbox{(C4)} \\
&  \shexneigh{\exemail.\shextop \shexeach \shexallte}  \Rightarrow \shexneighopen{(\exaccess.\shextop)^{\leq 5}} & \mbox{(C5)}
\end{align*}
\end{example}

We next show a more complex example, which illustrates the power of ShEx that is not readily available in SHACL or PG-Schema.

\begin{example} \label{ex:SheXCounting}
    Suppose that we want to express the following constraint on each user who owns an account: the number of accounts to which the user has access is greater or equal to the number of accounts that the user owns. We can do this in ShEx as follows:
    \begin{flalign*}
    & \shexneigh{\exowns.\shextop \shexeach \shexallte} \Rightarrow  \\ & \shexneighopen{ (\exaccess.\shextop )^{*}\shexeach (\exowns.\shextop \shexeach \exaccess.\shextop)^{*}}
    \end{flalign*}

    Similarly to the above (yet more abstractly) consider the following requirement:  for the node $c$, the number of outgoing $p$-edges  is \emph{equal} to the number of outgoing $q$-edges. 
This can be expressed in ShEx using $\shexhasvalue(c)\Rightarrow \shexneighopen{  (p.\shextop \shexeach q.\shextop)^{*}}$ but  cannot be expressed in SHACL (see Appendix \ref{app:indistinguishabilitySHACL})
   \end{example}

Finally, let us see why ShEx and SHACL count differently.

\begin{example}
\label{ex:shex-counts-edges}
    The following SHACL schema ensures that from every node with an outgoing $\exaccess$-edge, exactly two nodes are accessible via a $\exaccess$-edge or an $\exowns$-edge:
    \[
    \exists \exaccess \Rightarrow \exists^{= 2} (\exaccess \cup
    \exowns).\top
    \]
    Here $\exists^{=n} \pexpr.\varphi$ is a shorthand for $\exists^{\leq n} \pexpr.\varphi \land \exists^{\geq n} \pexpr.\varphi$.
    For instance, in Figure~\ref{fig:example-shex-counts-edges}, the graph on the right is valid, whereas the one on the left is not.
    The same constraint cannot be expressed in ShEx because ShEx cannot distinguish these two graphs (see Appendix~\ref{app:indistinguishabilityShEx}).
    The reason is that ShEx triple expressions count triples adjacent to a node, whereas SHACL and PG-Schema count nodes on the opposite end of such triples.
    This makes counting edges simpler in ShEx: the ShEx shape $\shexneigh{(p. \shextop \shexone q.\shextop)^{2} \shexeach (\shexneginv{\emptyset})^{*}}$ allows exactly two outgoing edges labelled $p$ or $q$. In SHACL this is written as $(\exists^{=2} p. \top \land \exists^{=0} q.\top) \lor (\exists^{=2} q.\top \land \exists^{=0} p.\top) \lor (\exists^{=1} p.\top \wedge \exists^{=1} q.\top)$.
\end{example}

\begin{figure}[t]
\resizebox{.9\linewidth}{!}{
  \includegraphics{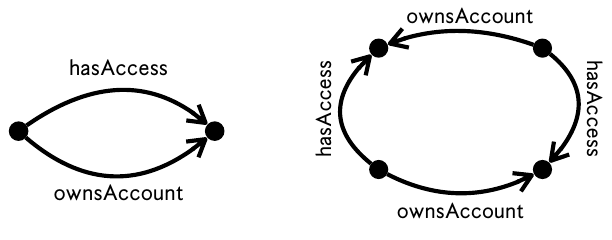}
}
\Description{A diagram showing two graphs indistinguishable by ShEx}
\caption{Two graphs indistinguishable by ShEx}
\label{fig:example-shex-counts-edges}
\end{figure}


\section{Shape-based PG-Schema}
\label{sec:pgschema-simplified}

Shape-based PG-Schema is a non-recursive combination of a logic and two generative formalisms. It uses path expressions to specify paths (as in SHACL), and \emph{content types} to specify node contents. Both path expressions and content types are then used in formulas defining shapes.
Content types in PG-Schema play a role similar to triple expressions in ShEx, but they are only used for properties. Because all properties of a node must have different keys, they are much simpler than triple expressions (in fact, they can be translated into a fragment of SHACL). Unlike for SHACL and ShEx, the abstraction of shape-based PG-Schema departs significantly from the original design. Original PG-Schema uses queries written in an external query language,  which is left unspecified aside from some basic assumptions about the expressive power. Here we use a specific query language (PG-path expressions). Importantly, up to the choice of the query language, the abstraction we present here faithfully captures the expressive power of the original PG-Schema. A detailed comparison can be found in Appendix~\ref{sec:standard-pg-schema}.

\begin{definition}[Content type]
\label{def:contentType}
A \emph{content type} is an expression $\ntype$ of the form defined by the grammar
 $$\ntype \gDef \top \ \gMid \ \closedRT{}\  \gMid\  \closedRT{k : \vtype} \ \gMid\   \ntype \tAnd \ntype  \ \gMid \  \ntype \tOr \ntype  \gEnd$$
where $k \in \Keys$ and $\vtype \in \ValueTypes$. 
\end{definition}

Recall that $\Records$ is the set of all records (finite-domain partial functions $r : \Keys \pto \Values$). We write $\emptyRec$ for the empty record. 
The semantics of content types is defined in Table~\ref{tab:semPG-content-types}. Note that $\sem{\ntype}$ is independent from $\graph$ and can be infinite.

\begin{table}
  \caption{Semantics of content types.}
  \label{tab:semPG-content-types}  
  \centering
  \begin{tabular}{cl}
    \toprule
    $\ntype$ & $\sem{\ntype} \subseteq \Records$ \\
    \midrule    
     $\sem{\top}$ & $\Records$ \\[2pt]
     $\sem{\closedRT{}}$ & $\{ \emptyRec \}$\\[2pt]
     $\sem{\closedRT{k : \vtype}}$ & $\big\{ \{(k, w)\} \ \big|\ w \in \sem{\vtype} \big\}$\\[2pt]
    $\sem{\,\ntype_1 \tAnd \ntype_2\,}$ & $\{ (r_1 \cup r_2) \in \Records \mid r_1 \in \sem{\ntype_1} \wedge r_2  \in \sem{\ntype_2} \}$\\[2pt]
    $\sem{\,\ntype_1 \tOr \ntype_2\,}$ & $\sem{\ntype_1} \cup \sem{\ntype_2}$\\[2pt]
    \bottomrule
  \end{tabular}
\end{table}

\begin{example}
We assume integers and strings are represented via $\mathbbm{int},\mathbbm{str}\in \ValueTypes$.  Suppose we want to create a content type for nodes that have a string value for the $\exemail$ key and \emph{optionally} have an integer value for the $\excard$ key. No other key-value pairs are allowed. We should then use $\closedRT{\exemail : \mathbbm{str}} \tAnd (\closedRT{\excard : \mathbbm{int}} \tOr \closedRT{})  $.
\end{example}


\begin{definition}[PG-path expressions] 
\label{def:pgpaths-syntax}
A PG-path expression is an expression $\pexpr$ of the form defined by the  grammar
\begin{align*} 
& \pexpr \gDef  \ppexpr \gMid \ppexpr \cdot k \gMid k^{-} \cdot \ppexpr \gMid k^{-} \cdot \ppexpr \cdot  k' \gEnd \\
& \ppexpr \gDef \keyIsVal{k}{c} \gMid \neg \keyIsVal{k}{c} \gMid \ntype \gMid \lnot \ntype \gMid p \gMid \lnot P \gMid  
{\ppexpr}^{-} \gMid \ppexpr \cdot \ppexpr \gMid \ppexpr \cup \ppexpr \gMid {\ppexpr}^{*} \gEnd
\end{align*}
where $k,k' \in \Keys$, $c \in \Values$, $\ntype$ is a content type, $p \in \Predicates$, and $P \subseteq_{\mathit{fin}} \Predicates$. We use $k$, $k^{-}$, and $k^{-}\cdot k'$ as short-hands for PG-path expressions $\top\cdot k$, $k^{-}\cdot \top$, and $k^{-}\cdot \top\cdot  k'$, respectively. 
\end{definition}


Unlike in SHACL, PG-path expressions cannot navigate freely through values. In the property graph world, this would correspond to a join, which is a costly operation. Indeed, existing query languages for property graphs do not allow joins under ${}^{*}$. However, PG-path expressions can start in a value and finish in a value. This leads to \emph{node-to-node}, \emph{node-to-value}, \emph{value-to-node}, and \emph{value-to-value} PG-path expressions, reflected in the four cases in the first rule of the grammar.

The semantics of PG-path expression $\pexpr$ for graph $\graph$ is a binary relation over
$\nodes(\graph) \cup \values(\graph)$, defined in Table~\ref{tab:semPGtypes}. In the table, $k$ is treated as any other subexpressions, eventhough it can only be used at the end of a PG-path expression, or in the beginning as $k^{-}$.
Notice that $\lnot \ntype$ matches nodes whose content is not of type $\ntype$,  
$\lnot P$ matches edges with a label that is not in $P$ (in particular, $\lnot\emptyset$ matches all edges). 
Also, 
$\gsem{\pexpr}$ is always a subset of $\Nodes \times \Nodes$, $\Nodes \times \Values$, $\Values \times \Nodes$, or $\Values \times \Values$, corresponding to the four kinds of PG-path expressions discussed above.


\begin{table}[tb]
  \caption{Semantics of PG-path expressions.}
  \label{tab:semPGtypes}  
  \centering
  \begin{tabular}{cl}
    \toprule
    $\pexpr$ & $\gsem{\pexpr}\subseteq (\Nodes\cup\Values)\times (\Nodes\cup\Values)\ $  for  $\graph = (E, \rho)$ \\[2pt]
    \midrule    
    $\keyIsVal{k}{c}$ & $\left\{ (u, u) \mid u \in \nodes(\graph) \land (k,c) \in \rho(u) \right\}$ \\[2pt]
        $\neg \keyIsVal{k}{c}$ & $\left\{ (u, u) \mid u \in \nodes(\graph) \land (k,c) \notin \rho(u) \right\}$ \\[2pt]
    $\ntype$ & $\left\{ (u, u) \mid u\in\nodes(\graph)\land \rho(u) \in \sem{\ntype} \right\}$ \\[2pt]
    $\lnot \ntype$ & $\left\{ (u, u) \mid u \in \nodes(\graph)\land \rho(u)\notin \sem{\ntype} \right\}$ \\[2pt]
    $k$ & $\{(u,w)\mid \rho(u,k)=w\}$ \\[2pt]
    $p$ & $\left\{ (u, v) \mid (u, p, v) \in E \right\}$  \\[2pt]
    $\lnot P$ & $\left\{ (u, v) \mid \exists p : (u, p, v) \in E \wedge p \notin P \right\}$ \\[2pt]
    $\pathExpr^{-}$ & $\left\{(u,v)\mid (v,u) \in \iexpr{\pathExpr}{\graph}\right\}$ \\[2pt]
    $\pathExpr \cdot \pathExpr'$ & $\left\{(u,v) \mid \exists w: (u,w)\in\iexpr{\pathExpr}{\graph}\land (w,v)\in\iexpr{\pathExpr'}{\graph}\right\}$ \\[2pt]
    $\pathExpr\cup \pathExpr'$ & $\iexpr{\pathExpr}{\graph}\cup\iexpr{\pathExpr'}{\graph}$\\[2pt]
   $\pathExpr^{*}$ & $ \{(u,u)\mid u\in\nodes(\graph)\}\cup \iexpr{\pathExpr}{\graph}  \cup \iexpr{\pathExpr \cdot \pathExpr}{\graph} \cup \ldots $ \\
    \bottomrule
  \end{tabular}
\end{table}

\begin{definition}[PG-Shapes]
A PG-Shape is an expression $\varphi$ defined by the following grammar:
\[ 
\varphi \gDef \exists ^{\leq n} \, \pexpr \gMid \exists^{\geq n} \,\pexpr \gMid \varphi \land \varphi \gEnd
\]
where $\pexpr$ is a PG-path expression. We use $\exists$ and $\nexists$ as short-hands for $\exists^{\geq 1}$ and $\exists^{\leq 0}$.
\end{definition}

The semantics of PG-shapes  is defined in Table~\ref{tab:semPGshapes}. We say $v\in\Nodes\cup\Values$ \emph{satisfies} a PG-shape $\varphi$ in a graph $\graph$ if $\graph, v \models \varphi$. Every PG-shape is satisfied by nodes only or by values only. 

\begin{table}[tb]
  \caption{Satisfaction of PG-shapes}
  \label{tab:semPGshapes}  
  \centering
  \begin{tabular}{cl}
    \toprule
    $\varphi$ & $\graph, v \models\varphi\ $ for  $v\in\Nodes\cup\Values$\\
    \midrule   
    $\exists^{\leq n} \, \pexpr$ &
    $\#\left\{v' \mid (v, v') \in \gsem{\pexpr}\right\} \leq n$  \\[2pt]
    $\exists^{\geq n} \, \pexpr$ &
    $\#\left\{v' \mid (v, v') \in \gsem{\pexpr}\right\} \geq n$  \\[1pt]
    $\varphi_1 \land \varphi_2$ &
    $\graph, v \models\varphi_1$ and $\graph, v \models\varphi_2$ \\
    \bottomrule
  \end{tabular}
\end{table}


\begin{definition}[PG-Selectors]
A \emph{PG-selector} is a PG-shape of the form $\exists\, \pexpr$. 
\end{definition}

A \emph{PG-Schema} $\schema$ is 
a finite set of pairs $(\textit{sel}, \varphi)$ where $\textit{sel}$ is a PG-selector and $\varphi$ is a PG-shape.
The semantics of PG-Schemas is defned just like in Section ~\ref{ssec:shapes}.

\begin{example}\label{ex:sharedExamplesPGS} The constraints (C1-C5) from Example~\ref{ex:constraint-desc} can be handled in PG-Schema as follows: 
\begin{align*}
& \exists \excard \Rightarrow  \exists \big ( \closedRT{\excard : \mathbbm{int}}\tAnd \top \big )  & \mbox{(C1)} \\
&  \exists \exowns\Rightarrow \exists \exemail  & \mbox{(C2)} \\
&  \exists\exemail^{-}  \Rightarrow \exists^{\leq 1} \exemail^{-} & \mbox{(C3)} \\
& \exists\, (\{ \excard : \any  \} \tAnd \top) \cdot  \{\exprivileged :\mathit{true} \} \Rightarrow  & \\ &   \qquad  
\nexists \,\exaccess^{-} \cdot \neg \{\exprivileged :\mathit{true}\} & \mbox{(C4)} \\
& \exists \exemail \Rightarrow \exists ^{\leq 5} \, \exaccess & \mbox{(C5)} 
\end{align*}
Notice that in rule (C1), we indeed need $\exists \excard,$ rather than $\exists \excard^{-}$, because there is no PG-Shape to state that the selected value is of type $\mathbbm{int}$, and so we formulate C1 as a statement about nodes.
\end{example}


A characteristic feature of PG-Schema, revealing its database provenience, is that it can close the whole graph by imposing restrictions on all nodes.

\begin{example}
\label{ex:closedgraph}
    Given a common graph such as the one in~\Cref{fig:common-graph}, we might want to express that each node has a key $\exprivileged$ with a boolean value and  either a key $\excard$ with an integer value or a key $\exemail$ with a string value, and no other keys are allowed. In PG-Schema this  can be expressed as follows:
    \[ \exists \top \Rightarrow \exists \{\exprivileged: \mathbbm{bool}\} \tAnd\big(\{ \excard : \mathbbm{int} \} \tOr \{ \exemail : \mathbbm{str} \}\big) \,.\]
    We can also forbid  any predicates except those mentioned in the running example:
    \[ \exists \top \Rightarrow \not\exists \lnot \{\exowns,\exaccess,\exinvited\} \,.\]
\end{example}

\section{Common Graph Schema Language}
\label{sec:core}

We now present the Common Graph Schema Language (CoGSL), which combines the core functionalities shared by SHACL, ShEx, and PG-Schema (over common graphs). 

Let us begin by examining the restrictions that need to be imposed. We shall refer to shapes and selectors used in CoGSL as \emph{common shapes} and \emph{common selectors}. 
Common shapes cannot be closed under disjunction and negation, because PG-Schema shapes are purely conjunctive. For the same reason common shapes cannot be nested. 
Kleene star ${}^{*}$ cannot be allowed in path expressions because we consider ShEx without recursion. 
%
%
Supporting path expressions traversing more than one edge under counting quantifiers is impossible as this is not expressible in ShEx. Supporting disjunctions of labels of the form $p_1 \cup p_2$ is also impossible, due to a mismatch in the approach to counting: while SHACL and PG-Schema count nodes and values, ShEx counts triples, as illustrated in \Cref{ex:shex-counts-edges}. 

Closed content types and $\lnot P$ cannot be used freely, because neither SHACL nor ShEx are capable of closing only properties or only predicate edges: both must be closed at the same time. 

Finally, selectors are restricted because SHACL and ShEx do not support $\top$ as a selector; that is, one cannot say that each node (or value) in the graph satisfies a given shape. This means that SHACL and ShEx schemas always allow a disconnected part of the graph that uses only predicates and keys not mentioned in the schema, whereas PG-Schema can disallow it (see Example~\ref{ex:closedgraph}). 

Putting these restrictions together we obtain the Common Graph Schema Language. We define it below as a fragment of PG-Schema. 

\begin{definition}[common shape]
\label{def:simple-shape}
    A \emph{common shape} $\varphi$ is an expression given by the grammar
\begin{align*}
\varphi  \gDef  & 
 \exists\,\pexpr
\gMid \exists^{\leq n} \, \pexpr_1
\gMid \exists^{\geq n} \, \pexpr_1 \gMid 
\exists\, \ntype \land \not\exists\, \lnot P
\gMid \varphi \land \varphi \gEnd\\
\ntype \gDef &\closedRT{}\  \gMid\  \closedRT{k : \vtype} \ \gMid\   \ntype \tAnd \ntype  \ \gMid \  \ntype \tOr \ntype  \gEnd\\
\pexpr_0 \gDef & \keyIsVal{k}{c} \gMid 
\lnot \keyIsVal{k}{c} \gMid \ntype\tAnd\top\gMid \lnot (\ntype\tAnd\top) \gMid \pexpr_0 \cdot \pexpr_0 \gEnd\\
\pexpr_1 \gDef &  \pexpr_0  \cdot p \cdot
\pexpr_0 
\gMid  \pexpr_0  \cdot p^{-} \cdot
\pexpr_0  \gMid \pexpr_0\cdot k \gMid k^{-}\cdot \pexpr_0 \gEnd\\
\ppexpr \gDef & \pexpr_0  \gMid p 
\gMid {\ppexpr}^{-} \gMid \ppexpr \cdot \ppexpr 
\gMid \ppexpr \cup \ppexpr \gEnd \\
\pexpr \gDef & \ppexpr \gMid \ppexpr \cdot k \gMid k^{-}\cdot \ppexpr \gMid k^{-}\cdot \ppexpr\cdot k' \gEnd
\end{align*}
where $n \in \mathbb{N}$, $P\subseteq_{\mathit{fin}} \Predicates$, $k,k'\in\Keys$, $c\in\Values$, and $p\in\Predicates$. 
\end{definition}

That is, $\ntype$ is a content type that does not use $\top$ (a \emph{closed} content type),  $\pi_0$ is a PG-path expression that always stays in the same node (a \emph{filter}), $\pi_1$ is a PG-path expression that traverses a single edge or property (forward or backwards), and $\pi$ is a PG-path expression that uses neither ${}^{*}$ nor $\lnot P$. 
Moreover, $\pi_0$, $\pi_1$, and $\pi$ can only use \emph{open} content types; that is, content types of the form $\ntype \tAnd \top$.
The use of $\lnot P$ is limited to closing the neighbourhood of a node (this is the only way PG-Schema can do it).


\begin{definition}[common selector]
\label{def:common-selector}
A \emph{common selector} is a common shape of one of the following forms
\[ 
\exists\, k \,,\;
\exists\, p \cdot \pexpr\,, \;
\exists\, p^{-}\!\cdot \pexpr \,,\;
\exists\, \keyIsVal{k}{c}\cdot \pexpr \,, \;
\exists\, \big(\{k:\vtype\}\tAnd\top\big)\cdot \pexpr \,, \;
\exists\, k^{-}\!\cdot \pexpr\,,\!\!
\]
where $k\in\Keys$,  $p\in\Predicates$, $c\in \Values$, $\vtype\in\ValueTypes$ and $\pexpr = \ppexpr$ or $\pexpr = \ppexpr\cdot k'$ for some PG-path expression $\ppexpr$ generated by the grammar in Definition~\ref{def:simple-shape} and some $k'\in\Keys$.
\end{definition}

That is, a common selector is a common shape of the form $\exists\, \pexpr$ such that  the PG-path expression $\pexpr$ requires the focus node or value to occur in a triple with a specified predicate or key.

A \emph{common schema} is a finite set of  pairs $(\textit{sel}, \varphi)$ where $\textit{sel}$ is a common selector and $\varphi$ is a common shape. The semantics is inherited from PG-Schema. 

We note that we showed that the constraints (C1)-(C5) from our running example can be expressed in all three formalisms. Specifically, the PG-Schema representation from \Cref{ex:sharedExamplesPGS} is also a common schema. 

\begin{proposition}
\label{prop:core}
For every common schema there exist equivalent SHACL and ShEx schemas. 
\end{proposition}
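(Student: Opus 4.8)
The plan is to prove \Cref{prop:core} by providing syntax-directed translations from the common schema fragment into SHACL and ShEx, and then arguing semantic equivalence case by case. Since a common schema is a finite set of pairs $(\mathit{sel}, \varphi)$ evaluated pointwise as in \Cref{ssec:shapes}, it suffices to translate each common selector and each common shape separately while preserving the set of nodes/values that satisfy it; equivalence of the whole schema then follows because the validity condition $\graph,v\models\mathit{sel}\Rightarrow\graph,v\models\varphi$ is defined uniformly across all three languages. So the real work is a collection of shape-preservation lemmas, one per syntactic construct in \Cref{def:simple-shape} and \Cref{def:common-selector}.

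First I would handle the building blocks that are shared almost verbatim. Filters $\pexpr_0$ (the expressions $\keyIsVal{k}{c}$, $\lnot\keyIsVal{k}{c}$, open content types $\ntype\tAnd\top$ and their negations) test only the content of the focus node, so they translate directly: in SHACL a filter becomes a conjunction of $\hasvalue$, $\test(\vtype)$, $\closed$-style and existential tests on keys, and in ShEx the same is achieved with $\shexhasvalue$, $\shextest$, and half-open neighbourhood expressions $\shexneigh{\cdots\shexeach\shexallte}$. The path expressions $\pexpr$ built from $\ppexpr$ use neither ${}^{*}$ nor $\lnot P$ (as the remark after \Cref{def:simple-shape} emphasises), so they are finite unions of concatenations of single labels, inverses, and filters; these map onto SHACL path expressions from \Cref{def:shacl-shape} in the obvious way, composing $\cdot$, ${}^{-}$, and $\cup$. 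The single-edge-traversing expressions $\pexpr_1$ are exactly what the counting quantifiers $\exists^{\leq n}$ and $\exists^{\geq n}$ are applied to, which is crucial: because $\pexpr_1$ traverses only one edge or property, the ShEx translation can use a triple expression of the form $(q.\se)^{\leq n}$ or $(q.\se)^{\geq n}$ inside a neighbourhood, and the mismatch between node-counting and triple-counting (\Cref{ex:shex-counts-edges}) does not bite, since a single predicate over distinct endpoints is counted identically by both. For the counting shapes I would translate $\exists^{\geq n}\pexpr_1$ to $\geqn{n}{\pathExpr_1}{\top}$ in SHACL and to the corresponding $\shexneighopen{(q.\se)^{\geq n}}$ in ShEx, and symmetrically for $\exists^{\leq n}$.

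The genuinely delicate case is the closure construct $\exists\,\ntype \land \not\exists\,\lnot P$, which simultaneously closes the properties (via the closed content type $\ntype$) and the predicate edges (via $\not\exists\lnot P$). This is the one place where all three languages must close at the same time, and it is the reason closed content types and $\lnot P$ are restricted in the grammar. I expect this to be the main obstacle: I must show that SHACL's $\closed(Q)$ with $Q=P\cup\keys(\ntype)$ together with the property-type tests captures the same set as the PG-Schema condition, and likewise that a fully closed ShEx neighbourhood $\shexneigh{\tte\shexeach\ttclosed}$ built from $\ntype$ and $P$ does. The care needed is in reconciling the semantics of $\sem{\ntype}$ over records (\Cref{tab:semPG-content-types}) with the triple-level view of SHACL and ShEx, and in checking that closing predicates in $P$ and keys appearing in $\ntype$ agrees exactly with forbidding every triple outside $P\cup\keys(\ntype)$. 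Finally, I would dispatch the common selectors of \Cref{def:common-selector} analogously: each has the form $\exists\,\pexpr$ where $\pexpr$ starts by requiring an incident edge or key, so it falls under the SHACL selector forms $\exists q$, $\exists q^{-}$, $\hasvalue(c)$ (after the leading test) and the ShEx selector forms of \Cref{def:shex-selector}, with the trailing path $\pexpr$ folded into the existential as above. Assembling these per-construct equivalences and invoking the pointwise definition of schema validity completes the proof.
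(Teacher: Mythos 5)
Your overall architecture (one translation lemma per construct of \Cref{def:simple-shape}, assembled via the pointwise schema semantics of \Cref{ssec:shapes}) is the same as the paper's, and your treatment of the closure construct $\exists\,\ntype \land \not\exists\,\lnot P$ and of why counting over single-step paths $\pexpr_1$ is safe matches \Cref{lem:shapes-shacl}. But the proposal has a genuine gap at its very first step: you claim it suffices to translate each common selector ``preserving the set of nodes/values that satisfy it''. That is impossible. SHACL selectors are frozen to the forms $\exists q.\top$, $\exists q^{-}.\top$, $\hasvalue(c)$, and ShEx selectors to the analogous forms of \Cref{def:shex-selector}; a common selector such as $\exists\, p\cdot\pexpr$ with a nontrivial trailing path $\pexpr$ is satisfied by strictly fewer nodes than any of these, so no satisfying-set-preserving translation of the selector exists. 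The paper's proof (\Cref{lem:schemas-shacl}, \Cref{lem:schemas-shex}) instead \emph{weakens} the selector to $\exists p.\top$ and repairs the damage inside the shape: each pair $(\sel,\varphi)$ is translated to $(\sel',\lnot\varphi_{\sel}\lor\hat\varphi)$, where $\varphi_{\sel}$ is a shape equivalent to the original selector. This implication-encoding of selectors is one of the two ``main observations'' stated right after the proposition and illustrated in \Cref{ex:PathSelector}; your closing phrase ``with the trailing path folded into the existential as above'' does not supply it, and without it the translated schema would wrongly constrain every node that has a $p$-edge but does not satisfy the full selector.

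The second gap concerns paths. SHACL path expressions have no node-test construct, so your claim that common PG-path expressions (which interleave filters $\pexpr_0$ with edge steps) ``map onto SHACL path expressions in the obvious way'' is false. The paper's other main observation is that star-free PG-paths must be simulated by \emph{nested shapes}: $\exists\,\pexpr_0\cdot\pexpr$ becomes $\varphi_{\pexpr_0}\land\varphi_{\exists\pexpr}$ and $\exists\,p\cdot\pexpr$ becomes $\exists p.\,\varphi_{\exists\pexpr}$ (\Cref{lem:paths-shacl}); counting likewise splits the filters off, so $\exists^{\geq n}\,\pexpr_0\cdot p\cdot\pexpr_0'$ becomes $\varphi_{\pexpr_0}\land\exists^{\geq n}p.\,\varphi_{\pexpr_0'}$, not $\exists^{\geq n}\pexpr_1.\top$, which is not even well-formed SHACL since $\pexpr_1$ is not a SHACL path expression. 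Relatedly, your ShEx counting shapes $\shexneighopen{(q.\se)^{\geq n}}$ and $\shexneighopen{(q.\se)^{\leq n}}$ are incorrect: the operator $\shexneighopen{\cdot}$ closes the predicate $q$, so these shapes additionally forbid any outgoing $q$-triple whose endpoint violates the filter, which the common shape allows. The paper instead uses the fully open $\shexneigh{(p.\varphi_{\pexpr_0'})^{n}\shexeach\shexallte}$ for lower bounds and the negation of the same pattern with $n+1$ repetitions for upper bounds (\Cref{lem:shapes-shex}). In short, your decomposition is right, but the two mechanisms that make it go through---nested-shape simulation of paths and implication-encoding of selectors---are missing or wrong, and they are precisely the substance of the paper's proof.
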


The translation is relatively straightforward (see Appendix~\ref{sec:appendix-core}). The two main observations are that star-free PG-path expressions can be simulated by nested SHACL and ShEx shapes, and that closure of SHACL and ShEx shapes under Boolean connectives  allows encoding complex selectors in the shape (as the antecedent of an implication). We illustrate the latter in~\Cref{ex:PathSelector}.

\begin{example}[Complex paths in selectors] \label{ex:PathSelector}
We want to express that all users who have invited a user who has invited someone (so there is a path following two $\exinvited$ edges) must have a key $\exemail$ of type $\Exvt{str}$.
In PG-schema we express this as:
\[ \exists \exinvited \cdot \exinvited \Rightarrow \{ \exemail : \Exvt{str}  \} \tAnd \top     \]

At first glance, it seems unclear how to express this in the other formalisms, since they do not permit paths in the selector. However, we can see that paths in selectors can be encoded into the shape: 

\noindent In SHACL, using the same example, we do this by 
\begin{align*}
\exists \exinvited  \Rightarrow &\  \neg (\exists \exinvited \cdot \exinvited ) \lor  \exists \exemail . \test(\Exvt{str})      
\end{align*}
And in  ShEx for this example would be: 
\begin{align*}
\shexneigh{\exinvited.\shextop \shexeach \shexallte}  \Rightarrow &\ 
\neg \se_2
\lor \shexneighopen{\exemail.\test(\Exvt{str})}  
\end{align*}
where $\se_2 = \shexneighopen{\;
    \left(
    \exinvited. \se_1
    \right)^{\geq 1}
\;}$ and $\se_1 = \shexneighopen{\,
        \exinvited . \shextop^{\geq 1}
    \,}$.
That is, $\se_1$ is satisfied by nodes that have an outgoing path $\exinvited$, and $\se_2$ by nodes that have an outgoing path $\exinvited\cdot\exinvited$.
For paths of unbounded length, it is not apparent how such a translation would proceed for ShEx schemas in the absence of recursion. 
\end{example}

\section{Related Work}

\paragraph{SHACL literature.} The authoritative source for SHACL is the W3C recommendation~\cite{KK17}.
Further literature on SHACL following its standardisation can be roughly divided into two groups. The first group studies the formal properties and expressiveness of the non-recursive fragment~\cite{BJVdB24}. Notable examples in this category are: the work by Delva et al. on data provenance~\cite{DDJB23}, the work of Pareti et al. on satisfiability and (shape) containment~\cite{PKMN20}, and the work of Leinberger et al. connecting the containment problem to description logics~\cite{LSRLS20}.
The second group of papers is concerned with proposing a suitable semantics for recursive SHACL~\cite{CRS18,CFRS19,ACORSS20,BJ21} or studying the complexity of certain problems for recursive SHACL under a chosen semantics~\cite{PKM22}. First reports on practical applications and use-cases for SHACL include the study of expressivity of property constraints, as well as  mining and extracting constraints in the context of large knowledge graphs such as Wikidata and DBpedia~\cite{FSAP24,RLH23}. Finally, the underlying ideas of SHACL where transposed to the setting of Property Graphs in a formalism called ProGS \cite{ProGS}.

\paragraph{ShEx literature}

ShEx was initially proposed in 2014 as a concise and human-readable language to describe, validate, and transform RDF data~\cite{PGS14}.
Its formal semantics was formally defined in~\cite{SBG15}.
The semantics of ShEx schemas combining recursion and negation was later presented in~\cite{BGP17}.
The current semantic specification of the ShEx language has been published as a W3C Community group report~\cite{PBGK19} and a new language version is currently being defined as part of the IEEE Working group on Shape Expressions\footnote{\url{https://shex.io/shex-next/}}. As for practical applications, ShEx has been applied as a descriptive schema language through the
Wikidata Schemas project\footnote{\url{https://www.wikidata.org/wiki/Wikidata:WikiProject_Schemas}}.
Additional work went into extending ShEx to handle graph models that go beyond
RDF, like WShEx to validate Wikibase graphs~\cite{L22}, ShEx-Star to handle
RDF-Star and PShEx to handle property graphs~\cite{L24}. While these works
extend ShEx to (different types of) property graphs, they do not provide a
common graph data model \todo{nor compare schema languages, as we do} that
allows comparing schema languages, as we do.

\paragraph{PG-Schema literature.}

PG-Schema, as introduced in~\cite{ABDF23}, builds upon an earlier proposal of PG-Keys~\cite{ABDF21} to enhance schema support for property graphs, in the light of limited schema support in existing systems and the current version of the GQL standard~\cite{GQL}. It is currently being used in the GQL standardization process as a basis for a standard for property graph schemas.

\paragraph{Comparing RDF schema formalisms.}

In Chapter 7 of \cite{GPBK17}, the authors compare common features and differences between ShEx and SHACL and~\cite{GGFE19} presents a simplified language called S, which captures the essence of ShEx and SHACL. Tomaszuk~\cite{T17} analyzes advances in RDF validation, highlighting key requirements for validation languages and comparing the strengths and weaknesses of various approaches.

\paragraph{Interoperability between schema graph formalisms.}

Interoperability between schema graph formalisms like RDF and Property Graphs remains challenging due to differences in structure and semantics. RDF focuses on triple-based modeling with formal semantics, while Property Graphs allow flexible annotation of relationships with properties. RDF-star \cite{H14} and RDF 1.2 \cite{KCHS24} extend RDF 1.1 by enabling statements about triples, aligning more closely with LPG: for instance, RDF-star allows triples to function as subjects or objects, similar to how LPG edges carry properties.

By adopting \emph{named graphs}~\cite{CARROLL2005247}, already RDF 1.1 provided a mechanism for making statements about (sub-)graphs. Likewise, different \emph{reification} mechanisms have been proposed in the literature for RDF in order to ``embed'' meta-statements about triples (and graphs) in ``vanilla'' RDF graphs, ranging from the relatively verbose original W3C reification vocabulary as part of the original RDF specification, to more subtle approaches such as singleton property reification~\cite{NBS14}, which is close to the unique identifiers used for edges in most LPG models. Custom reification models are used, for instance, in Wikidata, to map Wikibase's property graph schema to RDF, cf. e.g.~\cite{FSAP24,HHK15}.
There is also work on  schema-independent and schema-dependent methods for transforming RDF into Property Graphs, providing formal foundations for preserving information and semantics \cite{ATT20}.
All these approaches, in principle, facilitate general or specific mappings between RDF and LPGs, which is what the present paper tries to avoid by focusing on a common submodel.

There have been several prior proposals for uniying graph data models, rather then providing mappings between them.  The OneGraph initiative \cite{LSHBBB23} aims to bridge the different graph data models by promoting a unified graph data model for seamless interaction. Similarly, MilleniumDB's Domain Graph model~\cite{DCRMD23} aims at covering RDF, RDF-star, and property graphs. These works seek a common \emph{supermodel}, aiming to support a both RDF and LPGs via more general solutions. In contrast, we aim at understanding the existing schema languages by studying them over a common submodel of RDF and LPGs.

\paragraph{Schemas for tree-structured data.}

The principle of defining (parts of) schemas as a set of pairs $(sel,\varphi)$ is also used in schema languages for XML. A DTD~\cite{xml} is essentially such a set of pairs in which $sel$ selects nodes with a certain label, and $\varphi$ describes the structure of their children. In XML Schema, the principle was used for defining key constraints (using \emph{selectors} and \emph{fields}) \cite[Section~3.11.1]{xsd}. The equally expressive language BonXai~\cite{MNNS17} is based on writing the entire schema using such rules. Schematron \cite{schematron} is another XML schema language that differs from grammar-based languages by defining patterns of assertions using XPath expressions~\cite{xpath}. It excels in specifying constraints across different branches of a document tree, where traditional schema paradigms often fall short. Schematron's rule-based structure, composed of phases, patterns, rules, and assertions, allows for the validation of documents.

\paragraph{RDF validation}

Last, but not least, it should be noted that the requirement for (constraining) schema languages—besides ontology languages such as OWL and RDF Schema—in the Semantic Web community is much older than the more recent additions of SHACL and ShEx. Earlier proposals in a similar direction include efforts to add constraint readings of Description Logic axioms to OWL, such as OWL Flight \cite{BRP05} or OWL IC \cite{S10}. Another approach is Resource Shapes (ReSh) \cite{R14}, a vocabulary for specifying RDF shapes. The authors of ReSh recognize that RDF terms originate from various vocabularies, and the ReSh shape defines the integrity constraints that RDF graphs are required to satisfy. Similarly, Description Set Profiles (DSP) \cite{N08} and SPARQL Inferencing Notation (SPIN) \cite{KHI11} are notable alternatives. While SHACL, ShEx, and ReSh share declarative, high-level descriptions of RDF graph content, DSP and SPIN offer additional mechanisms for validating and constraining RDF data, each with its own strengths and applications.

\paragraph{Implementations}
Dozens of tools support graph data validation, including ShEx and SHACL. A comprehensive collaborative list of resources is available at:
\url{https://github.com/w3c-cg/awesome-semantic-shapes}.

\section{Conclusions}

We provided a formal and comprehensive comparison of the three most prominent schema languages in the Semantic Web and Graph Database communities: SHACL, ShEx, and PG-Schema. Through painstaking discussions within our working group, we managed to (1) agree on a common data model that captures features of both RDF and Property Graphs and (2) extract, for each of the languages, a core that we mutually agree on, which we define formally. Moreover, the definitions of (the cores of) each of the schema languages on a common formal framework allows readers to maximally leverage their understanding of one schema language in order to understand the others. Furthermore, this common framework allowed us to extract the Common Graph Schema Language, which is a cleanly defined set of functionalities shared by SHACL, ShEx, and PG-Schema. This commonality can serve as a basis for future efforts in integrating or translating between the languages, promoting interoperability in applications that rely on heterogeneous data models. For example, we want to investigate recursive ShEx and more expressive query languages for PG-Schema more deeply.

\begin{section}*{Acknowledgments}

This work was initiated during Dagstuhl Seminar 24102 \emph{Shapes in Graph
Data}.
It was funded by the Austrian Science Fund (FWF) [10.55776/COE12] (Polleres);
ANR project EQUUS ANR-19-CE48-0019, project no.~431183758 by the German Research Foundation (Martens);
ANGLIRU: Applying kNowledge Graphs to research data interoperabiLIty and
ReUsability, code: PID2020-117912RB from the Spanish Research Agency (Labra
Gayo);
European Union's Horizon Europe research and innovation program under Grant
Agreement No 101136244 (TARGET) (Hose and Tomaszuk);
Austrian Science Fund (FWF) and netidee SCIENCE [T1349-N], and the Vienna
Science and Technology Fund (WWTF) [10.47379/ICT2201] (Ahmetaj);
Poland's NCN grant 2018/30/E/ST6/00042 (Murlak);
and FWF stand-alone project P30873 (\v{S}imkus).
F. Mogavero is member of the Gruppo Nazionale Calcolo Scientifico-Istituto
Nazionale di Alta Matematica.

\end{section}


\bibliographystyle{ACM-Reference-Format}

\bibliography{references,ReferencesFM}

\normalsize

\appendix

\section{Distilling the common data model}
\label{sec:appendix-foundations}

\todo[inline]{Remember to add what we promised to the reviewers. @Filip}

In this section we discuss the relationship between common graphs and the standard data models of the three schema formalisms formalisms---RDF and property graphs.

\subsection{Comparison with RDF}
\label{app:sec-foundations-comparison-rdf}

As explained in Section~\ref{sec:prl}, common graphs can be naturally seen as finite sets of triples from 
$\Triples = \left(\Nodes\times\Predicates\times\Nodes\right) \;\cup\; \left(\Nodes\times\Keys\times\Values\right)$, with  $(E,\rho)$ corresponding to 
$E \;\cup\; \{(u,k,v) \mid \rho(u,k) = v\}$. 

Unlike in RDF, a common graph may contain at most one tuple of the form $(u,k,v)$ for each $u\in \Nodes$ and $k\in\Keys$. This reflects the assumption that  properties are single-valued, which is present in the property graph data model. 

In the RDF context, one would assume the following:
\begin{itemize}
\item $\Nodes \subseteq \IRIs \cup \Blanks$,
\item $\Predicates \subseteq \IRIs$, 
\item $\Keys \subseteq \IRIs$,
\item $\Values = \Literals$.
\end{itemize}
However, the common graph data model does not refer to $\IRIs$, $\Blanks$, and $\Literals$ at all, because these are not part of the property graph data model. 

In contrast to the RDF model, but in accordance with the perspective commonly taken in databases, both values and nodes are atomic. For nodes we completely abstract away from the actual representation of their identities. We do not even distinguish between $\IRIs$ and $\Blanks$. An immediate consequence of this is that schemas do not have access to any information about the node other than the triples in which it participates. In particular, they cannot compare nodes with constants. This is a significant restriction with respect to the RDF data model, but it follows immediately from the same assumption made in the property graph data model. On the positive side, this aspect is entirely orthogonal to the main discussion in this paper, so eliminating it from the common data model does not oversimplify the picture. 

For values we take a more subtle approach: we assume a set $\ValueTypes$ of value types, with each $\vtype\in\ValueTypes$ representing a set $\sem{\vtype}\subseteq\Values$. This captures uniformly data types, such as \texttt{integer} or \texttt{string}, and user-defined checks, such as interval bounds for numeric values or regular expressions for strings.
On the other hand, the common graph data model does not include any binary relations over values, such as an order.  

\todo[inline]{Iovka: The section Comparison with Proprety graphs includes a hint on how a common graph could encode a general property graph. Similar encoding could be considered also for RDF: keys \texttt{iri} and \texttt{blank} for node identities, nodes with special key \texttt{value} to represent literal values (which would also lift the constraint of not being able to give to different literal values for the same predicate). We haven't considered such encoding approach in the present paper, but it seems to be a valid question that might be worth investigating.}

\todo[inline]{After discussion, we decided to write a small paragraph about the possibility to encode RDF in common graphs, in the spirit of the similar paragraph for property graphs.}




\subsection{Comparison with property graphs}
\label{sect:PGCGComparison}

Let us recall the standard definition of property graphs \cite{ABDF23}. 

\begin{definition}[Property graph]
A \emph{property graph} is a tuple $(N, E, \pi, \lambda, \rho)$ such that 
\begin{itemize}
\item $N$ is a finite set of nodes;
\item $E$ is a finite set of edges, disjoint from $N$; 
\item $\pi : E \to (N \times N)$ maps edges to their source and target;
\item $\lambda : (N \cup E) \to 2^{\Predicates}$ maps nodes and edges to finite sets of labels;
\item $\rho : (N \cup E)\times \Keys \pto \Values$ is a finite-domain partial function mapping element-key pairs to values.
\end{itemize}
\end{definition}

A common graph $G = (E', \rho')$ can be easily represented as a property graph by letting
\begin{itemize}
    \item $N = \nodes(G)$,
    \item $E = E'$,
    \item $\pi = \{ (e, (v_1, v_2)) \mid e = (v_1, p, v_2) \in E \}$,
    \item $\lambda = \{ (e, \{ p \}) \mid e = (v_1, p, v_2) \in E \} \cup \{ (v, \emptyset) \mid v \in N \}$, and
    \item $\rho = \rho'$.
\end{itemize}

It is possible to characterise exactly the property graphs that are such representations of common graphs. These are the property graphs $(N, E, \pi, \lambda, \rho)$ for which it holds that:
\begin{enumerate}
    \item $\lambda(v) = \emptyset$ for all $v \in N$, and $\lambda(e)$ is a singleton for all $e \in E$,
    \item there cannot be two distinct edges $e_1, e_2 \in E$ such that $\pi(e_1) = \pi(e_2)$ and $\lambda(e_1) = \lambda(e_2)$, and
    \item $\rho(e, k)$ is undefined for all $e \in E$, $k \in \Keys$.
\end{enumerate}

So, common graphs can be interpreted as restricted property graphs: no labels on nodes, single labels on edges, no parallel edges with the same label, and no properties on edges. All these restrictions are direct consequences of the nature of the RDF data model. 

While these restrictions seem severe at a first glance, the resulting data model can actually easily simulate unrestricted property graphs: labels on nodes can be simulated with the presence of corresponding keys, edges can be materialised as nodes if we need properties over edges or parallel edges with the same label. This means not only that common graphs can be used without loss of generality in expressiveness and complexity studies, but also that the corresponding restricted property graphs are flexible enough to be usable in practice, while additionally guaranteeing interoperability with the RDF data model. 

\subsection{Class information}
\todo[inline]{Dominik: it might be helpful to include examples or more detailed explanations on how to use designated predicates and keys to simulate class memberships and hierarchies. Providing these examples would clarify how common graphs can indirectly support class information despite not having direct mechanisms for it.}
The common graph data model does not have direct support for class information. The reason for this is that RDF and property graphs handle class information rather differently. In RDF, both class and instance information is part of the graph data itself:  classes are elements of the graph, subclass-superclass relationships are represented as edges between classes, and membership relationships are represented as edges between elements and classes. In property graphs, the membership of a node in a class is indicated by a label put on the node. A node can belong to many classes, but the only way to say that class $A$ is a subclass of class $B$ is to ensure in the schema that each node with label $A$ also has label $B$. That is, 
\begin{itemize}
\item in property graphs class membership information is available locally in a node, but consistency must be ensured by the schema, 
\item in RDF, obtaining class membership information requires navigating in the graph, but consistency is for free.
\end{itemize}
Clearly, both approaches have their merits, but when passing from one to the other data needs to be translated. This means that we cannot pick one of these approaches for the common data model while keeping it a natural submodel of both RDF and property graphs. Therefore, to reduce the complexity of this study, we do not include any dedicated features for supporting class information in our common data model. 
Note, however, that common graphs can support both these approaches indirectly: designated predicates can be used to represent membership and subclass relationships, and keys with a dummy value can simulate node labels.


\section{Standard SHACL}

\label{app:standard-shacl}

Standard SHACL defines shapes as a conjunction of \emph{constraint components}. The different constructs from our formalization correspond to fundamental building blocks of these constraint components. Next to that, the formalization of SHACL presented in this paper is less expressive than standard SHACL. First, because we define it here for the common data model (which corresponds to a strict subset of RDF, see \Cref{app:sec-foundations-comparison-rdf}), and second, because we want to simplify our narrative: we leave out the comparison of RDF terms using \texttt{sh:lessThan} for this reason. Furthermore, because the common data model omits language tags, the corresponding constraint components from standard SHACL are omitted as well.

Our formalization is closely tied to the ones found in the literature. There, correspondence between the formalization of the literature and standard SHACL has been shown in detail \cite{MJPHD}. This section highlights and discusses some relevant details.

\paragraph{Class targets and constraint component.} As a consequence of the common data model not directly supporting the modelling of classes, some class-based features are not adapted in our formalization. Specifically, there are no selectors (``target declarations'' in standard SHACL) that involve classes. Furthermore, the value type constraint component \texttt{sh:class} is not covered by our formalization.




\paragraph{Closedness.} In standard SHACL syntax, closedness is a property that takes a true or false value. The semantics of closedness is based on a list of predicates that are allowed for a given focus node. This list can be inferred based on the predicates used in property shapes, or this list can be explicitly given using the \texttt{sh:ignoredProperties} keyword. Our formalization effectively adopts the latter approach: $\closed(Q)$ means that the properties mentioned in the set $Q$ are the ignored properties.

\paragraph{Path expressions.} The path expressions used in our formalization deviate from the standard in three obvious ways. First, we make a distinction between `keys' and `predicates'. This is simply a consequence of using our common data model. Second, we leave out some of the immediately available path constructs from standard SHACL: one-or-more paths and zero-or-one path. However, these are expressible using the building blocks of our formalization: one-or-more paths are expressed as $\pi\cdot\pi^{*}$, and zero-or-one paths are expressed as $\pi\cup\id$. Lastly, our path expression allow for writing the identity relation explicitly. This cannot be done in literal standard SHACL syntax, but its addition to the formalization serves to highlight its hidden presence in the language. Writing the identity relation directly in a counting construct, e.g., $\exists^{\geq n}\id.\top$, never adds expressive power. In the case of $n=1$, the shape is always satisfied (and thus equivalent to $\top$), and it is easy to see that for any $n > 1$, it is never satisfied. The situation with complex path expressions in counting constructs is less clear from the outset. However, it has been shown \cite{BJVdB24} (Lemma 3.3), that the only case where $\id$ adds expressiveness is with complex path expressions of the form $\pi\cup\id$. This is exactly the definition of zero-or-one paths and is therefore covered by standard SHACL. Another place where $\id$ can occur in our formalization is in the equality and disjointness constraints, e.g., $\eq(\id,p)$. According to the standard SHACL recommendation, you cannot write this shape. However, in the SHACL Test Suite \cite{LKK24} test \texttt{core/node/equals-001}, the following shape is tested for:
\begin{verbatim}
ex:TestShape
  rdf:type sh:NodeShape ;
  sh:equals ex:property ;
  sh:targetNode ex:ValidResource1 .
\end{verbatim}
on the following data:
\begin{verbatim}
ex:ValidResource1
  ex:property ex:ValidResource1 .
\end{verbatim}
The intended meaning of this test is, in natural language: ``The targetnode \texttt{ex:ValidResource1} has a \texttt{ex:property} self-loop and no other \texttt{ex:property} properties''. Effectively, this is the semantics for our $\eq(\id,p)$ construct. The situation with $\disj(\id,p)$ is similar.

We therefore have an ambiguous situation: the standard description of SHACL does not allow for shapes of the form $\eq(\id,p)$, but the test suite, and therefore all implementations that pass it completely, do\footnote{Incidentally, all implementations currently mentioned in the implementation report handle these cases correctly.}. It then seems fair to include this powerful construct in the formalization. 

\paragraph{Comparisons with constants.}
A direct consequence of the assumption that node identities in the common data model are hidden from the user, our abstraction of SHACL on common graphs does not support comparisons with constants from $\Nodes$. Comparisons with constants from $\Values$ are allowed.

\paragraph{Node tests.} 

Our formalization uses the $\test(\vtype)$ construct to denote many of the node tests available in SHACL. We list the tests from standard SHACL that are covered by this construct.

\smallskip
\noindent
$\bullet$
{\bf \texttt{DatatypeConstraintComponent}}  \\ Tests whether a node has a certain datatype. \\
\noindent
$\bullet$ {\bf\texttt{MinExclusiveConstraintComponent}} or \\ \noindent \phantom{$\bullet$} {\bf\texttt{MinInclusiveConstraintComponent}} or \\ \noindent \phantom{$\bullet$} {\bf\texttt{MaxExclusiveConstraintComponent} }or \\ \noindent \phantom{$\bullet$} {\bf\texttt{MaxInclusiveConstraintComponent}} \\
These four constraints cover can check whether a node is larger (\texttt{\textbf{Max}}) or smaller (\texttt{\textbf{Min}}) than some value, and whether this forms a partial order (\texttt{\textbf{Inclusive}}) or a strict, or total, order (\texttt{\textbf{Exclusive}}).
Based on the SPARQL $<$ or $\leq$ operator mapping. \\
\noindent
$\bullet$ {\bf\texttt{MaxLengthConstraintComponent}} or \\ \noindent \phantom{$\bullet$}   {\bf\texttt{MinLengthConstraintComponent} } \\
  These two constraints test whether the
  length of the lexical form of the node is ``larger'' or equal (resp. ``smaller'' or equal) than
  some provided integer value. Strictly speaking, the recommendation defines these constraint components also on IRIs. However, we limit their use to Literals. \\
\noindent
$\bullet$ {\bf\texttt{PatternConstraintComponent}} \\ Tests whether the length of the lexical form of the node satisfies some regular expression. Strictly speaking, the recommendation defines these constraint components also on IRIs. However, we limit their use to Literals.

\smallskip 

Then there are two tests not covered by our formalization:

\smallskip

\noindent
$\bullet$ {\bf\texttt{NodeKindConstraintComponent}} \\
Tests whether a node is an IRI, Blank Node, or Literal. Our tests apply only to RDF Literals.\\
\noindent
$\bullet$ {\bf\texttt{LanguageInConstraintComponent}} \\
 Test whether the language tag of the node is one of the specified language tags. This feature is not supported by our data model, since it lacks language tags.

\section{Standard ShEx}
\label{app:shex-appendix}

\newcommand{\stshex}{s-ShEx\xspace}

The Shape Expressions Language (ShEx)~\cite{PBGK19} and the ShapeMaps
language~\cite{PB19} have been defined by the Shape Expressions Community
group\footnote{\url{https://www.w3.org/community/shex/}} at W3C.
%
%
Hereafter, we use \emph{standard ShEx} or \emph{\stshex} to refer to the
language defined in~\cite{PBGK19} and formalised in~\cite{SBG15,BGP17}, while
\emph{ShEx} designates the language presented in the current work.

In this appendix, we support the following
\begin{claim}
\label{claim:app-shex}
  On common graphs, the expressive power of ShEx schemas is equivalent to the
  expressive power of non-recursive \stshex schemas.
\end{claim}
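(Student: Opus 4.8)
The plan is to prove the two inclusions separately, each by a semantics-preserving syntactic translation, working throughout over common graphs. Since common graphs form a strict sub-model of RDF (\Cref{app:sec-foundations-comparison-rdf}), the RDF-specific machinery of \stshex---language tags, the IRI/blank-node/literal distinction, and node-kind tests---never applies and can be discarded on both sides. A second simplification comes from non-recursiveness: for a non-recursive \stshex schema the stratified, typing-based semantics of~\cite{BGP17} collapses to a compositional satisfaction relation, which is exactly the form of our semantics in \Cref{tab:semantics-shex-shape-expr} and \Cref{tab:semantics-shex-triple-expr-denotational}. This is what lets the induction go through in both directions.

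For the inclusion of our ShEx into non-recursive \stshex I would argue by structural induction on shapes $\se$ and triple expressions $\tte$ from \Cref{def:shex-syntax}. The shape combinators $\land,\lor,\lnot$ translate directly to the \stshex shape-expression operators \textsc{and}, \textsc{or}, \textsc{not}; the node constraints $\shexhasvalue(c)$ and $\shextest(\vtype)$ translate to the corresponding \stshex value constraints. The crux is the neighbourhood descriptor $\shexneigh{f}$: a half-open expression $\tte\shexeach\ttclosed$ becomes a \textsc{closed} \stshex shape whose triple expression is the translation of $\tte$, while an open expression $\tte\shexeach\ttopen$ becomes a \textsc{closed} shape with an \textsc{extra} declaration realising the admissible extra predicates of $(\shexneg{Q})^{*}$ and the admissible extra inverse predicates of $(\shexneginv{R})^{*}$. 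I would check, using \Cref{tab:semantics-shex-triple-expr-denotational}, that $\sem{f}_v^\graph$ coincides with the set of neighbourhoods admitted by the resulting \stshex shape. Selectors (\Cref{def:shex-selector}) map directly to node selectors of a \stshex ShapeMap.

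For the converse, I would first exploit non-recursiveness to eliminate shape references: the dependency relation on shape labels is acyclic, so repeatedly substituting each reference $\shexref S$ by the defining (eventually reference-free) shape expression of $S$ terminates and yields an equivalent schema whose shape expressions are deeply nested but contain no references. I would then invert the translation above: \textsc{and}/\textsc{or}/\textsc{not} become $\land/\lor/\lnot$; a \stshex shape with a given \textsc{closed}/\textsc{extra} configuration becomes the matching open, half-open, or closed triple expression, after splitting its triple constraints into outgoing and incoming ones and decorating the latter with ${}^{-}$ as in the definition of $\neigh^{\pm}_\graph(v)$. ShapeMap node selectors are read back as our selectors; where a \stshex selector is more complex than those of \Cref{def:shex-selector}, its defining condition is internalised into the shape as the antecedent of the implication, exactly as in \Cref{ex:PathSelector}.

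The step I expect to be the main obstacle is the faithful correspondence between the \textsc{closed}/\textsc{extra} mechanism of standard ShEx and our open/half-open/closed triple expressions. The difficulty is that $\neigh^{\pm}_\graph(v)$ separates incoming from outgoing edges and records every self-loop $(v,p,v)$ twice---once as $(v,p,v)$ and once as $(v,p^{-},v)$---whereas \stshex matches the raw set of triples incident to $v$. I would resolve this by fixing an explicit bijection between $\neigh^{\pm}_\graph(v)$ and the raw set of incident triples used by \stshex, verifying that it is compatible with the two closedness regimes; once this bijection is in place, the remaining bookkeeping for the cardinality constructs $\tte^{*}$, $\tte^{\le n}$, $\tte^{\ge n}$ against \stshex triple-constraint cardinalities is routine.
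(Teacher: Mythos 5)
Your overall plan coincides with the paper's own proof (Appendix~\ref{app:shex-appendix}): two semantics-preserving syntactic translations, elimination of shape references by substitution (sound by non-recursiveness), discarding the RDF-specific node constraints over common graphs, and routine normalisation of intervals, $\varepsilon$, and cardinalities. The argument breaks, however, exactly at the step you call the crux, and not for the reason you name. You align the closedness regimes the wrong way around. Any single standard-ShEx shape mentions only \emph{finitely} many predicates (in its triple expression and its $\shexextra$ set), and in standard ShEx the $\shexclosed$ modifier constrains \emph{outgoing} arcs only: under $\shexclosed$, every outgoing triple whose predicate lies outside that fixed finite set is forbidden ($\shexextra$ cannot help, as it only licenses unmatched triples whose predicate occurs in the expression). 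An open expression $\tte \shexeach \ttopen$ with $\ttopen = (\shexneginv{R})^{*} \shexeach (\shexneg{Q})^{*}$, by contrast, admits outgoing triples with \emph{any} predicate outside the finite set $Q$, i.e., co-finitely many predicates. So your proposed image of open expressions ($\shexclosed$ plus $\shexextra$) is strictly weaker, and no choice of $\shexextra$ repairs it. The correct correspondence, used by the paper, is: open expressions $\leftrightarrow$ \emph{non-closed} standard shapes, and half-open expressions $\tte \shexeach (\shexneginv{R})^{*}$ $\leftrightarrow$ $\shexclosed$ standard shapes; this works precisely because standard ShEx closedness leaves incoming arcs open, which is also why the grammar of Definition~\ref{def:shex-syntax} offers no fully closed neighbourhood expression at all. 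In both cases one must additionally forbid, via explicit zero-cardinality triple constraints, the finitely many predicates in $Q$ (resp.\ $R$) that do not occur in $\tte$; your half-open clause omits this bookkeeping as well. By comparison, the loop-doubling and ${}^{-}$-decoration issue you single out as the main obstacle is routine.

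The misalignment propagates to your converse direction: inverting your translation leaves non-closed standard shapes---the default case---without a preimage, and it gives no recipe for $\shexextra$, which has no counterpart in the ShEx of Definition~\ref{def:shex-syntax}. The paper needs a dedicated lemma here (Appendix~\ref{app:sec-shex-eliminate-extra}): $\shexextra$ is syntactic sugar within standard ShEx, eliminated by conjoining to the triple expression, for every extra predicate $q$, a starred component of the form $\bigl(q.(\lnot\se_1 \land \dots \land \lnot\se_k)\bigr)^{*}$, where $\se_1,\dots,\se_k$ are the shape expressions attached to $q$ in that triple expression; only after this preprocessing does the compositional back-translation (non-closed $\mapsto$ open, $\shexclosed \mapsto$ half-open) go through. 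Without realigning the closedness correspondence and adding this $\shexextra$-elimination step, both directions of your proof fail on the neighbourhood-describing shapes, which are the heart of the claim.
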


Section~\ref{app:sec-shex-standard-shex-on-common-graphs} explains \stshex on
common graphs, while Section~\ref{app:sec-shex-recursion} explains non-recursive
\stshex.

\subsection{\stshex schema and the validation problem}

A standard ShEx schema is a set of named shape expressions, and it is usually
formalised as a pair $(L, \mathit{def})$, where $L$ is a finite set of shape
names (in practice, these are IRIs) and $\mathit{def}$ is a function that
associates a shape expression with every shape name.
In \stshex, the validation problem $\graph \models \schema$ from
Section~\ref{ssec:shapes} is defined in a different way.
In fact, the ShEx specification~\cite{PBGK19} does not specify what it means for
a graph to be valid \wrt a \stshex schema; it only defines what it means for a
node in a graph to satisfy a shape expression.

However, the problem considered in practice is whether some selected nodes in
the graph satisfy some prescribed shape expressions from the schema.
This is specified by a shape map~\cite{PB19}.
A shape map can be formalised as a set of pairs of the form $(\mathit{sel},
l)$, where $l \in L$ and $\mathit{sel}$ is a unary query.
While the shape map specification~\cite{PB19} allows the selectors from
Definition~\ref{def:shex-selector}, most implementations allow general SPARQL
queries as selectors.

In the current paper, we integrate the shape map into the schema itself, which
allows us to specify the validation problem in a uniform way for the three graph
schema formalisms considered.
Additionally, in shape maps we do not use shape names, but shape expressions
directly; the next section argues why this is not a problem from the point of
view of expressive power for standard ShEx without recursion.

\subsection{Shape names and recursion}
\label{app:sec-shex-recursion}

Recursion is an important mechanism in standard ShEx.
The fact that shape expressions are named permits to refer to
them using their name.
In particular, these references allow for circular recursive definitions.
As an example, consider the standard ShEx schema in
Figure~\ref{fig:app-shex-schema-example}.
It contains the single shape name \texttt{ex:User}, whose definition is given by
the shape expression inside the curly braces.
The latter shape expression refers to itself: \texttt{@ex:User} indicates a
reference to the shape expression named \texttt{ex:User}.
Concretely, the shape expression requires from an RDF node to have an
\texttt{ex:email} predicate whose value is a string, as well as any number of
\texttt{ex:invited} predicates whose values are nodes that satisfy the shape
expression named \texttt{ex:User}.

\begin{figure}[h]
\centering
{\small
\begin{verbatim}
  PREFIX ex: <http://ex.example/#>
  PREFIX xsd: <http://www.w3.org/2001/XMLSchema#>
  ex:User {
    ex:email   xsd:string ;
    ex:invited @ex:User *
  }
\end{verbatim}}
\vspace{-2em}
\caption{\label{fig:app-shex-schema-example}%
  A standard ShEx schema.}
\end{figure}

A \stshex schema $(L, \mathit{def})$ is called recursive, when there is a shape
name $l \in L$ whose definition $\mathit{def}(l)$ uses a reference \texttt{@}$l$
to itself, either directly or transitively through references to other shape
names.
Every standard non-recursive ShEx schema can be rewritten to an equivalent
schema without references, simply by replacing every reference with its
definition.
In other words, references in standard ShEx do not add expressive power for
non-recursive schemas.
Therefore, here after we will present standard ShEx without references.

\subsection{Syntax of standard ShEx on common graphs}
\label{app:sec-shex-standard-shex-on-common-graphs}

We present here a version of \stshex restricted on common graphs.
The principle difference between \stshex described here and the ShEx
recommendation~\cite{PBGK19} resides in so called \emph{node
constraints}\footnote{\url{http://shex.io/shex-semantics/\#node- constraints}}.
These are constraints to be verified on the actual node of an RDF graph (which
is an IRI, a literal or a blank node) without considering its neighbourhood.
As pointed out in Section~\ref{app:sec-foundations-comparison-rdf}, such
constraints are irrelevant for nodes (\ie, elements of $\Nodes$) in common
graphs.
Therefore, we restrict \stshex node constraints on values (elements of
$\Values$) only.
\stshex node constraints on values correspond to the atomic shape expressions
$\shextest(\vtype)$ and $\shexhasvalue(c)$ in ShEx.
Their expressive power can be entirely captured by selecting for $\ValueTypes$ a
language equivalent to node constraints on values in \stshex.
Note finally that $\shextest(\any)$ in ShEx allows to distinguish nodes from
values.

\newcommand{\stse}{\mathit{se}}
\newcommand{\stte}{\mathit{te}}
\newcommand{\stclosed}{\mathsf{closed}}
\newcommand{\stextra}{\mathsf{extra}}
\newcommand{\stshape}{\mathit{sh}}
\newcommand{\sttc}{\mathit{tc}}
\newcommand{\stcard}{\mathit{card}}
\newcommand{\normalised}[1]{\tilde{#1}}
\newcommand{\trtefromst}[1]{\tau_{\text{e}}(#1)}
\newcommand{\trfromst}[1]{\tau(#1)}
\newcommand{\trtetost}[1]{\sigma_{\text{e}}(#1)}
\newcommand{\trtost}[1]{\sigma(#1)}
\newcommand{\stses}[2]{\mathit{SE}_{#2}(#1)}
\newcommand{\stextraconstr}[2]{\eta^{#2}_{#1}}
\newcommand{\preds}{\mathit{preds}}

Figure~\ref{fig:app-standard-shex-syntax} gives an abstract syntax for \stshex
\emph{shape expressions} $\stse$ and \emph{triple expressions} $\stte$
restricted on common graphs \wrt node constraints, as discussed above.
The non-terminal $\mathit{sh}$ corresponds to \emph{Shape}s, while the
non-terminal $\mathit{tc}$ is for
\emph{TripleExpression}s\footnote{\url{http://shex.io/shex-semantics/\#shapes-
and-TEs}}.
This abstract syntax is intended to be understandable by those familiar with
standard ShEx after taking into account the following purely syntactic
differences:
\begin{itemize}
\item
  we use the mathematical notation $\land$, $\lor$ and $\neg$ for the
  \stshex operators $\mathsf{and}$, $\mathsf{or}$ and $\mathsf{not}$;
\item
  according to the ShEx specification, the $\stextra\ Q$ modifier is optional
  for shapes; however, an absent extra set is equivalent to $\stextra \
  \emptyset$, therefore we will consider that it is always present.
\end{itemize}

\begin{figure}[h]
\begin{align*}
  \stse
\gDef
  &     \shexhasvalue(c)
  \gMid \shextest(\text{$\vtype$})
  \gMid \stshape
  \gMid \stse \land \stse
  \gMid \stse \lor \stse
  \gMid \neg \stse
\gEnd \\
  \stshape
\gDef
  &     \stextra\ Q\ \shexneigh{\stte}
  \gMid \stclosed\ \stextra\ Q\ \shexneigh{\stte}
\gEnd \\
  \stte
\gDef
  &     \sttc
  \gMid \stte \shexeach \stte
  \gMid \stte \!\shexone\! \stte
  \gMid \stte {[\mathit{min}; \mathit{max}]}
\gEnd \\
  \sttc
\gDef
  &     q\ \stse
  \gMid q\ .
\gEnd
\end{align*}
with $c \in \Nodes \cup \Values$, $\vtype \in \ValueTypes$, $q \in \Predicates
\cup \Keys \cup \Predicates^{-} \cup \Keys^{-}$, $Q \subseteq_{\mathit{fin}}
\Predicates \cup \Keys \cup \Predicates^{-} \cup \Keys^{-}$, $\mathit{min} \in
\mathbb{N}$ and $\mathit{max} \in \mathbb{N} \cup \{*\}$.
\caption{\label{fig:app-standard-shex-syntax}%
  Abstract syntax for \stshex.}
\end{figure}

\subsection{Translations between ShEx and \stshex}

In this section, we introduce a back and forth translation between non-recursive
\stshex and ShEx.
We claim that these translations preserve the semantics \wrt the validity of a
graph.
The claim is presented without a correctness proof, as it
would require to introduce here a formal semantics for \stshex.
However, it is not difficult to write a proof making use of the formal semantics
from~\cite{BGP17}.

\subsubsection{Differences between \stshex and ShEx}

We now list the syntactic differences between the two languages, and describe
how they are handled by the translation:
\begin{itemize}[\textbullet]
\item
  Triple constraints in \stshex (non-terminal $\sttc$) allow to use a $.$
  (dot) instead of the shape expression, which is in fact equivalent to the
  ShEx shape expression $\shexneigh{\shexallte}$.
\item
  Triple expressions in ShEx contain the atomic expression $\varepsilon$,
  while \stshex does not allow it directly.
  On the other hand \stshex allows us to use intervals of the form
  $[\mathit{min};\mathit{max}]$ to define bounded or unbounded repetition,
  while ShEx allows only the unbounded repetition $*$.
  We show in Section~\ref{app:sec-shex-syntax-triple-expr} that the two
  variants have equivalent expressive power.
\item
  In \stshex, the atomic shape expression that defines the neighbourhood of a
  node (non-terminal $\stshape$) is parametrised by a set $Q$ of \emph{extra}
  (possibly inverse) predicates and keys.
  In Section~\ref{app:sec-shex-eliminate-extra}, we show that $\stextra$ is
  just syntactic sugar in \stshex.
\item
  In \stshex, the atomic shape expression derived from the non-terminal
  $\stshape$ can have an optional $\stclosed$ modifier.
  On the other hand, ShEx introduces the triple expressions $\shexneg{P}$ and
  $\shexneginv{P}$ (for $P \subseteq \Predicates \cup \Keys$).
  As we will see, the latter are used when translating \stshex to ShEx in
  order to distinguish between $\stclosed$ and non-$\stclosed$ \stshex
  shape expressions.
\end{itemize}

\subsubsection{Normalised triple expressions}
\label{app:sec-shex-syntax-triple-expr}

We now show how both \stshex and ShEx triple expressions can be normalised so as
to use a limited number of operators; this shall be useful for the translation
between \stshex and ShEx.

\paragraph{Normalisation of \stshex triple expressions}

A \stshex triple expression is called \emph{normalised} if it uses only the
intervals $[0; 1]$ and $[0; *]$; these can be normalised using rewriting rules
based on the following equivalences:
\begin{align*}
  \stte[\mathit{min};*]
& =
  \stte[0;*] \shexeach \underbrace{\stte \shexeach \cdots \shexeach
  \stte}_{\mathit{min} \text{ times}}
\\
  \stte[\mathit{min};\mathit{max}]
& =
  \underbrace{\stte \shexeach \cdots \shexeach \stte}_{\mathit{min} \text{
  times}} \shexeach \underbrace{\stte[0;1] \shexeach \cdots \shexeach
  \stte[0;1]}_{\mathit{max}-\mathit{min} \text{ times}} \quad \text{ when }
  \mathit{max} \not= *
\end{align*}

\paragraph{Normalisation of ShEx triple expressions}

Here after, $\tte$ designates a \todo{cosed!}closed ShEx triple expression derivable from the
rule $f$ of the grammar in Definition~\ref{def:shex-syntax}.
For every $\tte$, we define $\tte^{?} = \tte \shexone \varepsilon$.
A triple expression $\tte$ is \emph{normalised} if either $\tte = \varepsilon$,
or $\tte$ does not use $\varepsilon$ as sub-expression, but can use the $?$
operator defined above.
Every triple expression can be normalised by eliminating occurrences of
$\varepsilon$ using the $?$ operator and the following two properties:
\begin{itemize}
\item
  $\varepsilon$ is a neutral element for the $\shexeach$ operator, \ie, $\tte
  \shexeach \varepsilon = \varepsilon \shexeach \tte = \tte$ for every ShEx
  triple expression $\tte$,
\item
  $\varepsilon^{*} = \varepsilon$.
\end{itemize}
\Wlogx, \textbf{from now on we only consider normalised triple expressions}.

\subsubsection{Direct predicates of triple expressions}
\label{app:sec-shex-define-preds-of-triple-expr}

This section is devoted to the introduction of two technical definitions.
For every triple expression we define the set of (possibly inverted) predicates
and keys that appear directly in the expression.
Formally, if $\tte$ is a ShEx triple expression derived by the \todo{check with grammar} third rule of
the grammar in Definition~\ref{def:shex-syntax}, then we define the set
$\preds(\tte) \subseteq \Predicates \cup \Keys \cup \Predicates^{-} \cup
\Keys^{-}$ inductively on the structure of $\tte$ by:
\begin{align*}
  \preds(\varepsilon)
& =
  \emptyset
\\
  \preds(p.\se)
& =
  \{ p \}
\\
  \preds(\shexinverse{p}.\se)
& =
  \{ \shexinverse{p}\}
\\
  \preds(\se \shexeach \se')
& =
  \preds(\se) \cup \preds(\se')
\\
  \preds(\se \shexone \se')
& =
  \preds(\se) \cup \preds(\se')
\\
  \preds(\se^{*})
& =
  \preds(\se)
\end{align*}
For a \stshex triple expression $\stte$, the set $\preds(\stte)$ is defined
similarly (recall that $q \in \Predicates \cup \Keys \cup \Predicates^{-} \cup
\Keys^{-}$):
\begin{align*}
  \preds(q\ \stse)
& =
  \{ q \}
\\
  \preds(q\ .)
& =
  \{ q \}
\\
  \preds(\stse \shexeach \stse')
& =
  \preds(\stse) \cup \preds(\stse')
\\
  \preds(\stse \shexone \stse')
& =
  \preds(\stse) \cup \preds(\stse')
\\
  \preds(\stte[\mathit{min}; \mathit{max}])
& =
  \preds(\stse)
\end{align*}

\subsubsection{Eliminating $\stextra$ from \stshex}
\label{app:sec-shex-eliminate-extra}

We show by means of an example how the extra construct can be eliminated in
\stshex.

\begin{example}
\label{ex:app-shex-extra}
Consider the \stshex shape expression
\begin{align*}
  \stse
& =
  \stextra\ \{p_1, p_2\}\ \ \shexneigh{\stte}
\\
  \text{with} \qquad \stte
& =
  p_1 \shexneigh{p\ .} \,\shexeach\, p_1 \shexneigh{p'\ .} \,\shexeach\, p_3\ .
\\ 
  \text{and}\qquad \quad
&
  \quad p_1, p_2, p_3, p, p' \in \Predicates \cup \Keys
\end{align*}
that has a set of extra predicates and keys $\{ p_1, p_2 \}$.
It is satisfied by nodes whose neighbourhood can have any incoming triples and
has the following outgoing triples:
\begin{enumerate}
\item
  one $p_1$-triple leading to a node that satisfies $\shexneigh{p\ .}$,
\item
  another $p_1$-triple leading to a node that satisfies $\shexneigh{p'\ .}$,
\item
  a $p_3$-triple leading to an unconstrained node,
\item
  \label{req:extrap1}
  because $p_1$ appears in the $\stextra$ set, other $p_1$-triples are also
  allowed as long as they satisfy \textbf{none} of the constraints present for
  $p_1$ in $\stte$, that is, they satisfy neither $\shexneigh{p\ .}$ nor
  $\shexneigh{p'\ .}$,
\item
  \label{req:extrap2}
  because $p_2$ appears in the $\stextra$ set too, $p_2$-triples are allowed and
  their target is not constrained because $p_2$ does not appear in the triple
  expression $\stte$,
\item
  \label{req:closed}
  finally, since the shape is not $\stclosed$, all outgoing triples whose
  predicate is not in $\{ p_1, p_3 \}$ are allowed, noting that $\{ p_1, p_3 \} =
  \preds(\stte) \cap \Predicates \cap \Keys$.
\end{enumerate}
\end{example}
The shape expression $\stse$ from Example~\ref{ex:app-shex-extra} is equivalent
to the following shape expression without extra:
\[
  \shexneigh{ \stte \;\shexeach\; \stte_{p_1}^{*} \;\shexeach\; \stte_{p_2}^{*}}
\]
where
\[
  \stte_{p_1}
=
  p_1\ \left( \neg \shexneigh{p\ .} \land \neg \shexneigh{p'\ .} \right)
\qquad \text{ and } \qquad
  \stte_{p_2}
=
  p_2\ .
\]
The sub-expression $\stte_{p_1}^{*}$ allows to satisfy the
requirement~(\ref{req:extrap1}) from Example~\ref{ex:app-shex-extra}, while the
sub-expression $\stte_{p_2}^{*}$ allows to satisfy the
requirement~(\ref{req:extrap2}).

The construction for eliminating $\stextra$ just discussed can be generalised
to arbitrary shape expressions.
The idea is to combine (with the $\shexeach$ operator) the initial triple
expression with a sub-expression of the form $q\ \stse_q^{*}$ for every
(possibly inverse) $\stextra$ predicate $q$, where $\stse_q$ is the conjunction
of the negated shape expressions $\stse'$ such that $q\ \stse'$ appears directly
in $\stte$ (without traversing any shape expressions $\stse$).

\Wlogx, \textbf{from now on, we consider only \stshex shape expressions without
$\stextra$}.

\subsubsection{Translation from \stshex to ShEx}

With every \stshex shape expression $\stse$ we associate the ShEx shape
expression $\trfromst{\stse}$ as defined in
Table~\ref{tab:app-shex-translation-from-standard}.
It is defined by mutual recursion with the corresponding translation function
$\trtefromst{\stte}$ for standard ShEx triple expressions $\stte$ presented in
Table~\ref{tab:app-shex-translation-from-standard-te}.

\begin{table}[ht]
\caption{\label{tab:app-shex-translation-from-standard-te}%
  Translation from \stshex to ShEx for normalised triple expressions $\stte$,
  with $q \in \Predicates \cup \Keys \cup \Predicates^{-} \cup \Keys^{-}$.}
\centering
\begin{tabular}{ll}
\toprule
  $\stte$                   & $\trtefromst{\stte}$ \\
\midrule
  $q\ \stse$                & $p. \,\trfromst{\stse}$ \\
  $q\ .$                    & $p.\shextop$ \\
  $\stte \shexeach \stte'$  & $\trtefromst{\stte} \shexeach
                              \trtefromst{\stte'}$ \\
  $\stte \shexone \stte'$   & $\trtefromst{\stte} \shexone
                              \trtefromst{\stte'}$ \\
  $\stte[0;*]$              & $\trtefromst{\stte}^{*}$ \\
  $\stte[0;1]$              & $\trtefromst{\stte} \shexone \varepsilon$ \\
\bottomrule
\end{tabular}
\end{table}

\begin{table}[ht]
\caption{\label{tab:app-shex-translation-from-standard}%
  Translation from \stshex to ShEx for shape expressions.}
\centering
\begin{tabular}{ll}
\toprule
  $\stse$                         & $\trfromst{\stse}$ \\
\midrule
  $\shexhasvalue(c)$              & $\shexhasvalue(c)$ \\
  $\trfromst{\shextest(\vtype)}$  & $\shextest(\vtype)$ \\
  $\stse \land \stse'$            & $\trfromst{\stse} \land
                                    \trfromst{\stse'}$ \\
  $\stse \lor \stse'$             & $\trfromst{\stse} \lor \trfromst{\stse'}$\\
  $\neg \stse$                    & $\neg \trfromst{\stse}$ \\
  $\stclosed\ \shexneigh{\stte}$  & $\shexneigh{\,\trtefromst{\stte}
                                    \,\shexeach\, ({\shexneginv{R}})^{*}\,}$ \\
                                  & \qquad with $R = \preds(\stte) \cap
                                    (\Predicates^{-} \cup \Keys^{-})$ \\
  $\shexneigh{\stte}$             & $\shexneigh{\,\trtefromst{\stte}
                                    \,\shexeach\, ({\shexneginv{R}})^{*}
                                    \,\shexeach\, (\shexneg{Q})^{*}\,}$ \\
                                  & \qquad with $R = \preds(\stte) \cap
                                    (\Predicates^{-} \cup \Keys^{-})$ \\
                                  & \qquad and $Q = \preds(\stte) \cap
                                    (\Predicates \cup \Keys)$ \\
\bottomrule
\end{tabular}
\end{table}

\OMIT{
\begin{example}
The standard ShEx shape expression $\stse$ from Example~\ref{ex:app-shex-extra}
gives the following in ShEx:
$$
\shexneigh{\; \te \;\shexeach\; \te_{p_1}^{*} \;\shexeach\; \te_{p_2}^{*}
\;\shexeach\; (\shexneginv{Q})^{*} \;\shexeach\; (\shexneg{P})^{*}}
$$
where $\te = \trtefromst{\stte}$, $\te_{p_1}^{*}$ accounts for the extra $p_1$,
$\te_{p_2}^{*}$ accounts for the extra $p_2$, $\neg Q$ are the predicates open
for incoming edges, and $\neg P$ are the predicates open for outgoing edges.
\begin{align*}
\te = &\ p_1. \se_1 \shexeach p_1. \se'_1 \shexeach p_3.\shextop \shexeach
\shexinverse{p_4}.\shextop\\
\te_{p_1} = &\ \\
\te_{p_2} = &\ \\
Q = &\ \\
P = &\ \\
\end{align*}
\begin{align*}
\stse &= \stextra\ \{p_1, p_2\}\ \ \shexneigh{\stte} \\
\text{with}\qquad \stte &= p_1 \shexneigh{p\ .} \,\shexeach\, p_1 \shexneigh{p'\ .} \,\shexeach\, p_3\ . \,\shexeach\, \shexinverse{p}_4\ . \\
\text{and}\qquad \quad & \quad p_1, p_2, p_3, p_4, p, p' \in \Predicates \cup \Keys
\end{align*}

$$
\shexneigh{ \stte \;\shexeach\; \stte_{p_1}^{*} \;\shexeach\; \stte_{p_2}^{*}}
$$
where
$$\stte_{p_1} = p_1\ \left(\neg \shexneigh{p\ .} \land \neg \shexneigh{p'\ .}\right)
\qquad \text{ and } \qquad\stte_{p_2} = (p_2\ .)$$

\end{example}
}

\subsubsection{Translation from ShEx to \stshex}

Unless otherwise specified, in the sequel, $\tte$ designates a closed \todo{closed!} ShEx
triple expression produced by the non-terminal $\tte$ of the grammar in
Definition~\ref{def:shex-syntax}.
In Table~\ref{tab:app-shex-translation-to-standard} we present a function that
with every normalised ShEx triple expression $\tte$ associates the standard ShEx
triple expression $\trtetost{\tte}$.
It is defined by mutual recursion with the translation function that with every
ShEx shape expression $\se$ associates a standard ShEx shape expression
$\trtost{\se}$, and that will be presented shortly.
Note that the case $\tte = \varepsilon$ is omitted in
Table~\ref{tab:app-shex-translation-to-standard}: recall that in normalised ShEx
triple expressions, $\varepsilon$ can only appear standalone (not in
sub-expressions), therefore the case $\tte = \varepsilon$ will be treated with
shape expressions.

\begin{table}[ht]
\caption{\label{tab:app-shex-translation-to-standard}%
  Translation from ShEx to \stshex for normalised triple expressions.}
\centering
\begin{tabular}{ll}
\toprule
  $\tte$                  & $\trtetost{\tte}$ \\
\midrule
  $p.\se$                 & $p\ \trtost{\se}$\\
  $\shexinverse{p}.\se$   & $\shexinverse{p}\ \trtost{\se}$\\
  $\tte \shexeach \tte'$  & $\trtetost{\tte} \shexeach \trtetost{\tte'}$\\
  $\tte \shexone \tte'$   & $\trtetost{\tte} \shexone \trtetost{\tte'}$\\
  $\tte^{*}$              & $\trtetost{\tte}[0;*]$\\
  $\tte^{?}$              & $\trtetost{\tte}[0;1]$\\
\bottomrule
\end{tabular}
\end{table}

The definition of $\trtost{\se}$ is straightforward for the following cases:
\begin{align*}
  \trtost{\shexhasvalue(c)}
& =
  \shexhasvalue(c) \\
  \trtost{\shextest(\vtype)}
& =
  \shextest(\vtype) \\
  \trtost{\se \land \se'}
& =
  \trtost{\se} \land \trtost{\se'} \\
  \trtost{\se \lor \se'}
& =
  \trtost{\se} \lor \trtost{\se'} \\
  \trtost{\neg \se}
& =
  \neg \trtost{\se}
\end{align*}
\todo{check with new terminology}
The remaining case is for a shape expression of the form $\shexneigh{\te} =
\shexneigh{\tte \shexeach \cdots}$ where $\tte$ is normalised.
Consider the most general case
\[
  \te
=
  \tte \shexeach (\shexneginv{P})^{*} \shexeach (\shexneg{Q})^{*}
\]
Let also
\begin{align*}
  \{ p_1, \ldots, p_m \}
& =
  (\preds(\tte) \cap \Predicates^{-} \cap \Keys^{-}) \setminus P \\
  \{ q_1, \ldots, q_n \}
& =
  (\preds(\tte) \cap \Predicates \cap \Keys) \setminus Q.
\end{align*}
Intuitively, $\{ p_1, \ldots, p_m \}$ is the set of predicates that are not
allowed to appear on incoming edges in the neighbourhoods defined by $\te$, and
similarly $\{q_1, \ldots, q_n\}$ are the forbidden predicates for outgoing
edges.
Then
\[
  \trtost{\shexneigh{\te}}
=
  \left\{
  \begin{array}{l}
    \trtetost{\tte} \;\shexeach \\
    \shexinverse{p_1}\ .[0;0] \shexeach \cdots \shexeach \shexinverse{p_m}\
    .[0;0] \;\shexeach \\
    q_1\ .[0;0] \shexeach \cdots q_n\ .[0;0]
  \end{array}
  \right\}
\]
If $\tte = \varepsilon$, then the term $\trtetost{\tte}$ on the first line of
the definition of $\trtost{\shexneigh{\te}}$ must be omitted.

The remaining case for the definition of $\trtost{\shexneigh{\te}}$ is for
\[
\te = \tte \shexeach (\shexneginv{P})^{*}
\]
Let $\{p_1, \dots, p_m\}$ be as before.
Then
\[
  \trtost{\shexneigh{\te}}
=
  \stclosed \
  \left\{
  \begin{array}{l}
    \trtetost{\tte} \;\shexeach \\
    \shexinverse{p_1}\ .[0;0] \shexeach \cdots \shexeach \shexinverse{p_m}\
    .[0;0]
  \end{array}
  \right\}
\]
As before, if $\tte = \varepsilon$, then the term $\trtetost{\tte}$ must be
omitted.

This concludes the demonstration of Claim~\ref{claim:app-shex}.

\subsection{Comparative expressiveness of ShEx and SHACL}

In Section~\ref{app:indistinguishabilitySHACL} we show a property expressible in ShEx but not in SHACL, while in Section~\ref{app:indistinguishabilityShEx} we show a property expressible in SHACL but not in ShEx.

\subsubsection{Indistinguishability by SHACL}
\label{app:indistinguishabilitySHACL}

\todo[inline]{Iovka: Introductory sentence for the section.}

\ognjen{Overall, it appears correct.
Most of the cases by induction are claimed to be straightforward, without proving. Perhaps, it can be simplified since it proves a slightly bigger property here than needed... or we define it as an auxiliary lemma.}
\begin{proposition}\label{prop:shacl-inexp}
  The ShEx schema $\shexschema^{eq}$ from Example~\ref{ex:SheXCounting}
  cannot be expressed in SHACL, i.e.\,there is no  SHACL schema $\SHACLSchema'$ such that $\graph$ is valid w.r.t.\,$\SHACLSchema'$ iff $\graph$ is valid w.r.t.\, $\shexschema^{eq}$, for any graph $\graph$. 
\end{proposition}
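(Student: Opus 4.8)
The plan is to establish inexpressibility by producing a family of graphs on which every SHACL schema must give the same verdict, while $\shexschema^{eq}$ separates them. For $m,n\ge 0$ let $\graph_{m,n}$ be the common graph with a distinguished node $c$, fresh distinct nodes $a_1,\dots,a_m,b_1,\dots,b_n$ (none equal to $c$ or to any of the finitely many constants occurring in the schema below), edges $\{(c,p,a_i)\}\cup\{(c,q,b_j)\}$, and empty $\rho$. Since the selector $\shexhasvalue(c)$ picks out only $c$, and the shape $\shexneighopen{(p.\shextop\shexeach q.\shextop)^{*}}$ forces the outgoing $p$-edges to be matched one-to-one with the outgoing $q$-edges, we get $\graph_{m,n}\models\shexschema^{eq}$ iff $m=n$. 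Assume towards a contradiction that some SHACL schema $\SHACLSchema'$ is equivalent to $\shexschema^{eq}$, and let $N$ be the largest integer occurring in any counting quantifier $\exists^{\ge n}$ or $\exists^{\le n}$ in $\SHACLSchema'$ (take $N=0$ if none). I will show that for $m=N+1$ and $n=N+2$ the schema $\SHACLSchema'$ cannot distinguish $\graph_{m,m}$ from $\graph_{m,n}$, which is absurd since $\shexschema^{eq}$ accepts the former and rejects the latter.

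The engine of the argument is symmetry: every permutation of $\{a_1,\dots,a_m\}$ and every permutation of the $q$-leaves is an automorphism, and SHACL satisfaction is invariant under automorphisms fixing the schema constants; hence inside each graph all $a_i$ are interchangeable and all $q$-leaves are interchangeable, the orbits of used elements being $\{c\}$, $A=\{a_1,\dots\}$, and $B=\{b_1,\dots\}$. First I would prove a lemma on path expressions. After pushing converses to the atoms (via $(\pi\cdot\pi')^{-}=\pi'^{-}\cdot\pi^{-}$, $(\pi\cup\pi')^{-}=\pi^{-}\cup\pi'^{-}$, $(\pi^{*})^{-}=(\pi^{-})^{*}$), the atomic relations on used elements are the identity together with the complete bipartite blocks $\{c\}\times A=\iexpr{p}{\graph}$, $\{c\}\times B$, $A\times\{c\}$, $B\times\{c\}$. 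The relations generated from these by $\cup$, $\cdot$, and ${}^{*}$ are exactly unions of orbit diagonals and complete blocks $O\times O'$ over $\{c\},A,B$; composition never yields an ``off-diagonal only'' block, because any return into $A$ or $B$ must pass through $c$, which is joined to all of $A$ and all of $B$. Consequently, for every path expression $\pi$ the relation $\iexpr{\pi}{\graph_{m,n}}$ is a union of orbit blocks, and which blocks occur is determined combinatorially and is therefore the same in $\graph_{m,m}$ and $\graph_{m,n}$; only the sizes of the $B$-blocks change.

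Building on this, I would prove by structural induction that for every subshape $\varphi$ of $\SHACLSchema'$ and each representative $v\in\{c,a_1,b_1\}$, one has $\graph_{m,m},v\models\varphi$ iff $\graph_{m,n},v\models\varphi$. The atomic cases ($\top$, $\hasvalue$, $\test$, $\closed$) depend only on the element itself or on its set of incident predicates, which coincide for the representatives in both graphs; the Boolean cases are immediate. For $\eq(\pi,p)$ and $\disj(\pi,p)$ the truth value depends only on whether the orbit-block successor sets of $\pi$ and $p$ coincide or are disjoint, and by the path lemma the block structure is identical in both graphs while every orbit is nonempty, so equality and disjointness are preserved. The decisive case is $\exists^{\ge k}\pi.\varphi$ or $\exists^{\le k}\pi.\varphi$ with $k\le N$: the path lemma gives the successor set of $v$ as a union of orbits, and by the induction hypothesis together with symmetry the sub-collection of those orbits whose representative satisfies $\varphi$ is the same in both graphs. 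If this sub-collection contains $A$ or $B$ (each of size at least $N+1$), the count exceeds $N$ in both graphs, so the comparison with $k\le N$ yields the same truth value; otherwise it involves only singleton orbits and the count is literally the same small number in both graphs. Either way the quantifier agrees.

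Finally I would lift this to validity. If $\graph_{m,n}\not\models\SHACLSchema'$, a violation occurs at some element; using automorphism invariance it may be taken at a representative $c$, $a_1$, or $b_1$, or at an element outside the graph, which is isolated in both graphs and where every shape takes an intrinsic, graph-independent value (no edge is incident to it). In each case the induction shows that the selector and the shape take equal truth values in $\graph_{m,m}$, so the violation transfers; the converse is symmetric. Hence $\graph_{m,m}\models\SHACLSchema'$ iff $\graph_{m,n}\models\SHACLSchema'$, contradicting equivalence with $\shexschema^{eq}$. I expect the main obstacle to be exactly the path-expression lemma, namely showing that converse navigation and, above all, the Kleene star cannot exploit the extra $q$-leaves to build successor sets whose block pattern depends on their number. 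The observation that neutralises this is that every cycle in these graphs runs through $c$, which is symmetrically joined to all leaves, so iterated navigation can only ever yield whole orbits: the number of $q$-leaves affects cardinalities but never the qualitative block pattern, and cardinalities above the threshold $N$ are invisible to SHACL's bounded counting.
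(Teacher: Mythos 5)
Your proof is correct and is essentially the paper's own argument: the paper likewise takes star-shaped graphs around $c$ whose $p$- and $q$-degrees all exceed the largest numeric bound $k$ of the hypothetical SHACL schema (one balanced, one with a single extra leaf) and shows by mutual induction on path expressions and shapes that SHACL validation cannot distinguish them, exactly your threshold-plus-symmetry argument. The only difference is presentational: where you work with automorphism orbits and a block-structure lemma for definable path relations, the paper packages the same symmetry into a cross-graph relation $R=\{(c,c)\}\cup\{(u,v)\mid \exists p\colon (c,p,u)\in\graph_1,\ (c,p,v)\in\graph_2\}$ and proves $R$-invariance of shape satisfaction and path semantics (your explicit handling of $\eq$/$\disj$, of $\exists^{\leq n}$, and of elements outside the graph covers cases the paper dismisses as straightforward).
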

\begin{proof}
\ognjen{I would provide more guidance wrt to the proof structure, and the idea in general}
To prove this proposition, we first need some preparations. For a node $c$ and an integer $n>0$, a \emph{$(c,n)$-neighbourhood} is a graph $\graph=\{(c,p_1,v_1),\ldots,(c,p_k,v_k)\}$ such that 
\begin{enumerate}
    \item $c,v_1,\ldots,v_k$ are distinct nodes,
    \item for every property $p$, either $p$ does not occur in $\graph$ or $p$ occurs in at least $n$ triples of $\graph$. 
  \end{enumerate}
We say two $(c,n)$-neighbourhoods $\graph_1,\graph_2$ are \emph{similar}, if exactly the same properties appear in $\graph_1$ and $\graph_2$. Intuitively, if $\graph_1,\graph_2$ are similar, then for all properties $p$  we have that either (1) $p$ does not occur in $\graph_1$ nor in  $\graph_2$, or (2) in both $\graph_1$ and $\graph_2$ the node $c$ has at least $n$ outgoing $p$-edges.

Assume a SHACL schema $\SHACLSchema$. We assume that $\SHACLSchema$ does not use expressions of the form $\leqn{n}{\pathExpr}{\varphi}$. This can be assumed w.l.o.g.\,since $\leqn{n}{\pathExpr}{\varphi}$  can be written as $\neg \geqn{n+1}{\pathExpr}{\varphi}$. Let $k$ be the maximal integer that appears among the numeric restrictions of the form  $\geqn{n}{\pathExpr}{\varphi}$ and $\leqn{n}{\pathExpr}{\varphi}$ in $\SHACLSchema$\todo{$\le n$ not in $\SHACLSchema$?}. Assume we have two 
$(c,k+1)$-neighbourhoods $\graph_1,\graph_2$ such that (a) $\graph_1,\graph_2$ are similar, and (b) 
for all nodes $d$ that appear in $\varphi$,\ognjen{fir any node $d$?}  we have that \ognjen{either?} (i) $d=c$, or (ii)  $d$ does not occur in $\graph_1$ or in  $\graph_2$. Then we have ($\dagger$) $\graph_1$ validates w.r.t.\,$\SHACLSchema$ iff $\graph_2$ validates w.r.t.\,$\SHACLSchema$. 
\ognjen{maybe I missed, validates wrt to a node $c$ or some target} 
\todo[inline]{Iovka: "all nodes $d$ that appear in $\varphi$ ?}

To see the above claim, take the binary relation as follows:
\[R=\{(c,c)\}\cup \{(u,v)\mid \exists p: (c,p,u)\in \graph_1,(c,p,v)\in \graph_2 \}.\] 
\ognjen{Not important, but 
I would use something else than $R$, maybe $\sim$ or $\approx$}
We can show  that the following holds for all shape expressions $\varphi$ and property paths $\pi$ that appear in $\SHACLSchema$:
\begin{enumerate}
    \item $\graph_1,d_1\models \varphi$ iff $\graph_2,d_2\models \varphi$, for all $(d_1,d_2)\in R$, and 
    \item $(d_1,d_1')\in \sem{\pi}^{\graph_1}$ iff $(d_2,d_2')\in \sem{\pi}^{\graph_2}$, for all $(d_1,d_2),(d_1',d_2')\in R$.
\end{enumerate}
\ognjen{This is a more general property than we need for the proof. I am wondering if we focus on the exact graph $\G$ defined below we could simplify a bit}
Note that the claim ($\dagger$) follows from (1) above as a special case: since $(c,c)\in R$, we get that $\graph_1,c\models \varphi$ iff $\graph_2,c\models \varphi$. The claims (1-2) are shown by induction on the structure of $\varphi$ and $\pi$. 

We start with the claim (1). Assume arbitrary $(d_1,d_2),(d_1',d_2')\in R$, and consider the possible cases for $\pi$:
\begin{enumerate}[(i)]
    \item $\pi=p$ for some property $p$. Assume  $(d_1,d_1')\in \sem{p}^{\graph_1}$. Then $d_1=c$ and $(c,p,d_1')\in \graph_1$. Since $(d_1',d_2')\in R$, we have that $(c,p,d_2')\in \graph_2$. By the definition of $R$, $d_2=c$ and thus $(d_2,d_2')\in\sem{p}^{\graph_2}$. The other direction is symmetric.
    \item $\pi=k$ for some key $k$. Then trivially $\sem{k}^{\graph_1}=\sem{k}^{\graph_2}=\emptyset$  by the definition of $(c,n)$-neighborhoods and the claim follows.
   
   \item The remaining cases for $\pi=\pi_1^{-}$, $\pi=\pi_1\cdot \pi_2$, $\pi=\pi_1\cup \pi_2$, and $\pi=\pi_1^{*}$ are shown by a straightforward application of the induction hypothesis and the semantics of the operators four operators. 
\end{enumerate}

We can now proceed to prove claim (2). \ognjen{(1)?} Assume arbitrary $(d_1,d_2)\in R$. We only show that  $\graph_1,d_1\models \varphi$ implies $\graph_2,d_2\models \varphi$. The other direction is symmetric. The proof is by structural induction on $\varphi$. Assume $\graph_1,d_1\models \varphi$, and consider the possible cases for $\varphi$:
\begin{enumerate}[(a)]
  \item $\varphi = \geqn{n}{\pathExpr}{\varphi_1}$. Assume  $\graph_1,d_1\models \varphi$. Take the set $F=\{b\mid (d_1,b) \in \iexpr{\pathExpr}{\graph} \land\graph_1,b\models\varphi_1 \}$. There can be two cases: $F=\{c\}$ and $F\neq \{c\}$.

  \medskip
  Suppose $F=\{c\}$. Thus $n=1$ and $\graph_1,c\models\varphi_1$. Since $(c,c)\in R$ and by the induction hypothesis, we get $\graph_2,c\models\varphi_1 $. Moreover, given $(d_1,d_2)\in R$, from  $(d_1,c)\in \sem{\pi}^{\graph_1}$ we infer $(d_2,c)\in \sem{\pi}^{\graph_2}$. Thus we get   
   $\graph_2,d_2\models \geqn{n}{\pathExpr}{\varphi_1}$.

\medskip
  Suppose $F\neq \{c\}$. Since $|F|>0$, there is some $e\in F$ and a unique property $p$ such that $(d_1,p,e)\in \graph_1$. Since $\graph_2$ is a $(c,k+1)$-neigborhood similar to $\graph_1$, we have that $\graph_2$ has $k+1$ distinct edges $(c,p,e_1),\ldots,(c,p,e_{k+1})$ with $(e,e_1),\ldots,(e,e_{k+1})\in R$. Note that $n<k+1$. Since $\graph_1,e\models\varphi_1$, using the induction hypothesis we get that $\graph_2,e_j\models\varphi_1$ for all $1 \leq j \leq k+1$. Moreover, from $(d_1,e)\in \sem{\pi}^{\graph_1}$ we get that  
$(d_2,e_j)\in \sem{\pi}^{\graph_2}$ for all  $1 \leq j \leq k+1$. Thus we get   
   $\graph_2,d_2\models \geqn{n}{\pathExpr}{\varphi_1}$.  
 \item The remaining cases are straightforward.  
\end{enumerate}



We can now come back to the main claim of the proposition. 
Towards a contradiction, suppose that there exists a SHACL schema $\SHACLSchema'$ such that $\graph$ is valid w.r.t.\,$\SHACLSchema'$ iff $\graph$ is valid w.r.t.\,$\shexschema^{eq}$, for any graph $\graph$. 

Let $k$ be the maximal integer that appears among the numeric restrictions of the form  $\geqn{n}{\pathExpr}{\varphi}$ and $\leqn{n}{\pathExpr}{\varphi}$ in $\SHACLSchema'$.

Take the graph
\[\graph=\{(c,p,v_j),(c,q,w_j)\mid 1 \leq j \leq k+1\},\]
where none of $v_j$ and $w_j$ appear in $\SHACLSchema'$.
Note that here $\graph$ is such that  $c$ has exactly the same number (i.e.,\,$k+1$) $p$-edges and $q$-edges. Clearly, $\graph$ validates w.r.t.\,$\shexschema$, and hence also $\graph$ validates w.r.t.\,$\SHACLSchema'$

Consider a new graph $\graph'=\graph\cup \{(c,p,u) \}$, where $u$ does not appear in $\graph $. We have that the node $c$ in  $\graph'$ has more outgoing $p$-edges that the number of outgoing $q$-edges, and thus $\graph'$ does not validate w.r.t.\,$\shexschema$. Observe that $\graph$ and $\graph'$ are $(c,k+1)$-neighbourhoods that are similar in the sense defined above, and thus due to ($\dagger$), we have that $\graph'$ does validate w.r.t.\,$\SHACLSchema'$. Contradiction. 
   \end{proof}

\subsubsection{Indistinguishably by ShEx}
\label{app:indistinguishabilityShEx}


The two graphs in Figure~\ref{fig:example-shex-counts-edges} cannot be distinguished by a ShEx schema.
\todo[inline]{Iovka: adapt to how this is treated in the paper.}
In fact, we show a stronger property. 
Let $\graph = (E, \rho)$ be a graph and $e = (u, p, v) \in E$.
A \emph{double} of $\graph$ is a graph of the form $\graph \cup \graph'$ together with a bijection $d: \nodes(\graph) \to \nodes(\graph')$, where $\graph' = (E', \rho')$ is a disjoint copy of $\graph$.
Now, let $\graph \cup \graph'$ as above be a double of $\graph$, with bijection $d$.
Then $\copyswap(\graph, e)$ is the graph $(E'', \rho \cup \rho')$ such that 
$$E'' = E \cup E' \setminus \{e, d(e)\} \cup \left\{(u, p, d(v)), (d(u), p, v)\right\}.$$

Back to the graphs in Fig.~\ref{fig:example-shex-counts-edges}, we have $\graph_{\text{right}} = \copyswap(\graph_{\text{left}}, e)$, where $e$ is the unique edge in $\graph_{\text{left}}$ labelled $\exaccess$, and $\graph_{\text{left}}$, resp. $\graph_{\text{right}}$, are the graphs on the left, resp. on the right, in Fig.~\ref{fig:example-shex-counts-edges}.

\begin{lemma}
  For every ShEx schema $\shexschema$, every graph $\graph$ and every edge $e$ in $\graph$, if $\graph \models \shexschema$, then
  $\copyswap(\graph, e) \models \shexschema$.
\end{lemma}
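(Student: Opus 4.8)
The plan is to show that $\copyswap$ changes nothing a ShEx schema can observe: at every focus, every shape is satisfied before the swap iff it is satisfied after. Write $D = \graph \cup \graph'$ for the chosen double and $H = \copyswap(\graph, e)$, so that $D$ and $H$ have the same nodes, values and contents, and differ only in that the two edges $e=(u,p,v)$ and $d(e)=(d(u),p,d(v))$ of $D$ are replaced by the crossing edges $(u,p,d(v))$ and $(d(u),p,v)$ in $H$. Let $\tau$ be the involution that swaps each node $w$ with its twin $d(w)$ and fixes every value. A direct check shows $\tau$ is an automorphism of $H$: it permutes the two crossing edges, maps every remaining edge of one copy to the corresponding edge of the other, and preserves contents (the analogous statement holds for $D$).

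First I would record the standard fact that automorphisms preserve ShEx satisfaction: by a routine mutual induction on the grammar of Definition~\ref{def:shex-syntax}, for every shape $\se$ and every focus $x$ we have $H, x \models \se$ iff $H, \tau(x) \models \se$. In particular $v$ and $d(v)$, and likewise $u$ and $d(u)$, satisfy exactly the same shapes in $H$; this is the only consequence of $\tau$ that the rest of the argument uses.

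The heart of the proof is the claim that, for every focus $x$ and every shape $\se$, $D, x \models \se$ iff $H, x \models \se$, proved by mutual induction with the statement $\sem{\tte}_x^{D} = \sem{\tte}_x^{H}$ for every closed triple expression $\tte$ (which holds because the denotation of $\tte$ depends on the graph only through satisfaction of its subshapes at endpoints, to which the inductive hypothesis applies). The base cases $\shexhasvalue(c)$ and $\shextest(\vtype)$ depend only on $x$, which is the same element in both graphs. For value foci the induction is immediate, since $\copyswap$ touches only predicate edges between nodes and hence $\neigh^{\pm}_{D}(x) = \neigh^{\pm}_{H}(x)$ for every value $x$. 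For a node focus $x$, the neighbourhoods $\neigh^{\pm}_{D}(x)$ and $\neigh^{\pm}_{H}(x)$ coincide unless $x \in \{u,v,d(u),d(v)\}$, in which case they differ in a single incident triple whose far endpoint is replaced by its twin: at $x=u$ the outgoing $(u,p,v)$ becomes $(u,p,d(v))$, and at $x=v$ the incoming $(v,p^{-},u)$ becomes $(v,p^{-},d(u))$, and symmetrically for $d(u),d(v)$. Since a triple matches a subexpression $q.\psi$ precisely when its far endpoint satisfies $\psi$, and since an endpoint and its twin satisfy the same shapes in $H$ by the previous paragraph, the swapped triple is interchangeable with the original in every membership test against $\sem{f}_x^{H}$. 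Hence $\neigh^{\pm}_{D}(x) \in \sem{f}_x^{D}$ iff $\neigh^{\pm}_{H}(x) \in \sem{f}_x^{H}$, which closes the induction.

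From the claim we obtain $D \models \shexschema$ iff $H \models \shexschema$; connecting this to $\graph$ is the routine direction, since the double is two disjoint isomorphic copies of $\graph$ and non-recursive ShEx is local (here one uses that the copy carries fresh identities, so that the neighbourhoods of values do not merge). I expect the main obstacle to be exactly the inductive step for $\shexneigh{f}$: because satisfaction is defined through membership of the entire neighbourhood in the denotation of a triple expression, one must argue that replacing one incident triple by an endpoint-swapped copy leaves this membership unchanged, and this is precisely where it matters that ShEx counts incident triples rather than the nodes at their far ends—a node-counting logic such as SHACL fails here. Some care is also needed to interleave the two mutually recursive statements correctly, and to treat loops ($u=v$) separately, where $e$ contributes two triples to $\neigh^{\pm}$.
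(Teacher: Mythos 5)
Your steps (1)--(4) are sound and are, in essence, a rigorous elaboration of the paper's own proof: the paper does a single structural induction whose invariant is exactly your twin-interchangeability ($\graph,u\models\se$ iff $\copyswap(\graph,e),u\models\se$ iff $\copyswap(\graph,e),d(u)\models\se$) and declares it ``immediate''; your detour through the double $D$, the automorphism $\tau$ of $H$, and the triple-substitution argument for the $\shexneigh{f}$ case is precisely the bookkeeping needed to make that one line into a proof, including the disjointness point (no cross-copy edges, so swapping an endpoint never collides with an existing triple).

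The genuine gap is in your step (5), and it is visible as an internal inconsistency between your steps (2) and (5). In the paper's definition the bijection is $d\colon\nodes(\graph)\to\nodes(\graph')$ \emph{on nodes only}, so the two copies \emph{share} all values: if $\rho(x,k)=w$ then both $(x,k,w)$ and $(d(x),k,w)$ are triples of $D$ and of $H$. Your step (2) silently relies on this (your $\tau$ fixes every value, and it is an automorphism of $H$ only because the copied node $d(x)$ carries the \emph{same} value $w$), whereas your step (5) asserts the opposite (``the copy carries fresh identities, so that the neighbourhoods of values do not merge''). Under the actual definition the neighbourhoods of values \emph{do} merge, $\neigh^{\pm}_{D}(w)\supseteq\{(w,k^{-},x),(w,k^{-},d(x))\}$, and ShEx can see it: take $\graph$ with edge set $\{(x,q,y)\}$ and $\rho(x,k)=w$, and the schema $\schema_0$ with the single pair whose selector is $\shexneigh{\shexinverse{k}.\shextop \shexeach \shexallte}$ and whose shape is $\lnot\shexneigh{\shexinverse{k}.\shextop \shexeach \shexinverse{k}.\shextop \shexeach \shexallte}$ (``at most one incoming $k$-triple''). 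Then $\graph\models\schema_0$, but in $\copyswap(\graph,(x,q,y))$ the selected value $w$ has two incoming $k$-triples, so $\copyswap(\graph,(x,q,y))\not\models\schema_0$. So it is not just your step (5) that fails: the lemma itself is false as stated once $\rho$ is non-empty (and making the copy's values fresh does not repair it, since then the shape $\shexhasvalue(w)$ under the same selector separates the graphs, besides breaking your automorphism). Your argument, like the paper's---which considers only node foci---is correct exactly for graphs with no key--value pairs; that restriction covers the intended application to the two predicate-only graphs of Figure~\ref{fig:example-shex-counts-edges}, but either the statement of the lemma or your final step must be weakened to that case.
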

\begin{proof}
Let $e = (u, p, v)$ and $\graph' = \copyswap(\graph, e)$.
The proof easily follows by structural induction on shape expression
$\varphi$, where the induction base $\graph, u \models \se$ \iff
$\graph', u \models \se$ \iff $\graph', d(u) \models \se$  is
immediate due to the definition of the $\copyswap$ function.
\end{proof}



\section{Standard PG-Schema}
\label{sec:standard-pg-schema}

The version of PG-Schema presented in the body of the paper is a variant of PG-Schema that is constructed to preserve the essence of the original PG-Schema as presented in \cite{ABDF23} but also to fit a paradigm of a shape-based schema language like SHCAL and ShEx, and in particular to follow the paradigm where a schema consists of a set of selector-shape pairs. 
However, the original PG-Schema follows a different paradigm, namely one where a schema consists of a set of node types, a set of edge types, and a set of constraints.
To show that nevertheless the version of PG-Schema in the body of the paper preserves its core functionality, we will present here an intermediate version that we will refer to as \emph{PG-Schema on Common Graphs} while we refer to the version of PG-Schema in the body of the paper as \emph{shape-based PG-Schema}, and to the PG-Schema defined in \cite{ABDF23} as \emph{original PG-Schema}.

\subsection{PG-Schema on Common Graphs}

The central idea of the original PG-Schema in \cite{ABDF23} 
is that a schema (called \emph{graph type} 
in this context) consists of three parts: (1) a set of node types, 
(2) a set of edge types, and (3) a set of graph constraints
that represents logical statements about the property graph that must hold for it to be valid.  
A particular property graph is then said to be valid wrt.\ such a graph type if (1) every node 
in the property graph is in the semantics of at least one node type, (2) every edge in the property graph 
is in the semantics of at least one edge type, and (3) the property graph satisfies all specified graph constraints.

The organisation of this section is as follows. We first discuss the notions of \emph{node types} and \emph{edge types}.
After that we discuss how path expressions are defined, after which we discuss what graph constraints look like in this setting.
In the final two subsections we discuss how this version of PG-Schema relates the original defined in
\cite{ABDF23}, and how it relates to the one define in this paper.

\subsubsection{Node types}

The purpose of node types in the original PG-Schema is to describe nodes, their properties and their labels. 
Since in the common graph model nodes there are no labels, node types become simply record types
where the record fields describe the allowed keys.
Therefore node types are here defined to be the same as the content types defined in Definition~\ref{def:contentType}.
In the original PG-Schema it was possible to indicate if these record types are \emph{closed} or \emph{open},
where the former indicates that only the indicated keys are allowed, and the latter that additional keys
are allowed. This is easily expressed with such node types, and for example a node type that requires the presence of a key with name $\excard$ and a value of type $\mathbbm{int}$, and allows in addition other keys, is represented by $\closedRT{\excard : \mathbbm{int}} \tAnd \top$.

\subsubsection{Edge types}

In the original PG-Schema there is a notion of edge type, which consists of three parts: 
(1) a type describing the source node, (2) a type describing describing the contents of the edge itself, and (3) a type describing the the target node.
Since in common graphs the content of an edge is just a label, a type describing this content can be simply an expression of the form $\pwc$ (a wild-card indicating that any label is possible) or a finite set $P$ of labels (indicating that only these labels are allowed). 
So we get the following definition for edge types.

\begin{definition}[Edge type]
An \emph{edge type} is an expression $\etype$ of the form defined by the grammar
$$\etype \gDef \et{\ntype}{\pwc}{\ntype} \gMid \et{\ntype}{P}{\ntype} \gMid  \etype \tAnd \etype  \gMid  \etype \tOr \etype  \gEnd $$
where $P$ is a finite subset of $\Predicates$.
\end{definition}

As for node types, we define for edge types a value semantics, which in this case defines which combinations of (1) source node content, (2) edge content, and (3) target node content are allowed. 

\begin{definition}[Value semantics of edge types]
 With an \emph{edge type} $\etype$ we associate a \emph{value semantics} $\sem{\etype} \subseteq \Records \times \Predicates \times \Records$ which is defined with induction on the structure of $\etype$ as follows:
 \begin{enumerate}
   \item $\sem{\et{\ntype_1}{\pwc}{\ntype_2}} = \sem{\ntype_1} \times \Predicates \times \sem{\ntype_2}$
   \item $\sem{\et{\ntype_1}{P}{\ntype_2}} = \sem{\ntype_1} \times P \times \sem{\ntype_2}$   
   \item \begin{tabbing}
         $\sem{ \etype_1 \tAnd \etype_2 } = \{ $ \= $ ( (r_1 \cup s_1), p, (r_2 \cup s_2) ) \in \Records \times \Predicates \times \Records \mid $ \\
         \> $( r_1, p, r_2 ) \in \sem{\etype_1} \land ( s_1, p, s_2 ) \in \sem{\etype_2} \}$
      \end{tabbing}
   	\item $\sem{ \etype_1 \tOr \etype_2 } = \sem{\etype_1} \cup \sem{\etype_2}$
 \end{enumerate}
\end{definition}

\subsubsection{Path expressions}

We define here a notion of path expression that we call \emph{extended PG-path expression} and that is similar to the notion of PG-path expression of Definition~\ref{def:pgpaths-syntax}, except that in the positions where a content type $\ntype$ is allowed, 
we also allow an edge type $\etype$.

\begin{definition}[Extended PG-path expressions] 
\label{def:ext-pgpaths-syntax}
An extended  PG-path expression is an expression $\pexpr$ of the form defined by the  grammar
\begin{align*} 
& \pexpr \gDef  \ppexpr \gMid \ppexpr \cdot k \gMid k^{-} \cdot \ppexpr \gMid k^{-} \cdot \ppexpr \cdot  k' \gEnd \\
& \ppexpr \gDef \keyIsVal{k}{c} \gMid \neg \keyIsVal{k}{c} \gMid \ntype \gMid \lnot \ntype \gMid \etype \gMid \lnot \etype \gMid \\
& \quad \quad p \gMid \lnot P \gMid  
{\ppexpr}^{-} \gMid \ppexpr \cdot \ppexpr \gMid \ppexpr \cup \ppexpr \gMid {\ppexpr}^{*} \gEnd
\end{align*}
where $k,k' \in \Keys$, $c \in \Values$, $\ntype$ is a content type, $p \in \Predicates$, and $P \subseteq_{\mathit{fin}} \Predicates$. 
\end{definition}

The semantics of extended PG-path expressions is identical to that of PG-path expressions for the expressions they have in common, and for the additional parts, the edge types $\etype$ and $\lnot \etype$, the semantics is given in Table~\ref{tab:semSPGtypes}.

\begin{table}[tb]
  \caption{Semantics extended PG-path expressions.}
  \label{tab:semSPGtypes}  
  \centering
  \begin{tabular}{cl}
    \toprule
    $\pexpr$ & $\gsem{\pexpr}\subseteq (\Nodes\cup\Values)\times (\Nodes\cup\Values)\ $  for  $\graph = (E, \rho)$ \\[2pt]
    \midrule    
    $\etype$ & $\left\{ (u, v) \mid \exists p : (u, p, v) \in E \land (\rho(u), p, \rho(v)) \in \sem{\etype} \right\}$ \\
    $\lnot \etype$ & $\left\{ (u, v) \mid \exists p : (u, p, v) \in E \land (\rho(u), p, \rho(v)) \notin \sem{\etype} \right\}$ \\
    \bottomrule
  \end{tabular}
\end{table}

\subsubsection{Graph constraints}

The graph constraints in the original PG-Schema are based on the constraints discussed in PG-Keys \cite{ABDF21}. 
Although the latter paper focuses on key constraints, it also discusses other closely related cardinality constraints. 
We capture these constraints here in the context of the common graph data model with the following formal definition.

\begin{definition}[PG-constraint]
 A \emph{PG-constraint} is a formula of one of the following three forms:
\begin{description}
    \item[Key:] $\forall x : \varphi(x) \Lleftarrow \Key \, \bar{y} : \psi(x, \bar{y})$
    \item[Upb:] $\forall x : \varphi(x) \to \exists^{\leq n} \, \bar{y} : \psi(x, \bar{y})$
    \item[Lwb:] $\forall x : \varphi(x) \to \exists^{\geq n} \, \bar{y} : \psi(x, \bar{y})$
\end{description}
where $x$ is a variable that ranges over nodes and values, $\varphi(x)$ and $\psi(x, \bar{y})$ are formulas of the form $\exists \bar{z} : \xi$ with $\bar{z}$ a vector of node and value variables and $\xi$ a conjunction of atoms of the form $\pathExpr(z_i, z_j)$ with $z_i$ and $z_j$ either equal to $x$, or in $\bar{y},$ or in $\bar{z}$, and $\pathExpr$ an extended PG-path expression, such that the free variables in $\varphi(x)$ are just $x$ and those in $\psi(x, \bar{y})$ are $x$ and the variables in $\bar{y}$. 
 \end{definition}

The semantics of the constraints of the form \textbf{Key} is the logical formula $\forall \bar{y} : \exists^{\leq 1} x : \varphi(x) \land \psi(x, \bar{y})$. This corresponds to the intuition that it defines a key constraint for all values or nodes $x$ that are selected by $\varphi(x)$ and for those it specifies that that the vector $\bar{y}$ for which $\psi(x, \bar{y})$ holds identifies at most one such $x$.
So the symbol $\Lleftarrow$ should be read here as stating that the left-hand side is functionally determined by the right-hand side.

For the constraints of the forms \textbf{Upb} and \textbf{Lwb} the interpretation is simply the usual one in first-order logic.

\subsection{Comparison with the original PG-Schema }


The PG-Schema on Common Graphs defined here introduces two important simplification w.r.t.\ original PG-Schema: (1) It is defined over common graphs which simplifies the property graph data model in several ways and (2) it assumes what is called the STRICT semantics of a graph type in  \cite{ABDF23} and ignores the LOOSE semantics. We briefly discuss here why these simplification preserve the essential characteristics of the original schema language.

\subsubsection{Concerning the simplification of the data model}

As discussed in Section~\ref{sect:PGCGComparison} common graphs simplify property graphs in three ways: (1) nodes only have properties and no labels, (2) edges only have one label and no properties, and (3) edges have no independent identity. 
However, these features can be readily simulated in the common graph data model.
For example, edges with identity can be simulated by nodes that have an outgoing edge with label $\Exkey{source}$ and an outgoing edge with label $\Exkey{target}$ to respectively the source node and the target node of the simulated edge.
Moreover, labels can be simulated by introducing a special dummy value $\Lambda$ that is used for keys that represent labels. 
For example, a node $n$ where $\rho(n)$ contains the pairs $(\Exkey{Person}, \Lambda)$, $(\Exkey{Employee}, \Lambda)$, $(\Exkey{hiringDate}, \textit{12-Dec-2023})$, and $(\Exkey{fulltime}, \Exkey{true})$, simulates a node with labels $\Exkey{Person}$ and $\Exkey{Employee}$, and properties $\Exkey{hiringDate}$ and $\Exkey{fulltime}$. 

It is not hard to see how under such a simulation PG-Schema on Common Graph could simulate a more powerful schema language where we could use tests in path expressions for the presence (or absence) of (combinations of) labels in path expressions and
tests for presence (or absence) of (combinations of) properties of edges.
Moreover, we could navigate via simulated edges and test for certain properties with a path expression of the form $\Exkey{source}^{-} \cdot \pathExpr \cdot \Exkey{target}$ where $\pathExpr$ simulates any test over the content of the edge.
Finally, we could straightforwardly simulate key and cardinality constraints for edges.

\subsubsection{Concerning the STRICT and LOOSE semantics} 

In the original PG-Schema there is a separate LOOSE semantics defined for graph types. 
In that case the set of node types and the set of edge types in the graph type are ignored and 
a property graph is said to be already valid wrt.\ a graph type if it satisfies all graph constraints in the graph type.
The LOOSE interpretation can be easily simulated in PG-Schema on Common Graphs
by letting the set of node types contain only $\top$, the trivial node type, 
and the set of edge types contain only $\et{\top}{\pwc}{\top}$, the trivial edge type.

\subsection{Comparison with Shape-based PG-Schema}

The constraints of the forms \textbf{Upb} and \textbf{Lwb} are very similar to the selector-shape pairs presented for 
PG-Schema in Section~\ref{sec:pgschema-simplified}. 
Indeed, the selector is in this case the formula $\varphi(x)$ and the shape is 
the formulas of the forms $\exists^{\leq n} \bar{y} : \psi(x, \bar{y})$ and $\exists^{\geq n} \bar{y} : \psi(x, \bar{y})$.
However, there are also several notable differences:
(1) The schema in Shape-based PG-Schema only consists of constraints and does not separately define sets of allowed node and edge types.
(2) There are no edge types in path expressions. 
(3) All constraints are restricted so that $\bar{y}$ is just a single variable.
(4) There are no constraints of the form \textbf{Key}.
(5) The constraints are syntactically restricted such that (a)
$\varphi(x)$ is restricted to just one atom, so the form $\exists z : \pathExpr(x, z)$, and (b)
$\psi(x, \bar{y})$ is restricted to just one atom, so the form $\pathExpr(x, y)$. It is this last restriction that allows a notation in description-logics style without variables.
Apart from these restrictions, there is also a generalisation, namely in Section~\ref{sec:pgschema-simplified} the shapes are closed under intersection.
That this does not change the expressive power is easy to see, since a selector-shape pair of the form $(\sel, (\varphi_1 \land \varphi_2))$ 
can always be replaced with the combination of the pairs  $(\sel, \varphi_1)$ and $(\sel, \varphi_2)$ without changing the semantics of the schema.

In the following subsections we discuss the previously mentioned restrictions.

\subsubsection{No separate sets of allowed node types and edge types}

It is not hard to show that this can be simulated. 
Assume for example we have a graph type with a set of node types $\{ \ntype_1, \ntype_2, \ntype_3 \}$.
The check that each node must be in the semantics of at least one of these node types can be simulated in PG-Schema on Common Graphs by 
the \textbf{Lwb} constraint $$\forall x : \top(x, x) \to \exists y : (\ntype_1 \tOr \ntype_2 \tOr \ntype_3)(x, y)$$ 
Note that node types are closed under the $\tOr$ operator, and so $(\ntype_1 \tOr \ntype_2 \tOr \ntype_3)$ is indeed a node type, and therefore an extended PG-Path expression in PG-Schema on Common Graphs.
Recall that a node type acts in a path expression as the identity relation restricted to nodes that are in the semantics of that type.

Similarly, if the set of edge types of a graph type is $\{ \etype_1, \etype_2, \etype_3 \}$, we can ensure that each edge is in the semantics of at least one of these edge types using the following \textbf{Upb} constraint in PG-Schema on Common Graphs: 
$$\forall x : \top(x, x) \to \exists^{\leq 0} y : \lnot(\etype_1 \tOr \etype_2 \tOr \etype_3)(x, y)\,.$$
Note that edge types are closed under union, and so $\etype_1 \tOr \etype_2 \tOr \etype_3$ is also an edge type, and in addition edge type can appear negated and extended PG-Path expressions, and so $\lnot(\etype_1 \tOr \etype_2 \tOr \etype_3)$ is indeed a valid path expression in PG-Schema on Common Graphs.

\subsubsection{No edge types in path expressions}

It is not hard to show that path expressions that contain tests involving edge types 
can be rewritten to equivalent path expressions that do not use edge types.

We first consider the non-negated edge types in path expressions. We start with the observation that we can normalise edge types to a union of edge types that do not contain the $\tOr$ operator.
This is based on the following equivalences for path semantics that allow us to push down the $\tOr$ operator:
\begin{itemize}
    \item $\et{(\ntype_1 \tOr \ntype_2)}{\alpha}{\ntype_3} \equiv ( \et{\ntype_1}{\alpha}{\ntype_3} \tOr \et{\ntype_2}{\alpha}{\ntype_3} )$
    \item $\et{\ntype_1}{\alpha}{(\ntype_2 \tOr \ntype_3)} \equiv ( \et{\ntype_1}{\alpha}{\ntype_2} \tOr \et{\ntype_1}{\alpha}{\ntype_3} )$
\end{itemize}
In a next normalisation step we can remove bottom-up the $\tAnd$ operator using the following rules, where we use the symbol $\etype_{\emptyset}$ to denote the empty edge type:
\begin{itemize}
    \item $(\et{\ntype_1}{\alpha}{\ntype_2}) \tAnd (\et{\ntype_3}{\beta}{\ntype_4}) \equiv \et{(\ntype_1 \tAnd \ntype_3)}{\alpha \sqcap \beta}{(\ntype_3 \tAnd \ntype_4)}$
\end{itemize}
where $\sqcap$ is defined such that (1) $\pwc \sqcap P = P \sqcap \pwc = P$ for $P \subseteq \Predicates$, and (2) $P \sqcap Q = P \cap Q$ for $P, Q \subseteq \Predicates$.

As a final normalisation step we get rid of edge types $\et{\ntype_1}{P}{\ntype_2}$ where $P$ contains two or more predicates, by applying the rule:
\begin{itemize}
    \item $\et{\ntype_1}{\{ p_1, \ldots, p_k \}}{\ntype_2} \equiv (\et{\ntype_1}{\{ p_1 \}}{\ntype_2} \tOr \ldots \tOr \et{\ntype_1}{\{ p_k \}}{\ntype_2})$
\end{itemize}

After these normalisation steps we will have rewritten the edge type to the form $(\etype_1 \tOr \ldots \tOr \etype_k)$ with each $\etype_i$ a \emph{primitive edge type} in the sense that it cannot be normalised further and therefore one of the following forms:
(1) $\et{\ntype_1}{\pwc}{\ntype_2}$, 
(2) $\et{\ntype_1}{\{ p \}}{\ntype_2}$, and 
(3) $\et{\ntype_1}{\emptyset}{\ntype_2}$.
We can express such an edge type $(\etype_1 \tOr \ldots \tOr \etype_k)$ as a path expression $(\pathExpr_1 \cup \ldots \cup \pathExpr_k)$, where each $\pathExpr_i$ is constructed as follows:
\begin{itemize}
    \item $\et{\ntype_1}{\pwc}{\ntype_2} \equiv \ntype_1 \cdot \neg \emptyset \cdot \ntype_2$
    \item $\et{\ntype_1}{\{ p \}}{\ntype_2} \equiv \ntype_1 \cdot p \cdot \ntype_2$
    \item $\et{\ntype_1}{\emptyset}{\ntype_2} \equiv \neg \top$
\end{itemize}
Recall that $\neg\top$ is the negation of the trivial node type and so in a path expression represents the empty binary relation.

We now turn our attention to negated edge types. The part under the negation can be normalised as before,
and so we end up with an edge type of the form $\neg (\etype_1 \tOr \ldots \tOr \etype_k)$ with each $\etype_i$ a primitive edge type.
This can be represented as a path expression $(\pathExpr_1 \cup \ldots \cup \pathExpr_{m})$ where each $\pathExpr_j$ is a path expression the expresses a particular reason that an edge might not conform to any of the types in $\etype_1, \ldots, \etype_k$.
To illustrate this consider as an example the following negated edge type:
\[\lnot \big( \et{\ntype_1}{\{ p \}}{\ntype_2} \tOr \et{\ntype'_1}{\{ p' \}}{\ntype'_2} \big)\]
This can be simulated in a path expression by replacing it with the following path expression:
\[\begin{array}{c@{\hspace{6pt}}c@{\hspace{6pt}}c@{\hspace{6pt}}c@{\hspace{6pt}}c@{\hspace{6pt}}c@{\hspace{6pt}}c}
& \lnot\ntype_1\cdot \lnot\ntype'_1 \cdot \lnot\emptyset & \cup &
 \lnot\ntype_1 \cdot \lnot\{p'\} & \cup &
 \lnot\ntype_1 \cdot \lnot\emptyset \cdot \lnot\ntype'_2 & \cup\\
\cup\;  & \lnot\ntype'_1 \cdot \lnot\{p\} & \cup &
\lnot\{p,p'\} & \cup &
  \lnot\{p\} \cdot \lnot\ntype'_2 & \cup\\
\cup \; & \lnot\ntype'_1\cdot  \lnot\emptyset \cdot \lnot\ntype_2 &\cup&
\lnot\{p'\} \cdot \lnot\ntype_2  &\cup &\;
\lnot\emptyset \cdot \lnot\ntype_2 \cdot \lnot\ntype'_2\;
\end{array}\]
Note that this indeed enumerates all the ways that an edge could not be in the semantics of $\big( \et{\ntype_1}{\{ p \}}{\ntype_2} \tOr \et{\ntype'_1}{\{ p' \}}{\ntype'_2} \big)$. Basically we pick for each of the primitive edge types whether the edge is not in the semantics because of (1) the source node, (2) the label, or (3) the target node.

\subsubsection{Only single variable counting}

The restriction to allow only one variable in $\bar{y}$ is introduced because in SHACL and ShEx all the counting is also restricted to single values and nodes, rather than tuples of values and nodes. 
Although this is often useful in real-world data modelling, e.g., to represent composite keys, this restriction is introduced to make PG-Schema more comparable to SHACL and ShEx.

\subsubsection{No \textbf{Key} constraints}
If \textbf{Key} constraint are restricted to single-variable counting, \textbf{Key} constraints are of the form \[\forall x : \varphi(x) \Lleftarrow \Key \, y : \psi(x, y)\,.\] 
Recall that its semantics is defined by the formula $\forall y : \exists^{\leq 1} x : \varphi(x) \land \psi(x, y)$. If $y$ matches nodes (which can be detected based on path expressions used in the atoms involving $y$), we can equivalently express this constraint in PG-Schema for Common Graphs as
\[
\forall y : \top(y, y) \to \exists^{\leq 1} x: \varphi(x)\land\psi(x,y)
\]

If $y$ matches values, then it is used in the first position of an atom whose path expressions begins from $k^{-}$ or in the second position of an atom whose path expression ends with $k$. In either case, we can equivalently express this constraint in PG-Schema for Common Graphs as
\[
\forall y : (k^{-}\!\cdot k) (y', y) \to \exists^{\leq 1} x: \varphi(x)\land\psi(x,y)
\]

\subsubsection{Only one atom in formulas}

This restriction of the query language underlying PG-Schema for Common Graphs limits the expressive power of PG-Schema, but similar restrictions are present in SHACL and ShEx. Some additional expressive power could be gained by allowing tree-shaped conjunctions of atoms with at most 2 free  variables, but this would further complicate the formal development.

\section{More on the core}
\label{sec:appendix-core}

In this section we prove Proposition~\ref{prop:core}. Recall that common shapes are defined by the grammar 
\begin{align*}
\varphi  \gDef  & 
 \exists\,\pexpr
\gMid \exists^{\leq n} \, \pexpr_1
\gMid \exists^{\geq n} \, \pexpr_1 \gMid 
\exists\, \ntype \land \not\exists\, \lnot P
\gMid \varphi \land \varphi \gEnd\\
\ntype \gDef &\closedRT{}\  \gMid\  \closedRT{k : \vtype} \ \gMid\   \ntype \tAnd \ntype  \ \gMid \  \ntype \tOr \ntype  \gEnd\\
\pexpr_0 \gDef & 
\keyIsVal{k}{c} \gMid 
\lnot \keyIsVal{k}{c} \gMid 
\ntype\tAnd\top\gMid \lnot (\ntype\tAnd\top) \gMid \pexpr_0 \cdot \pexpr_0 \gEnd\\
\pexpr_1 \gDef &  \pexpr_0  \cdot p \cdot
\pexpr_0 
\gMid  \pexpr_0  \cdot p^{-} \cdot
\pexpr_0  \gMid \pexpr_0\cdot k \gMid k^{-}\cdot \pexpr_0 \gEnd\\
\ppexpr \gDef & \pexpr_0  \gMid p 
\gMid {\ppexpr}^{-} \gMid \ppexpr \cdot \ppexpr 
\gMid \ppexpr \cup \ppexpr \gEnd \\
\pexpr \gDef & \ppexpr \gMid \ppexpr \cdot k \gMid k^{-}\cdot \ppexpr \gMid k^{-}\cdot \ppexpr\cdot k' \gEnd
\end{align*}
where $n \in \mathbb{N}$, $P\subseteq_{\mathit{fin}} \Predicates$, $k,k'\in\Keys$, $c\in\Values$, and $p\in\Predicates$. We will refer to  PG-path expressions defined by the nonterminal $\pexpr_0$ in the grammar as  \emph{filters}.





The following two subsections describe the translations of common schemas to SHACL and ShEx. The translations are very similar but we include them both for the convenience of the reader.

\subsection{Translation to SHACL}

\todo[inline]{Cem: Explain here that open type is just a short-hand for $\ntype \&  \top$. It could also be mentioned directly in the Core section, that the core language only allows for open types (or arbitrary content types with a ``guard'' to ensure closedness). }

\begin{lemma}
\label{lem:contents-shacl}
For each open content type $\ntype$ there is a SHACL shape $\varphi_{\ntype}$ such that  $\graph, v\models \varphi_{\ntype}$ iff $\rho(v) \in \sem{\ntype}$ for all  $\graph=(E,\rho)$ and $v\in\nodes(\graph)$.
\end{lemma}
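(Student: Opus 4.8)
The plan is to proceed by structural induction on the content type, exploiting the fact that an open content type always has the form $\ntype \tAnd \top$ for a \emph{closed} content type $\ntype$ (one generated by $\closedRT{}$, $\closedRT{k:\vtype}$, $\tAnd$, and $\tOr$). The first thing I would establish is the semantic reformulation that $\rho(v) \in \sem{\ntype \tAnd \top}$ holds exactly when some sub-record of the node's content matches the closed part $\ntype$:
\[
\rho(v) \in \sem{\ntype \tAnd \top} \quad\Longleftrightarrow\quad \text{there is } r \subseteq \rho(v) \text{ with } r \in \sem{\ntype}\,.
\]
The forward direction is immediate from the semantics of $\tAnd$; for the converse, given $r \in \sem{\ntype}$ with $r \subseteq \rho(v)$, take $r_1 = r$ and $r_2 = \rho(v) \in \sem{\top}$, so that $r_1 \cup r_2 = \rho(v)$ witnesses membership.

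From this reformulation I would read off three identities that make the induction go through cleanly, turning the generative $\tAnd/\tOr$ on records into ordinary Boolean operations on sets of nodes:
\[
\sem{\closedRT{} \tAnd \top} = \Records, \qquad
\sem{(\ntype_1 \tAnd \ntype_2)\tAnd\top} = \sem{\ntype_1\tAnd\top} \cap \sem{\ntype_2\tAnd\top}, \qquad
\sem{(\ntype_1 \tOr \ntype_2)\tAnd\top} = \sem{\ntype_1\tAnd\top} \cup \sem{\ntype_2\tAnd\top}\,.
\]
The first holds because $\emptyRec \subseteq \rho(v)$ always; the second because $r \subseteq \rho(v)$ with $r = s_1 \cup s_2$ and $s_i \in \sem{\ntype_i}$ is equivalent to having both $s_1 \subseteq \rho(v)$ and $s_2 \subseteq \rho(v)$ (two sub-records of the same $\rho(v)$ automatically agree on shared keys, so their union is a valid sub-record); the third is direct from $\sem{\ntype_1 \tOr \ntype_2} = \sem{\ntype_1} \cup \sem{\ntype_2}$.

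I would then define $\varphi_{\ntype\tAnd\top}$ by induction on $\ntype$: set $\varphi_{\closedRT{}\tAnd\top} = \top$, set $\varphi_{\closedRT{k:\vtype}\tAnd\top} = \geqn{1}{k}\test(\vtype)$, and put $\varphi_{(\ntype_1\tAnd\ntype_2)\tAnd\top} = \varphi_{\ntype_1\tAnd\top} \land \varphi_{\ntype_2\tAnd\top}$ and $\varphi_{(\ntype_1\tOr\ntype_2)\tAnd\top} = \varphi_{\ntype_1\tAnd\top} \lor \varphi_{\ntype_2\tAnd\top}$. The base case is the one computation to check: since $k$ is a key, the set $\{u \mid (v,u) \in \iexpr{k}{\graph}\}$ is either empty or the singleton $\{\rho(v,k)\}$, so $\geqn{1}{k}\test(\vtype)$ holds at $v$ iff $\rho(v,k)$ is defined and lies in $\sem{\vtype}$, which is precisely $\rho(v) \in \sem{\closedRT{k:\vtype}\tAnd\top}$. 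Correctness of the inductive cases is then immediate from Table~\ref{tab:semphi2} together with the intersection/union identities above.

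The step I expect to be the real crux is not the calculation but recognising \emph{why the lemma is stated for open content types}: SHACL's $\closed(Q)$ closes predicates and keys simultaneously and only against a finite $Q$, so a genuinely closed content type—one that forbids \emph{all} other keys—cannot be captured directly. The $\tAnd\top$ is exactly what rescues the translation, since it replaces the rigid condition "the content equals a record in $\sem{\ntype}$" by the monotone condition "the content contains a sub-record in $\sem{\ntype}$", and monotone conditions are expressible using only existential counting quantifiers together with $\land$ and $\lor$. Making this precise via the sub-record reformulation is the conceptual heart of the argument; everything else is a routine induction.
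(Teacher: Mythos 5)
Your proof is correct, and it takes a genuinely different route from the paper's. The paper normalizes: it asserts that every open content type other than $\top$ can be rewritten as $(\ntype_1 \tOr \dots \tOr \ntype_\ell)\tAnd\top$ with each $\ntype_i$ a conjunction of atoms $\closedRT{k_j:\vtype_j}$, translates such a conjunction to $\exists\,k_1.\test(\vtype_1)\land\dots\land\exists\,k_m.\test(\vtype_m)$, and takes the disjunction over the disjuncts; the DNF claim itself is asserted without proof (the source even flags it as a known weak spot). You avoid normalization altogether: the sub-record characterization of $\sem{\ntype\tAnd\top}$ and the two identities it yields (under the open guard, $\tAnd$ becomes intersection and $\tOr$ becomes union) support a direct structural induction, whose only nontrivial base case is $\closedRT{k:\vtype}\tAnd\top \mapsto \exists^{\geq 1}k.\test(\vtype)$ --- the same shape the paper uses per key. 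Both arguments turn on the same observation, which you isolate explicitly and the paper uses tacitly: two sub-records of the same $\rho(v)$ agree on shared keys, so their union is again a sub-record of $\rho(v)$; this is precisely what makes the paper's key-by-key translation of conjunctive types sound, and it is also what makes your intersection identity (equivalently, distributivity of $\tAnd$ over $\tOr$, hence the existence of the DNF) true. What your approach buys is self-containedness: the compositional identities do exactly the work of the missing normal-form lemma. What the paper's approach buys is the explicit DNF representation, which it reuses verbatim in Lemma~\ref{lem:shapes-shacl} to handle genuinely closed content types in shapes of the form $\exists\,\ntype\land\not\exists\,\lnot P$, where closure is expressed via $\closed(\cdot)$ and your monotonicity-based reformulation no longer applies; if you extended your development to that lemma, you would need either the DNF after all or a separate argument there.
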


\begin{proof}
For the content type $\top$ the corresponding SHACL shape is $\top$.
For a content type of the form \[\closedRT{k_1:\vtype_1}\tAnd\closedRT{k_2:\vtype_2}\tAnd \dots \tAnd \closedRT{k_m :\vtype_m}\tAnd \top\,,\] the corresponding SHACL shape is \[\exists\, k_1. \test(\vtype_1) \land \exists\, k_2. \test(\vtype_2) \land \dots \land \exists\, k_m. \test(\vtype_m)\,.\]

\todo[inline]{Cem: maybe a technical lemma that shows why/how every content type can be represented in the needed \emph{normal form} would make the proofs of this section easier to follow. Feels strange to just leave a (mostly, but not entirely trivial) claim open in a proof.}

Finally, every open content type different from $\top$ can be expressed as 
\[(\ntype_1 \tOr \dots \tOr \ntype_\ell)\tAnd \top\,,\]
where each $\ntype_i$ is a  content type of the form $\closedRT{k_1:\vtype_1}\tAnd\closedRT{k_2:\vtype_2}\tAnd \dots \tAnd \closedRT{k_m :\vtype_m}$ for some $m$.
The corresponding SHACL shape is 
\[\varphi_1 \lor \dots \lor \varphi_\ell\,,\]
where $\varphi_i$ is the SHACL shape corresponding to the content type $\ntype_i \tAnd \top$.
\end{proof}

\begin{lemma}
\label{lem:filter-shacl} For each filter $\pexpr_0$ there is a SHACL shape $\varphi_{\pexpr_0}$ such that  $\graph, v\models \varphi_{\pexpr_0}$ iff $(v,v) \in \sem{\pexpr_0}^\graph$ for all  $\graph$ and $v\in\nodes(\graph)$. 
\end{lemma}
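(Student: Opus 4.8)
The plan is to proceed by structural induction on the filter $\pexpr_0$, after first recording a structural property that makes the composition case manageable. I would first show, by a routine induction following the grammar for $\pexpr_0$, that $\gsem{\pexpr_0} \subseteq \{(u,u)\mid u\in\nodes(\graph)\}$ for every $\graph$; that is, every filter denotes a subset of the diagonal restricted to nodes. Each of the four atomic productions has this shape by direct inspection of Table~\ref{tab:semPGtypes}, and a composition of two diagonal subsets is again a diagonal subset. The useful consequence is that, for filters, composition behaves like intersection: $(v,v)\in\gsem{\pexpr_0\cdot\pexpr_0'}$ iff $(v,v)\in\gsem{\pexpr_0}$ and $(v,v)\in\gsem{\pexpr_0'}$, since any witness $w$ for the composition is forced to equal $v$ on both sides.

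For the base cases I would translate as follows. For $\keyIsVal{k}{c}$ take $\varphi_{\keyIsVal{k}{c}} = \exists\, k \ldotp \hasvalue(c)$: for $v\in\nodes(\graph)$ this shape holds exactly when $\rho(v,k)=c$, i.e.\ when $(k,c)\in\rho(v)$, which matches the semantics of $\keyIsVal{k}{c}$ on nodes. For $\lnot\keyIsVal{k}{c}$ I would negate this shape, using closure of SHACL shapes under $\neg$ together with the fact that the lemma only ranges over $v\in\nodes(\graph)$, so the node-restriction built into the filter semantics is respected automatically. For the open content type $\ntype\tAnd\top$ I would invoke Lemma~\ref{lem:contents-shacl} to obtain a SHACL shape $\varphi_{\ntype\tAnd\top}$ with $\graph,v\models\varphi_{\ntype\tAnd\top}$ iff $\rho(v)\in\sem{\ntype\tAnd\top}$, which is precisely the filter semantics; its negation handles $\lnot(\ntype\tAnd\top)$.

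For the inductive (composition) case $\pexpr_0\cdot\pexpr_0'$, the induction hypothesis supplies $\varphi_{\pexpr_0}$ and $\varphi_{\pexpr_0'}$, and I would set $\varphi_{\pexpr_0\cdot\pexpr_0'}=\varphi_{\pexpr_0}\land\varphi_{\pexpr_0'}$. Correctness is then immediate from the preliminary observation that composition of filters reduces to intersection on the diagonal, combined with the semantics of $\land$ from Table~\ref{tab:semphi2}.

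The main point to get right is exactly this reduction of $\cdot$ to $\land$: without the observation that filters never leave the current node, composition would in general demand existential navigation to an intermediate node that a plain conjunction of unary shapes cannot express, so the lemma genuinely hinges on the diagonal property established first. The remaining obligations—closure of SHACL under $\land$ and $\neg$, and the bookkeeping around restricting to $\nodes(\graph)$—are routine and can be dispatched by unfolding the relevant rows of Tables~\ref{tab:seme2} and~\ref{tab:semphi2}.
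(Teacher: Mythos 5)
Your proof is correct and follows essentially the same route as the paper's: the same base-case translations ($\exists\, k \ldotp \hasvalue(c)$ for $\keyIsVal{k}{c}$, Lemma~\ref{lem:contents-shacl} for open content types, closure under $\neg$ for the negated atoms) and the same reduction of concatenation to conjunction. The only difference is that you explicitly prove the diagonal property that justifies this reduction, whereas the paper states ``concatenations of filters correspond to conjunctions of shapes'' without spelling it out---a worthwhile bit of added rigor, but not a different argument.
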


\begin{proof} By Lemma~\ref{lem:contents-shacl}, the claim holds for $\pexpr_0=\ntype\tAnd \top$. For $\{k:c\}$ the corresponding SHACL shape is $\exists k.\hasvalue(c)$. 
As SHACL shapes are closed under negation, the claim holds for $\lnot \{k:c\}$ and $\lnot(\ntype\tAnd \top)$.
Finally, concatenations of filters correspond to conjunctions of shapes, so the claim follows because SHACL shapes are closed under conjunction. 
\end{proof}

\begin{lemma}
\label{lem:paths-shacl}
For each common shape of the form $\exists\,\pexpr$ there is a SHACL shape $\varphi_{\exists\pexpr}$ such that  $\graph, v\models \varphi_{\exists\pexpr}$ iff $\graph,v\models \exists\, \pexpr$ for all $\graph$ and $v\in\nodes(\graph) \cup \values(\graph)$. 
\end{lemma}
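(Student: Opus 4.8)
The plan is to translate $\exists\,\pexpr$ by a continuation-passing induction on the structure of the star-free path expression $\pexpr$. Recall that by the remark after Definition~\ref{def:simple-shape} the path expressions occurring in $\exists\,\pexpr$ use neither ${}^{*}$ nor $\lnot P$, so their only atoms are filters $\pexpr_0$, predicates $p$, and keys $k$. Concretely, I would prove the stronger statement that for every such $\pexpr$ and every SHACL shape $\psi$ there is a SHACL shape $T(\pexpr,\psi)$ with
\[
  \graph, v \models T(\pexpr,\psi)
  \quad\text{iff}\quad
  \exists v' \colon (v,v')\in\gsem{\pexpr}\ \text{and}\ \graph, v'\models\psi,
\]
for all $\graph$ and $v\in\nodes(\graph)\cup\values(\graph)$; the lemma then follows by taking $\varphi_{\exists\pexpr}=T(\pexpr,\top)$. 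A crucial observation I would use is that SHACL can detect whether an element is a value: $\graph, v\models\test(\any)$ holds exactly when $v\in\Values$, so $\neg\test(\any)$ selects precisely the nodes.

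Before defining $T$, I would normalise away nested inverses. Using that $\gsem{\cdot}$ validates $(\ppexpr_1\cdot\ppexpr_2)^{-}\equiv\ppexpr_2^{-}\cdot\ppexpr_1^{-}$, $(\ppexpr_1\cup\ppexpr_2)^{-}\equiv\ppexpr_1^{-}\cup\ppexpr_2^{-}$, $(\ppexpr^{-})^{-}\equiv\ppexpr$, and $(\pexpr_0)^{-}\equiv\pexpr_0$ (filters are subrelations of the identity, hence self-inverse), every inverse is pushed down to the atoms, so that ${}^{-}$ is applied only to predicates. I would then define $T$ by
\begin{align*}
  T(\pexpr_0,\psi) &= \varphi_{\pexpr_0}\land\neg\test(\any)\land\psi, & T(p,\psi) &= \exists\, p . \psi, \\
  T(p^{-},\psi) &= \exists\, p^{-} . \psi, & T(\ppexpr_1\cup\ppexpr_2,\psi) &= T(\ppexpr_1,\psi)\lor T(\ppexpr_2,\psi),
\end{align*}
together with $T(\ppexpr_1\cdot\ppexpr_2,\psi)=T(\ppexpr_1,T(\ppexpr_2,\psi))$ and, for the outer key steps, $T(\ppexpr\cdot k,\psi)=T(\ppexpr,\exists\, k . \psi)$, $T(k^{-}\cdot\ppexpr,\psi)=\exists\, k^{-} . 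T(\ppexpr,\psi)$, and $T(k^{-}\cdot\ppexpr\cdot k',\psi)=\exists\, k^{-} . T(\ppexpr,\exists\, k' . \psi)$. Here $\varphi_{\pexpr_0}$ is the shape provided by Lemma~\ref{lem:filter-shacl}. The result is a genuine SHACL shape, since it only combines shapes by $\land,\lor,\neg$ and by counting quantifiers over the single-step paths $p,p^{-},k,k^{-}$.

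Correctness is then a structural induction using Tables~\ref{tab:seme2} and~\ref{tab:semphi2}: the concatenation clause unfolds the existential witness in the middle of the path, the union clause splits the witness, and the atomic clauses match $\exists^{\geq 1}q . \psi$ against $\gsem{q}$. The step I expect to be the genuine obstacle---and the reason the filter clause carries the guard $\neg\test(\any)$---is the node/value typing demanded by the quantification over $v\in\nodes(\graph)\cup\values(\graph)$. Since a filter relates only nodes to themselves, $\exists\,\pexpr_0$ must be \emph{false} at every value; yet $\varphi_{\pexpr_0}$ (\eg\ for $\lnot\keyIsVal{k}{c}$, or for a content type satisfied by the empty record) may hold at a value, so the guard is essential. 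It is harmless in the middle of a path, because a preceding navigation step already forces a node. The symmetric cases---paths that must \emph{start} at a value, namely $k^{-}\cdot\ppexpr$ and $k^{-}\cdot\ppexpr\cdot k'$---are guarded automatically, as $\exists\, k^{-} . (\cdots)$ is false at every node; likewise a leading predicate or key makes the shape false at values. Once this typing bookkeeping is settled, the remaining inductive cases are routine.
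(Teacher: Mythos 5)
Your proof is correct, and it is organised differently from the paper's in a way that is worth spelling out. The paper first normalises: since the path expressions are star-free, it assumes $\pexpr$ is a union of concatenations of filters and atoms $p$, $p^{-}$ (with keys only at the ends), then inducts on the length of each concatenation (translating $\exists\,\pexpr_0\cdot\pexpr$ to $\varphi_{\pexpr_0}\land\varphi_{\exists\pexpr}$, $\exists\,p\cdot\pexpr$ to $\exists\,p.\varphi_{\exists\pexpr}$, etc.), and finally dispatches the top-level union by $\lor$. Your continuation-passing translation $T(\pexpr,\psi)$, proved for an \emph{arbitrary} continuation $\psi$, reaches the same nested SHACL shapes by structural induction and makes the union-of-concatenations normal form unnecessary, since union and concatenation commute with any continuation; the inductive invariant is also explicit where the paper leaves it implicit. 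Both arguments rest on the same two pillars: Lemma~\ref{lem:filter-shacl} for filters, and closure of SHACL shapes under Boolean connectives and counting over single-step paths.

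The more significant difference is your guard $\neg\test(\any)$, which repairs a defect that the paper's proof genuinely has. The paper translates the base case $\exists\,\pexpr_0$ to $\varphi_{\pexpr_0}$ alone, but Lemma~\ref{lem:filter-shacl} constrains $\varphi_{\pexpr_0}$ only on $\nodes(\graph)$, and for filters such as $\lnot\keyIsVal{k}{c}$ or $\closedRT{}\tAnd\top$ the resulting shape holds at every value, whereas $\exists\,\pexpr_0$ never holds at a value (filters relate nodes only). So the paper's $\varphi_{\exists\pexpr}$ violates the clause ``for all $v\in\nodes(\graph)\cup\values(\graph)$'' in exactly the corner case you flagged, and the defect is not cosmetic, because Lemma~\ref{lem:schemas-shacl} pairs value-selecting selectors of the form $\exists\,k^{-}\!\cdot\pexpr$ with translated shapes. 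Concretely, the common schema consisting of the single pair
\[
\bigl(\exists\,k^{-}\!\cdot(\closedRT{}\tAnd\top),\;\exists\,(\closedRT{}\tAnd\top)\bigr)
\]
is violated by every graph in which some node has key $k$, yet the paper's construction yields the SHACL pair $\bigl(\exists\,k^{-}.\top,\;\lnot(\exists\,k^{-}.\top)\lor\top\bigr)$, which every graph satisfies. Your translation puts $\neg\test(\any)$ in place of that final $\top$ and validates exactly the right graphs, so your typing bookkeeping (guard at filters, with leading $p$, $p^{-}$, $k$, $k^{-}$ steps self-guarding) is not optional but necessary for the lemma as stated and for Proposition~\ref{prop:core} downstream.
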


\begin{proof}
Let us first look at common shapes of the form $\exists\,\pexpr$ where $\pexpr$ is a concatenation of filters and atomic path expressions of the form  $p$, $p^{-}$, $k$, or $k^{-}$. Without loss of generality we can assume that the concatenation ends with a filter or with $k$.
We proceed by induction on the length of the concatenation. The base cases are $\exists \pi_0$ and $\exists k$, which correspond to $\varphi_{\pi_0}$ (Lemma~\ref{lem:filter-shacl}) and $\exists k.\top$. 
For $\exists\, \pexpr_0\cdot\pexpr$ we can take $\varphi_{\pexpr_0} \land \varphi_{\exists \pi}$. For $\exists\, p\cdot\pexpr$ we can take $\exists p.\varphi_{\exists \pi}$, and similarly for $\exists\, p^{-}\cdot\pexpr$ and $\exists\, k^{-}\cdot\pexpr$. 

The general case follows because SHACL shapes are closed under union. Indeed, because our PG-path expressions are star-free, we can assume without loss of generality that in each common shape of the form $\exists\, \pexpr$, the PG-path expression $\ppexpr$ underlying $\pexpr$ is a union of concatenations of filters and atomic path expressions of the form $p$ or $p^{-}$. Then, for \[\exists\, k^{-}\cdot(\pexpr^1 \cup \dots \cup\pexpr^m)\cdot k'\] we can take \[\varphi_{\exists k^{-}\cdot\pexpr^1\cdot k'} \lor \dots \lor \varphi_{\exists k^{-}\cdot\pexpr^m\cdot k}\,.\] Simiarly for $\exists\, k^{-}\cdot(\pexpr^1 \cup \dots \cup\pexpr^m)$, $\exists\, (\pexpr^1 \cup \dots \cup\pexpr^m)\cdot k'$, and $\exists\, (\pexpr^1 \cup \dots \cup\pexpr^m)$.
\end{proof}

\todo[inline]{Cem: in \Cref{lem:paths-shacl}, it seems strange to me to use natural induction on the length of paths (where the use of IH in the step case is left very implicit) and not opt for the more obvious choice of a structural induction on paths. Space saving measure? }

\begin{lemma}
\label{lem:shapes-shacl}
For each common shape $\varphi$ there is a SHACL shape $\hat \varphi$ such that  $\graph, v\models \varphi$ iff $\graph,v\models \hat \varphi$ for all $\graph$ and $v\in\nodes(\graph)  \cup \values(\graph)$. 
\end{lemma}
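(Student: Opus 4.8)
The plan is to argue by structural induction on the common shape $\varphi$, following the grammar of \Cref{def:simple-shape}. Two of the five productions are immediate. If $\varphi = \exists\,\pexpr$, then \Cref{lem:paths-shacl} already supplies the required SHACL shape $\varphi_{\exists\pexpr}$. If $\varphi = \varphi_1\land\varphi_2$, the induction hypothesis yields $\hat\varphi_1,\hat\varphi_2$, and since SHACL shapes are closed under conjunction we may take $\hat\varphi=\hat\varphi_1\land\hat\varphi_2$. What remains are the two counting productions $\exists^{\geq n}\pexpr_1$, $\exists^{\leq n}\pexpr_1$ and the closure production $\exists\,\ntype\land\not\exists\,\lnot P$; these are exactly the places where the single-edge restriction on $\pexpr_1$ and the shape of closed content types are used.

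For the counting productions, recall that $\pexpr_1$ traverses a single edge or property, so it is one of $\pexpr_0\cdot p\cdot\pexpr_0'$, $\pexpr_0\cdot p^{-}\cdot\pexpr_0'$, $\pexpr_0\cdot k$, or $k^{-}\cdot\pexpr_0$, with $\pexpr_0,\pexpr_0'$ filters. For $\pexpr_1=\pexpr_0\cdot p\cdot\pexpr_0'$, a pair $(v,u)$ lies in $\gsem{\pexpr_1}$ iff $v$ satisfies the filter $\pexpr_0$, there is a $p$-edge from $v$ to $u$, and $u$ satisfies $\pexpr_0'$; hence the number of witnesses $u$ equals the number of $p$-successors of $v$ satisfying $\pexpr_0'$ when $v$ satisfies $\pexpr_0$, and is $0$ otherwise. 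Using \Cref{lem:filter-shacl} to realise the filters as SHACL shapes $\varphi_{\pexpr_0},\varphi_{\pexpr_0'}$, we set $\hat\varphi = \varphi_{\pexpr_0}\land\exists^{\geq n}p.\varphi_{\pexpr_0'}$ for the $\geq$ variant and $\hat\varphi=\neg\varphi_{\pexpr_0}\lor\exists^{\leq n}p.\varphi_{\pexpr_0'}$ for the $\leq$ variant; the guard $\varphi_{\pexpr_0}$ (respectively $\neg\varphi_{\pexpr_0}$) accounts for the empty answer set the PG-path semantics returns when the leading filter fails. The case $\pexpr_0\cdot p^{-}\cdot\pexpr_0'$ is identical with the SHACL inverse path $p^{-}$. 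For $\pexpr_0\cdot k$ and $k^{-}\cdot\pexpr_0$ the endpoint on the value side carries no content, so the corresponding filter is vacuous; single-valuedness of $\rho$ makes the count lie in $\{0,1\}$, and the guarded shapes $\varphi_{\pexpr_0}\land\exists^{\geq n}k.\top$, $\neg\varphi_{\pexpr_0}\lor\exists^{\leq n}k.\top$ (and, without a guard, $\exists^{\geq n}k^{-}.\varphi_{\pexpr_0}$, $\exists^{\leq n}k^{-}.\varphi_{\pexpr_0}$) work uniformly, since SHACL counting over a single-valued key is automatically unsatisfiable or trivially true for out-of-range thresholds. All resulting expressions are SHACL shapes because SHACL is closed under $\land$, $\lor$, $\neg$ and admits counting over single edges and keys.

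The remaining production is the closure shape $\exists\,\ntype\land\not\exists\,\lnot P$, which holds at a node $v$ exactly when $\rho(v)\in\sem{\ntype}$ for the closed content type $\ntype$ and $v$ has no outgoing predicate edge with label outside $P$. The key step is to normalise $\ntype$ (by distributing $\tAnd$ over $\tOr$, as in \Cref{lem:contents-shacl}) into a disjunction $\ntype_1\tOr\cdots\tOr\ntype_\ell$ in which each $\ntype_i$ is a conjunction $\closedRT{k_1:\vtype_1}\tAnd\cdots\tAnd\closedRT{k_m:\vtype_m}$ fixing a specific key set $K_i=\{k_1,\dots,k_m\}$; then $\rho(v)\in\sem{\ntype_i}$ says precisely that $v$ carries exactly the keys in $K_i$ with values of the matching types. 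This is where SHACL's $\closed$ construct is needed: the disjunct for $\ntype_i$ is
\[
\Big(\textstyle\bigwedge_{j} \exists\, k_j.\test(\vtype_j)\Big)\land \closed(P\cup K_i),
\]
and $\hat\varphi$ is the disjunction of these over $i=1,\dots,\ell$. The presence/type tests force the keys of $K_i$ with the correct types (single-valuedness guaranteeing consistency even when a key is repeated), while $\closed(P\cup K_i)$ forbids every outgoing key outside $K_i$ and, because $\Keys$ and $\Predicates$ are disjoint, every outgoing predicate outside $P$.

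I expect the closure case to be the main obstacle. It is the only point where SHACL's $\closed$ enters, and the crux is the observation that a single $\closed(P\cup K_i)$ simultaneously enforces the key-closure demanded by the closed content type and the predicate-closure to $P$ demanded by $\not\exists\,\lnot P$ — precisely the combined effect of the common-shape construct. A secondary subtlety, already resolved by the filter guards above, is the mismatch between PG-path semantics (which yields the empty set when a leading filter fails) and a bare SHACL counting quantifier (which would still count successors of such a node); this is why the $\geq$ and $\leq$ translations are guarded asymmetrically by $\varphi_{\pexpr_0}$ and $\neg\varphi_{\pexpr_0}$.
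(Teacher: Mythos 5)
Your proposal follows the paper's own proof essentially step by step: the same reduction to atomic common shapes via closure of SHACL under conjunction, the same asymmetric guards $\varphi_{\pexpr_0}\land\exists^{\geq n}p.\varphi_{\pexpr_0'}$ versus $\lnot\varphi_{\pexpr_0}\lor\exists^{\leq n}p.\varphi_{\pexpr_0'}$ in the counting cases, and the same treatment of the closure case: normalise the closed content type into a disjunction of flat conjunctions with key sets $K_i$ and translate each disjunct as $\bigwedge_{j}\exists\,k_j.\test(\vtype_j)\land\closed(P\cup K_i)$, exploiting disjointness of $\Keys$ and $\Predicates$.

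The one genuine slip is the corner case $n=0$ in the guarded $\geq$ translations. The common shape $\exists^{\geq 0}\pexpr_1$ is satisfied by \emph{every} node and value, because the witness count is trivially $\geq 0$ --- including when the leading filter fails and $\gsem{\pexpr_1}$ contributes no pairs at the focus. Your translation $\varphi_{\pexpr_0}\land\exists^{\geq 0}p.\varphi_{\pexpr_0'}$ still demands $\varphi_{\pexpr_0}$, so it is strictly stronger than the source shape when $n=0$; indeed, your own observation that a failed filter yields count $0$ shows the guard is sound only for $n>0$. The paper avoids this by translating $\exists^{\geq 0}\pexpr_1$ to $\top$ outright and assuming $n>0$ thereafter; the same one-line fix repairs your argument (the $\leq$ translations and the unguarded $k^{-}\cdot\pexpr_0$ case are unaffected, since there the guard $\lnot\varphi_{\pexpr_0}$, respectively the absence of a guard, gives the right behaviour at every $n$).
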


\begin{proof} 
Because SHACL shapes are closed under conjunction, it suffices to prove the claim for the atomic common shapes of the forms $\exists\, \pexpr$, $\exists^{\leq n}\pexpr_1$, $\exists^{\geq n}\pexpr_1$, and $\exists \ntype \land \not\exists\lnot P$. The first case follows from Lemma~\ref{lem:paths-shacl}. 

Let us look at common shapes of the form
$\exists^{\geq n}\pexpr_1$. If $n=0$ we can simply take $\top$. Suppose $n>0$. Then, for \[\exists^{\geq n} \pexpr_0\cdot p\cdot \pexpr'_0\] we can take \[\varphi_{\pexpr_0} \land \exists^{\geq n}p. \varphi_{\pexpr'_0},\] and similarly for $\exists^{\geq n}\pexpr_0\cdot p^{-}\cdot \pexpr'_0$, $\exists^{\geq n}\pexpr_0\cdot k^{-}\cdot \pexpr'_0$, and $\exists^{\geq n}\pexpr_0\cdot k$ (using $\top$ instead of $\varphi_{\pexpr'_0}$). 

Next, we consider common shapes of the form 
$\exists^{\leq n}\pexpr_1$. For \[\exists^{\leq n} \pexpr_0\cdot p\cdot \pexpr'_0\] we can take \[\lnot\varphi_{\pexpr_0} \lor \exists^{\leq n}p. \varphi_{\pexpr'_0},\] and similarly for $\exists^{\leq n}\pexpr_0\cdot p^{-}\cdot \pexpr'_0$, $\exists^{\leq n}\pexpr_0\cdot k^{-}\cdot \pexpr'_0$, and $\exists^{\leq n}\pexpr_0\cdot k$ (again, using $\top$ instead of $\varphi_{\pexpr'_0}$).

Finally, let us consider a common shape of the form $\exists\, \ntype \land \not\exists\,\lnot P$. Suppose first that \[\ntype = \closedRT{}\,.\] Then, the corresponding SHACL shape is simply \[\closed(P)\,.\] Next, suppose that \[\ntype = \closedRT{k_1:\vtype_1}\tAnd \dots \tAnd \closedRT{k_m :\vtype_m}\,.\] 
Then, the corresponding SHACL shape is 
\[
\exists k_1.\test(\vtype_1) \land \dots \land \exists k_m.\test(\vtype_m) \land \closed\big(\{k_1, \dots,k_m\} \cup P\big)\,.\]
In general, as in Lemma~\ref{lem:contents-shacl}, we can assume that 
\[\ntype = \ntype_1 \tOr \dots \tOr \ntype_m\] where each $\ntype_i$ is of one of the two forms considered above. The corresponding SHACL shape is then 
\[\varphi_1 \lor \dots \lor \varphi_m\] where $\varphi_i$ is the SHACL shape corresponding to $\exists \ntype_i \land \not\exists\lnot P$, obtained as described above. 
\end{proof}

\begin{lemma}
\label{lem:schemas-shacl}
For every common schema there is an equivalent SHACL schema.  
\end{lemma}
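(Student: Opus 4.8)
The plan is to reduce the statement to Lemma~\ref{lem:shapes-shacl} together with the idea, illustrated in Example~\ref{ex:PathSelector}, that the path hidden inside a common selector can be absorbed into the shape as the antecedent of an implication---this is sound precisely because SHACL shapes, unlike common shapes, are closed under negation and disjunction. Concretely, for each pair $(\sigma,\varphi)$ of the common schema, where $\sigma$ is a common selector and $\varphi$ a common shape, I would put into the SHACL schema the single pair $(s,\ \neg\hat\sigma \lor \hat\varphi)$. Here $\hat\varphi$ is the SHACL shape equivalent to $\varphi$ supplied by Lemma~\ref{lem:shapes-shacl}; since $\sigma$ is itself a common shape of the form $\exists\,\pexpr$, Lemma~\ref{lem:paths-shacl} also yields a SHACL shape $\hat\sigma$ with $\graph,v\models\hat\sigma$ iff $\graph,v\models\sigma$ for all $\graph$ and $v\in\nodes(\graph)\cup\values(\graph)$; and $s$ is a grounding SHACL selector chosen according to the form of $\sigma$.

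The choice of $s$ exploits the fact that every common selector form (Definition~\ref{def:common-selector}) forces its focus to occur in a triple with one fixed predicate or key, in a fixed direction. For the forward forms $\exists\,k$, $\exists\,p\cdot\pexpr$, $\exists\,\keyIsVal{k}{c}\cdot\pexpr$, and $\exists\,(\{k:\vtype\}\tAnd\top)\cdot\pexpr$, the focus must have an outgoing edge or property labelled $q$ (with $q=k$ or $q=p$ as appropriate), so I take $s=\exists\,q\ldotp\top$; for the backward forms $\exists\,p^{-}\!\cdot\pexpr$ and $\exists\,k^{-}\!\cdot\pexpr$, the focus must be the target of an incoming $q$, so I take $s=\exists\,q^{-}\ldotp\top$. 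In all cases $q\in\Predicates\cup\Keys$, so $s$ is a legal SHACL selector, and inspection of Table~\ref{tab:semPGtypes} confirms the key grounding property
\[
  \graph,v\models\sigma \quad\Longrightarrow\quad \graph,v\models s
  \qquad\text{for all }\graph\text{ and }v .
\]

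For correctness it suffices to show that, for every graph $\graph$, each common rule and its SHACL counterpart impose the same condition. Fix $(\sigma,\varphi)$ and the associated $(s,\neg\hat\sigma\lor\hat\varphi)$. Since $s$ and $\sigma$ are both satisfied only by elements of $\nodes(\graph)\cup\values(\graph)$, every $v$ outside the graph trivially satisfies both per-rule conditions, so it is enough to reason about $v\in\nodes(\graph)\cup\values(\graph)$, where the two lemmas apply. Using the pointwise equivalences $\hat\sigma\equiv\sigma$ and $\hat\varphi\equiv\varphi$, the SHACL condition ``$\graph,v\models s$ implies $\graph,v\models \neg\hat\sigma\lor\hat\varphi$'' rewrites to ``if $\graph,v\models s$ and $\graph,v\models\sigma$ then $\graph,v\models\varphi$''. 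By the grounding property, $\graph,v\models\sigma$ already entails $\graph,v\models s$, so the conjunction collapses to $\graph,v\models\sigma$, and the condition becomes ``$\graph,v\models\sigma$ implies $\graph,v\models\varphi$''---exactly the condition imposed by the common rule. As the common and SHACL schemas are finite sets of rules in bijection, and schema validity is the conjunction over rules of these per-rule conditions, we conclude $\graph\models(\text{common schema})$ iff $\graph\models(\text{SHACL schema})$ for every $\graph$.

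Given Lemma~\ref{lem:shapes-shacl}, this is essentially routine; the only points requiring care are verifying the grounding property $\graph,v\models\sigma\Rightarrow\graph,v\models s$ for each of the six selector forms, and noting that the value-involving selector $\exists\,k^{-}\!\cdot\pexpr$ is correctly handled by the inverse-key SHACL selector $\exists\,k^{-}\ldotp\top$, which grounds precisely to the values that are $k$-targets. I do not expect a genuine obstacle here, since the heavy lifting---the translation of star-free PG-path expressions and content types into Boolean combinations of SHACL shapes---has already been performed in the preceding lemmas, leaving only the selector-to-shape transfer above.
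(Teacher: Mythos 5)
Your proposal is correct and matches the paper's own proof essentially verbatim: the same per-rule translation with identical SHACL selectors for each of the six common-selector forms, and the same shape $\lnot\varphi_{\sel}\lor\hat\varphi$ built from Lemmas~\ref{lem:paths-shacl} and~\ref{lem:shapes-shacl}. Your explicit verification of the grounding property and the treatment of focus elements outside $\nodes(\graph)\cup\values(\graph)$ only spell out details the paper leaves implicit.
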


\begin{proof} 
Let $\schema$ be a common schema. We obtain an equivalent SHACL schema $\schema'$ by translating each  $(\sel, \varphi) \in \schema$ to $(\sel', \varphi')$ such that for all $\graph$ and $v\in \Nodes\cup\Values$, 

\begin{center}
 $\graph, v \models \sel$ implies $\graph, v \models \varphi$ 
 
 iff 
 
 $\graph, v \models \sel'$ implies $\graph, v \models \varphi'$.
 \end{center}
 Recall that $\sel$ is a common shape of one of the following forms:
\[ 
\exists\, k \,,\;
\exists\, p \cdot \pexpr\,, \;
\exists\, p^{-}\!\cdot \pexpr \,,\;
\exists\, \{k:c\}\cdot \pexpr \,, \;
\exists\, \big(\{k:\vtype\}\tAnd\top\big)\cdot \pexpr \,, \;
\exists\, k^{-}\!\cdot \pexpr\,.
\] 
For $\sel'$ we take, respectively, 
\[ 
\exists\, k.\top \,,\quad
\exists\, p.\top\,, \quad 
\exists\, p^{-}.\top \,,\quad 
\exists\, k.\top\,, \quad 
\exists\, k.\top\,, \quad 
\exists\, k^{-}.\top\,.
\]
For $\varphi'$ we take $\lnot \varphi_{\sel} \lor \hat\varphi$ where $\varphi_{\sel}$ is obtained using Lemma~\ref{lem:paths-shacl}
 and $\hat\varphi$ is obtained using Lemma~\ref{lem:shapes-shacl}. 
 \end{proof}
\todo[inline]{Cem: suggestion last line of proof of \Cref{lem:schemas-shacl},

For $\varphi'$ we take $\lnot \varphi_{\sel} \lor \hat\varphi$ where $\varphi_{\sel}$ is obtained using Lemma~\ref{lem:paths-shacl} {\color{red} on $\sel$}
 and $\hat\varphi$ is obtained using Lemma~\ref{lem:shapes-shacl} {\color{red} on $\varphi$}. 

}

\subsection{Translation to ShEx}

\begin{lemma}
\label{lem:contents-shex}
For each open content type $\ntype$ there is a ShEx shape $\varphi_{\ntype}$ such that  $\graph, v\models \varphi_{\ntype}$ iff $\rho(v) \in \sem{\ntype}$ for all  $\graph=(E,\rho)$ and $v\in\nodes(\graph)$.
\end{lemma}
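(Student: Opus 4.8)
The plan is to mirror the proof of Lemma~\ref{lem:contents-shacl}, replacing each SHACL construct by its ShEx counterpart and exploiting that ShEx shapes, like SHACL shapes, are closed under $\land$ and $\lor$. The only genuinely new point is that ShEx's triple-expression machinery constrains entire neighbourhoods---properties \emph{and} edges, incoming \emph{and} outgoing---so I must check that it can be made to talk about the content $\rho(v)$ alone by leaving everything else open with $\shexallte$.

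First I would treat the atomic building block $\closedRT{k:\vtype}\tAnd\top$, whose semantics is the set of records containing a pair $(k,w)$ with $w\in\sem{\vtype}$ and all other key-value pairs unconstrained. I claim the ShEx shape $\shexneigh{k.\shextest(\vtype)\shexeach\shexallte}$ works. Unfolding the semantics of triple expressions, a neighbourhood $\neigh^{\pm}_\graph(v)$ lies in $\sem{k.\shextest(\vtype)\shexeach\shexallte}_v^\graph$ exactly when it splits into a single $k$-triple $(v,k,w)$ with $w\in\sem{\vtype}$ (matched by $k.\shextest(\vtype)$) and an arbitrary disjoint remainder (absorbed by $\shexallte$). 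Since keys are single-valued in common graphs and $v$ is a node, this holds iff $\rho(v,k)$ is defined and of type $\vtype$, i.e.\ iff $\rho(v)\in\sem{\closedRT{k:\vtype}\tAnd\top}$. For $\top$ itself I would take $\shextop$, which holds everywhere and matches $\sem{\top}=\Records$.

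Next I would handle conjunctions: for $\closedRT{k_1:\vtype_1}\tAnd\dots\tAnd\closedRT{k_m:\vtype_m}\tAnd\top$, the shape $\bigwedge_{i=1}^m \shexneigh{k_i.\shextest(\vtype_i)\shexeach\shexallte}$ enforces the presence of each $k_i$ with value of type $\vtype_i$ while leaving everything else open, the empty conjunction ($m=0$) being $\shextop$, matching $\closedRT{}\tAnd\top\equiv\top$. Finally, for a general open content type other than $\top$ I would push the disjunction to the top: as in Lemma~\ref{lem:contents-shacl} it can be written $(\ntype_1\tOr\dots\tOr\ntype_\ell)\tAnd\top$ with each $\ntype_i$ a conjunction of atomic types $\closedRT{k:\vtype}$, and since $\sem{-}$ distributes so that $\sem{(\ntype_1\tOr\dots\tOr\ntype_\ell)\tAnd\top}=\bigcup_i\sem{\ntype_i\tAnd\top}$, the shape $\varphi_1\lor\dots\lor\varphi_\ell$ (with $\varphi_i$ the shape for $\ntype_i\tAnd\top$ built above) has the required semantics.

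The routine part is the normal-form bookkeeping for content types, which can be reused verbatim from the SHACL proof. The main thing to get right---and the only place ShEx differs substantively from SHACL---is the first step: verifying that appending $\shexallte$ with $\shexeach$ really leaves all edges and all other properties unconstrained, so that $\shexneigh{k.\shextest(\vtype)\shexeach\shexallte}$ constrains only the single key $k$. I would spell out the disjointness condition in the semantics of $\shexeach$ to ensure the $k$-triple consumed by $k.\shextest(\vtype)$ is not re-demanded and that the remainder is genuinely arbitrary.
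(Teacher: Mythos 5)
Your proposal is correct and follows essentially the same route as the paper's proof: the same atomic shape $\shexneigh{k.\shextest(\vtype)\shexeach\shexallte}$ for $\closedRT{k:\vtype}\tAnd\top$, conjunction of such shapes for products, and the same normal form $(\ntype_1\tOr\dots\tOr\ntype_\ell)\tAnd\top$ handled by disjunction. The only difference is that you spell out the semantic verification of the atomic case (disjointness in $\shexeach$, single-valuedness of keys), which the paper leaves implicit; this is a harmless elaboration, not a different argument.
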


\begin{proof}
For the content type $\top$ the corresponding ShEx shape is $\shexneigh{\shexallte}$.


For a content type of the form \[\closedRT{k_1:\vtype_1}\tAnd \dots \tAnd \closedRT{k_m :\vtype_m}\tAnd \top\,,\] 
the corresponding ShEx shape is \[\shexneigh{k_1. \test(\vtype_1)\shexeach \shexallte} \land  \dots \land \shexneigh{k_m.\test(\vtype_m)\shexeach \shexallte}\,.\] 

Finally, every other  open content type can be expressed as 
\[(\ntype_1 \tOr \dots \tOr \ntype_\ell)\tAnd \top\,,\]
where each $\ntype_i$ has the form $\closedRT{k_1:\vtype_1}\tAnd \dots \tAnd \closedRT{k_m :\vtype_m}$ for some $m$.
The corresponding ShEx shape is 
\[\varphi_1 \lor \dots \lor \varphi_\ell\,,\]
where $\varphi_i$ is the ShEx shape corresponding to the content type $\ntype_i \tAnd \top$.
\end{proof}

\begin{lemma}
\label{lem:filter-shex} For each filter $\pexpr_0$ there is a ShEx shape $\varphi_{\pexpr_0}$ such that  $\graph, v\models \varphi_{\pexpr_0}$ iff $(v,v) \in \sem{\pexpr_0}^\graph$ for all  $\graph$ and $v\in\nodes(\graph)$. 
\end{lemma}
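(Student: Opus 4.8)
The plan is to prove this by structural induction on the grammar for filters, mirroring the proof of Lemma~\ref{lem:filter-shacl} for SHACL but substituting the ShEx constructs we have available. Recall that a filter $\pexpr_0$ is built from the atoms $\keyIsVal{k}{c}$, $\lnot\keyIsVal{k}{c}$, $\ntype\tAnd\top$, and $\lnot(\ntype\tAnd\top)$ using concatenation, and that every filter denotes a subrelation of the identity, so that $(v,v)\in\gsem{\pexpr_0}$ is the only kind of pair it can contribute at a node $v$. For the base case $\ntype\tAnd\top$ I would simply invoke Lemma~\ref{lem:contents-shex}: since $\gsem{\ntype\tAnd\top}=\{(u,u)\mid u\in\nodes(\graph)\land\rho(u)\in\sem{\ntype\tAnd\top}\}$, the ShEx shape $\varphi_{\ntype\tAnd\top}$ furnished by that lemma already satisfies $\graph,v\models\varphi_{\ntype\tAnd\top}$ iff $(v,v)\in\gsem{\ntype\tAnd\top}$ for every node $v$.

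The main obstacle---and the only place where the ShEx translation genuinely differs from the one-line SHACL shape $\exists k.\hasvalue(c)$---is the atom $\keyIsVal{k}{c}$. Here I would take $\varphi_{\keyIsVal{k}{c}} = \shexneighopen{k.\shexhasvalue(c)}$, which unfolds to $\shexneigh{k.\shexhasvalue(c)\shexeach(\shexneginv{\emptyset})^{*}\shexeach(\shexneg{\{k\}})^{*}}$. The delicate point is correctness: the triple expression $k.\shexhasvalue(c)$ forces exactly one outgoing $k$-triple, whose target must equal $c$, while $(\shexneg{\{k\}})^{*}$ and $(\shexneginv{\emptyset})^{*}$ absorb all remaining outgoing and incoming triples. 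One must check that ``exactly one outgoing $k$-edge, leading to $c$'' coincides with ``$(k,c)\in\rho(v)$''. This is precisely where I would use the defining property of common graphs that $\rho$ is single-valued: a node has at most one outgoing $k$-edge, so no $k$-edge can ever be left over to be matched by $(\shexneg{\{k\}})^{*}$, and demanding exactly one $k$-edge (to $c$) is therefore equivalent to $\rho(v,k)=c$.

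The remaining cases are routine closure arguments. For the negated atoms $\lnot\keyIsVal{k}{c}$ and $\lnot(\ntype\tAnd\top)$ I would take $\lnot\varphi_{\keyIsVal{k}{c}}$ and $\lnot\varphi_{\ntype\tAnd\top}$; since the statement quantifies only over $v\in\nodes(\graph)$ and ShEx shapes are closed under $\lnot$, negating the shape negates the membership condition, matching the filter semantics $(v,v)\notin\gsem{\keyIsVal{k}{c}}$ and $(v,v)\notin\gsem{\ntype\tAnd\top}$ respectively. Finally, for a concatenation $\pexpr_0\cdot\pexpr_0'$ I would observe that, because both factors are subrelations of the identity, $(v,v)\in\gsem{\pexpr_0\cdot\pexpr_0'}$ holds iff $(v,v)\in\gsem{\pexpr_0}$ and $(v,v)\in\gsem{\pexpr_0'}$; hence $\varphi_{\pexpr_0}\land\varphi_{\pexpr_0'}$ works, using closure of ShEx shapes under conjunction. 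Combining the four cases yields the desired $\varphi_{\pexpr_0}$ and completes the induction.
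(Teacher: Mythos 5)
Your proof is correct and follows essentially the same route as the paper's: base case $\ntype\tAnd\top$ via Lemma~\ref{lem:contents-shex}, a direct ShEx shape for $\keyIsVal{k}{c}$, and closure of ShEx shapes under negation and conjunction for the negated atoms and for concatenation (using that filters are subrelations of the identity). The only difference is the shape chosen for $\keyIsVal{k}{c}$: the paper takes the fully open $\shexneigh{k.\shexhasvalue(c)\shexeach\shexallte}$, which is correct without any appeal to the single-valuedness of $\rho$, whereas your $k$-closed variant $\shexneighopen{k.\shexhasvalue(c)}$ is also correct but, as you rightly note, requires exactly that property of common graphs.
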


\begin{proof} By Lemma~\ref{lem:contents-shex}, the claim holds for $\pi_0=\ntype\tAnd \top$. 
For $\{k:c\}$ the corresponding ShEx shape is $\shexneigh{k.\hasvalue(c);\shexallte}$. Because ShEx shapes are closed under negation, the claim also holds for $\lnot\{k:c\}$ and $\lnot(\ntype\tAnd \top)$.  Finally, concatenations of filters correspond to conjunctions of shapes, so the claim follows because ShEx shapes are closed under conjunction. 
\end{proof}

\begin{lemma}
\label{lem:paths-shex}
For each common shape of the form $\exists\,\pexpr$ there is a ShEx shape $\varphi_{\exists\pexpr}$ such that $\graph, v\models \varphi_{\exists\pexpr}$ iff $\graph,v\models \exists\, \pexpr$ for all $\graph$ and $v\in\nodes(\graph) \cup \values(\graph)$. 
\end{lemma}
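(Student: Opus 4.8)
The plan is to mirror the proof of Lemma~\ref{lem:paths-shacl} almost step for step, replacing SHACL's existential modality $\exists\, q.\varphi$ by the ShEx idiom $\shexneigh{q.\varphi \shexeach \shexallte}$, which asserts the existence of at least one outgoing $q$-successor satisfying $\varphi$: the factor $\shexallte$ absorbs every other incident triple (including further $q$-triples), so no exact-count is imposed. Throughout I would rely on Lemma~\ref{lem:filter-shex} to translate filters $\pexpr_0$, and on the fact that ShEx shapes are closed under $\land$ and $\lor$ (Definition~\ref{def:shex-syntax}). Correctness of each clause is read off from the neighbourhood $\neigh^{\pm}_\graph$ and the triple-expression rules of Table~\ref{tab:semantics-shex-triple-expr-denotational} together with Table~\ref{tab:semantics-shex-shape-expr}.

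First I would treat the restricted case where the PG-path expression $\pexpr$ underlying $\exists\,\pexpr$ is a single concatenation of filters and atomic steps $p$, $p^{-}$, $k$, $k^{-}$, assuming without loss of generality (appending a trivial always-satisfied filter, whose ShEx translation holds at every node) that the concatenation ends in a filter or in $k$. Inducting on the length of the concatenation, the base cases are $\exists\,\pexpr_0$, translated to $\varphi_{\pexpr_0}$ via Lemma~\ref{lem:filter-shex}, and $\exists\,k$, translated to $\shexneigh{k.\shextop \shexeach \shexallte}$. Writing the concatenation as a head step followed by a rest $\pexpr'$, the inductive clauses are $\varphi_{\exists\,\pexpr_0\cdot\pexpr'} = \varphi_{\pexpr_0}\land\varphi_{\exists\pexpr'}$, $\varphi_{\exists\,p\cdot\pexpr'} = \shexneigh{p.\varphi_{\exists\pexpr'}\shexeach\shexallte}$, and analogously $\shexneigh{\shexinverse{p}.\varphi_{\exists\pexpr'}\shexeach\shexallte}$ and $\shexneigh{\shexinverse{k}.\varphi_{\exists\pexpr'}\shexeach\shexallte}$ for the heads $p^{-}$ and $k^{-}$. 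When the focus is a value, the incoming key-edges recorded in $\neigh^{\pm}_\graph$ make the $\shexinverse{k}$ clause correctly capture navigation from a value back to a node.

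Next I would reduce the general case to the restricted one. Since common-shape PG-path expressions are star-free, the node-to-node part $\ppexpr$ can be normalised into a disjunctive form $\ppexpr^1\cup\dots\cup\ppexpr^m$ in which each $\ppexpr^i$ is a concatenation of filters and atoms $p$, $p^{-}$: one pushes ${}^{-}$ inward to atoms (filters being symmetric) and distributes $\cdot$ over $\cup$, all sound in the absence of Kleene star. Because composition distributes over union in the path semantics (Table~\ref{tab:semPGtypes}) and existence over a union of relations is the disjunction of existences, for the four shapes of $\pexpr$ I would take the corresponding disjunction of restricted-case translations, e.g.\ for $\pexpr = k^{-}\cdot\ppexpr\cdot k'$ the shape $\varphi_{\exists\,k^{-}\cdot\ppexpr^1\cdot k'}\lor\dots\lor\varphi_{\exists\,k^{-}\cdot\ppexpr^m\cdot k'}$, and similarly for $\ppexpr\cdot k$, $k^{-}\cdot\ppexpr$, and $\ppexpr$; each disjunct is handled by the restricted case and the disjunction is a ShEx shape by closure under $\lor$.

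The one point requiring care — rather than a genuine obstacle — is verifying that $\shexneigh{q.\varphi\shexeach\shexallte}$ faithfully encodes ``there exists a $q$-successor satisfying $\varphi$''. Because ShEx counts triples rather than endpoint nodes (cf.\ Example~\ref{ex:shex-counts-edges}), one must check, using the $\shexeach$ and ${}^{*}$ rules of Table~\ref{tab:semantics-shex-triple-expr-denotational}, that the presence of $\shexallte$ lets all additional incident triples be split off, so that the shape is satisfied exactly when at least one suitable $q$-triple is present — matching the semantics of $\exists\,\pexpr$ in Table~\ref{tab:semPGshapes}. With this observation in place, everything else is the routine induction paralleling the SHACL argument.
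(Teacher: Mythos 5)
Your proposal is correct and follows essentially the same route as the paper's own proof: an induction on the length of a concatenation of filters and atomic steps (with the same base cases $\varphi_{\pexpr_0}$ and $\shexneigh{k.\shextop \shexeach \shexallte}$, the same inductive clauses $\varphi_{\pexpr_0}\land\varphi_{\exists\pexpr'}$ and $\shexneigh{p.\varphi_{\exists\pexpr'}\shexeach\shexallte}$), followed by reducing the general star-free case to a union of concatenations and taking the disjunction of the translations. The extra verifications you flag (that $\shexallte$ absorbs surplus triples, and that the normalisation into unions of concatenations is sound) are details the paper leaves implicit, but they do not change the argument.
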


\begin{proof}
Let us first look at common shapes of the form $\exists\,\pexpr$ where $\pexpr$ is a concatenation of filters and atomic path expressions of the form  $p$, $p^{-}$, $k$, or $k^{-}$. Without loss of generality we can assume that the concatenation ends with a filter or with $k$.
We proceed by induction on the length of the concatenation. The base cases are $\exists \pi_0$ and $\exists k$, which correspond to $\varphi_{\pi_0}$ (Lemma~\ref{lem:filter-shex}) and $\shexneigh{k.\shextop \shexeach \shexallte} $, respectively.
For $\exists\, \pexpr_0\cdot\pexpr$ we can take $\varphi_{\pexpr_0} \land \varphi_{\exists \pi}$. 
For $\exists\, p\cdot\pexpr$ we can take $\shexneigh{p.\varphi_{\exists \pi} \shexeach \shexallte}$, and similarly for $\exists\, p^{-}\cdot\pexpr$ and $\exists\, k^{-}\cdot\pexpr$. 

The general case follows because ShEx shapes are closed under union. Indeed, because our PG-path expressions are star-free, we can assume without loss of generality that in each common shape of the form $\exists\, \pexpr$, the PG-path expression $\ppexpr$ underlying $\pexpr$ is a union of concatenations of filters and atomic path expressions of the form $p$ or $p^{-}$. Then, for \[\exists\, k^{-}\cdot(\pexpr^1 \cup \dots \cup\pexpr^m)\cdot k'\] we can take \[\varphi_{\exists k^{-}\cdot\pexpr^1\cdot k'} \lor \dots \lor \varphi_{\exists k^{-}\cdot\pexpr^m\cdot k}\,.\] Simiarly for $\exists\, k^{-}\cdot(\pexpr^1 \cup \dots \cup\pexpr^m)$, $\exists\, (\pexpr^1 \cup \dots \cup\pexpr^m)\cdot k'$, and $\exists\, (\pexpr^1 \cup \dots \cup\pexpr^m)$.
\end{proof}

\begin{lemma}
\label{lem:shapes-shex}
For each common shape $\varphi$ there is a ShEx shape $\hat \varphi$ such that  $\graph, v\models \varphi$ iff $\graph,v\models \hat \varphi$ for all $\graph$ and $v\in\nodes(\graph)  \cup \values(\graph)$. 
\end{lemma}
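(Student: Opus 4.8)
The plan is to follow the proof of Lemma~\ref{lem:shapes-shacl} essentially line by line, substituting the ShEx gadgets built in Lemmas~\ref{lem:contents-shex}, \ref{lem:filter-shex}, and \ref{lem:paths-shex} for their SHACL counterparts. Since ShEx shapes are closed under conjunction, disjunction and negation (Table~\ref{tab:semantics-shex-shape-expr}), the conjunctive case $\varphi\land\varphi'$ is immediate, so it suffices to translate the four atomic common shapes of Definition~\ref{def:simple-shape}: $\exists\,\pexpr$, $\exists^{\geq n}\pexpr_1$, $\exists^{\leq n}\pexpr_1$, and $\exists\,\ntype\land\not\exists\,\lnot P$. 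The first is already provided by Lemma~\ref{lem:paths-shex}, and I write $\varphi_{\pexpr_0}$ for the filter translation from Lemma~\ref{lem:filter-shex}.

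The heart of the argument is the counting case, where $\pexpr_1$ traverses a single edge or property and so has one of the forms $\pexpr_0\cdot p\cdot\pexpr'_0$, $\pexpr_0\cdot p^{-}\cdot\pexpr'_0$, $\pexpr_0\cdot k$, or $k^{-}\cdot\pexpr'_0$. The observation that makes the translation possible --- and that is exactly why the common core bans label disjunctions under counting (cf.\ Example~\ref{ex:shex-counts-edges}) --- is that the edge set of a common graph is a \emph{set} of triples, so for fixed $v$ and $p$ the assignment $v'\mapsto(v,p,v')$ is a bijection between the $p$-successors of $v$ and the $p$-triples leaving $v$; likewise each node has at most one value per key. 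Hence the triple-counting performed by ShEx agrees with the node-counting of $\exists^{\geq n}$ and $\exists^{\leq n}$. Concretely, for $\pexpr_1=\pexpr_0\cdot p\cdot\pexpr'_0$ and $n>0$ I take
\[
\varphi_{\pexpr_0}\land \shexneighopen{(p.\varphi_{\pexpr'_0})^{\geq n}\shexeach (p.\lnot\varphi_{\pexpr'_0})^{*}}
\]
for $\exists^{\geq n}\pexpr_1$ (and $\shextop$ when $n=0$), and
\[
\lnot\varphi_{\pexpr_0}\lor \shexneighopen{(p.\varphi_{\pexpr'_0})^{\leq n}\shexeach (p.\lnot\varphi_{\pexpr'_0})^{*}}
\]
for $\exists^{\leq n}\pexpr_1$. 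The other forms are identical after replacing $p$ by $p^{-}$ or $k^{-}$ and, in $k^{-}\cdot\pexpr'_0$, dropping the leading guard (there is no leading filter); for $\pexpr_0\cdot k$ there is no target filter, so I drop the second $\shexeach$-factor and use $(k.\shextop)^{\geq n}$ (resp.\ $(k.\shextop)^{\leq n}$), and single-valuedness of $\rho$ then makes the $\geq n$ shape automatically false for $n\ge 2$, matching the PG-shape.

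For the closedness shapes $\exists\,\ntype\land\not\exists\,\lnot P$ I reuse the disjunctive normal form of Lemma~\ref{lem:contents-shex}, writing $\ntype$ as $\ntype_1\tOr\dots\tOr\ntype_\ell$ with $\ntype_i=\closedRT{k^i_1:\vtype^i_1}\tAnd\dots\tAnd\closedRT{k^i_{m_i}:\vtype^i_{m_i}}$, and set $\hat\varphi=\varphi_1\lor\dots\lor\varphi_\ell$ with
\[
\varphi_i=\shexneigh{k^i_1.\test(\vtype^i_1)\shexeach\dots\shexeach k^i_{m_i}.\test(\vtype^i_{m_i})\shexeach (p_1.\shextop\shexone\dots\shexone p_{|P|}.\shextop)^{*}\shexeach(\shexneginv{\emptyset})^{*}}
\]
for $P=\{p_1,\dots,p_{|P|}\}$ (omitting the $P$-factor when $P=\emptyset$). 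This expression is \emph{half-open}: the factor $(\shexneginv{\emptyset})^{*}$ admits arbitrary incoming triples, but the absence of any $(\shexneg{Q})^{*}$ factor forces every outgoing triple to be matched explicitly, pinning the content of the focus node to exactly $\ntype_i$ and its outgoing predicates to $P$ --- precisely $\exists\,\ntype_i\land\not\exists\,\lnot P$.

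The only genuinely delicate point is the counting step: one must verify that the two $\shexeach$-factors partition the $p$-triples without overlap, which holds because $\varphi_{\pexpr'_0}$ and $\lnot\varphi_{\pexpr'_0}$ are complementary on the targets, and that the open part of $\shexneighopen{\cdot}$ cannot silently absorb a $p$-triple, which holds because its outgoing factor is $(\shexneg{Q})^{*}$ with $p\in Q$. A subtlety requiring the same care as in Lemma~\ref{lem:shapes-shacl} concerns focus elements that are \emph{values} rather than nodes: there $\exists\,\ntype$ is always false, and the counting and path shapes are also false because they demand an incident edge that values lack, while for the closedness shape one obtains exact agreement by conjoining the node test $\lnot\shextest(\any)$ (well-defined since $\sem{\any}=\Values$). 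With these gadgets in place, correctness follows by the same routine structural induction as in Lemma~\ref{lem:shapes-shacl}, invoking Lemmas~\ref{lem:contents-shex}, \ref{lem:filter-shex}, and \ref{lem:paths-shex} for the base cases.
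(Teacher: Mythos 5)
Your proof is correct and follows essentially the same route as the paper's: reduce to the four atomic common shapes via closure under conjunction, then translate each one using Lemmas~\ref{lem:contents-shex}, \ref{lem:filter-shex}, and~\ref{lem:paths-shex}. The counting gadgets differ only cosmetically: the paper uses $\varphi_{\pexpr_0}\land\shexneigh{(p.\varphi_{\pexpr'_0})^{n}\shexeach\shexallte}$ for $\exists^{\geq n}\pexpr_1$ and the negation of the $(n{+}1)$-fold version for $\exists^{\leq n}\pexpr_1$, whereas you keep the complementary factor $(p.\lnot\varphi_{\pexpr'_0})^{*}$ inside a $\shexneighopen{\cdot}$; both partitions of the $p$-triples are equivalent, yours being slightly redundant in the $\geq$ case and avoiding an outer negation in the $\leq$ case. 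One point where you are in fact more careful than the paper: for the closedness shape when $\ntype$ has the disjunct $\closedRT{}$, the half-open translation $\shexneigh{P^{*}\shexeach(\shexneginv{\emptyset})^{*}}$ is satisfied by every \emph{value} (a value's neighbourhood has only incoming triples, all absorbed by $(\shexneginv{\emptyset})^{*}$), while the PG-shape $\exists\,\closedRT{}\land\not\exists\,\lnot P$ is false at values; your conjoined node test $\lnot\shextest(\any)$ repairs exactly this corner case, which the paper's proof leaves unaddressed.
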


\begin{proof} 
Because ShEx shapes are closed under conjunction, it suffices to prove the claim for the atomic common shapes of the forms $\exists\, \pexpr$, $\exists^{\leq n}\pexpr_1$, $\exists^{\geq n}\pexpr_1$, and $\exists \ntype \land \not\exists\lnot P$. The first case follows from Lemma~\ref{lem:paths-shex}. 

Let us look at common shapes of the form $\exists^{\geq n}\pexpr_1$. If $n=0$ we can simply take $\shextop$. Suppose $n>0$. Then, for \[\exists^{\geq n} \pexpr_0\cdot p\cdot \pexpr'_0\] we can take 
\[\varphi_{\pexpr_0} \land \big\{\big(p. \varphi_{\pexpr'_0}\big)^{n}\shexeach \shexallte\big\},\] and similarly for $\exists^{\geq n}\pexpr_0\cdot p^{-}\cdot \pexpr'_0$, $\exists^{\geq n}\pexpr_0\cdot k^{-}\cdot \pexpr'_0$, and $\exists^{\geq n}\pexpr_0\cdot k$ (using $\shextop$ instead of $\varphi_{\pexpr'_0}$).
 
Next, we consider common shapes of the form 
$\exists^{\leq n}\pexpr_1$. For \[\exists^{\leq n} \pexpr_0\cdot p\cdot \pexpr'_0\] we can take \[\lnot\varphi_{\pexpr_0} \lor \lnot 
\big\{\big(p.\varphi_{\pexpr'_0}\big)^{n+1}\shexeach \shexallte\big\}\] and similarly for $\exists^{\leq n}\pexpr_0\cdot p^{-}\cdot \pexpr'_0$, $\exists^{\leq n}\pexpr_0\cdot k^{-}\cdot \pexpr'_0$, and $\exists^{\leq n}\pexpr_0\cdot k$ (again, using $\shextop$ instead of $\varphi_{\pexpr'_0}$). 

Before we move on, let us introduce a bit of syntactic sugar. For a set $Q = \{q_1, q_2, \dots, q_n\}\subseteq \Predicates \cup \Keys$ we write $Q^{*}$ for the triple expression
$\big(q_1.\shextop \shexone q_2.\shextop \shexone \dots \shexone q_n.\shextop\big)^{*}$. 

We are now ready to consider a common shape of the form $\exists\, \ntype \land \not\exists\,\lnot P$. Suppose first that 
\[\ntype = \closedRT{}\,.\] 
Then, the corresponding ShEx shape is simply 
\[\shexneigh{P^{*} \shexeach \big(\lnot\emptyset^{-}\big)^{*}}\,.\] 
Next, suppose that \[\ntype = \closedRT{k_1:\vtype_1}\tAnd \dots \tAnd \closedRT{k_m :\vtype_m}\,.\] 
Then, the corresponding ShEx shape is 
\[
\varphi_{\ntype\tAnd\top} \land \shexneigh{\{k_1, \dots,k_m\}^{*}\shexeach P^{*} \shexeach \big(\lnot\emptyset^{-}\big)^{*}} \,,\]
where $\varphi_{\ntype\tAnd\top}$ is obtained from Lemma~\ref{lem:contents-shex}.
In general, as in Lemma~\ref{lem:contents-shex}, we can assume that 
\[\ntype = \ntype_1 \tOr \dots \tOr \ntype_m\] where each $\ntype_i$ is of one of the two forms considered above. The corresponding ShEx shape is then 
\[\varphi_1 \lor \dots \lor \varphi_m\] where $\varphi_i$ is the ShEx shape corresponding to $\exists \ntype_i \land \not\exists\lnot P$, obtained as described above. 
\end{proof}

\begin{lemma}
\label{lem:schemas-shex}
For every common schema there is an equivalent ShEx schema.  
\end{lemma}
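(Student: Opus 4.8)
The plan is to follow the template of the already-established SHACL case (\Cref{lem:schemas-shacl}) essentially verbatim, now drawing on the ShEx building blocks \Cref{lem:contents-shex,lem:filter-shex,lem:paths-shex,lem:shapes-shex} instead of their SHACL counterparts. Given a common schema $\schema$, I would translate it pair-by-pair: each $(\sel,\varphi)\in\schema$ is replaced by a ShEx pair $(\sel',\varphi')$ with $\varphi' = \lnot\varphi_{\sel}\lor\hat\varphi$, where $\varphi_{\sel}$ is the ShEx shape equivalent to the selector (viewed as a common shape $\exists\,\pexpr$) provided by \Cref{lem:paths-shex}, and $\hat\varphi$ is the ShEx shape equivalent to $\varphi$ provided by \Cref{lem:shapes-shex}. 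Since ShEx shapes are closed under $\lnot$ and $\lor$ (\Cref{def:shex-syntax}), $\varphi'$ is a legitimate ShEx shape. The resulting set of pairs is the ShEx schema $\schema'$.

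The one genuinely new ingredient is the selector: a common selector may test a nontrivial path, whereas a ShEx selector (\Cref{def:shex-selector}) can only test a single incident predicate or key. I would therefore choose, for each of the six forms of common selector in \Cref{def:common-selector}, a ShEx selector $\sel'$ that merely checks the first step and hence selects a \emph{superset} of the focus nodes and values:
\[
\begin{array}{r@{\;\mapsto\;}l}
\exists\, k & \shexneigh{k.\shextop \shexeach \shexallte}\,,\\
\exists\, p\cdot\pexpr & \shexneigh{p.\shextop \shexeach \shexallte}\,,\\
\exists\, p^{-}\!\cdot\pexpr & \shexneigh{\shexinverse{p}.\shextop \shexeach \shexallte}\,,\\
\exists\, \keyIsVal{k}{c}\cdot\pexpr & \shexneigh{k.\shextop \shexeach \shexallte}\,,\\
\exists\, (\{k:\vtype\}\tAnd\top)\cdot\pexpr & \shexneigh{k.\shextop \shexeach \shexallte}\,,\\
\exists\, k^{-}\!\cdot\pexpr & \shexneigh{\shexinverse{k}.\shextop \shexeach \shexallte}\,.
\end{array}
\]
In each case one checks directly from the semantics that $\graph,v\models\sel$ implies $\graph,v\models\sel'$: whenever the full path required by $\sel$ exists, its initial outgoing edge or property (respectively the incoming one, for the inverse forms) is present, which is exactly what $\sel'$ demands; for $\keyIsVal{k}{c}$ and $\{k:\vtype\}\tAnd\top$ the filter forces the focus node to carry key $k$, so the outgoing-$k$ selector fires.

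It then remains to argue that $(\sel',\varphi')$ is equivalent to $(\sel,\varphi)$ at every $v\in\Nodes\cup\Values$. Writing $\sel(v)$, $\sel'(v)$, $\varphi(v)$ for the satisfaction predicates and using $\varphi_{\sel}\equiv\sel$ and $\hat\varphi\equiv\varphi$, the new constraint $\sel'(v)\Rightarrow(\lnot\varphi_{\sel}(v)\lor\hat\varphi(v))$ is logically $(\sel'(v)\land\sel(v))\Rightarrow\varphi(v)$. By the over-approximation just verified, $\sel(v)$ implies $\sel'(v)$, so $\sel'(v)\land\sel(v)=\sel(v)$ and the condition collapses to $\sel(v)\Rightarrow\varphi(v)$, i.e.\ the original constraint; summing over all pairs yields $\graph\models\schema$ iff $\graph\models\schema'$ for every $\graph$. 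The work is mostly bookkeeping, since \Cref{lem:paths-shex,lem:shapes-shex} already absorb the translation of star-free PG-path expressions and closed content types into nested ShEx shapes. The only point requiring care, and the main obstacle, is precisely this soundness of replacing the true selector by a cheap over-approximation while pushing the real selector condition into the antecedent of the shape via Boolean closure, which is what renders the missing path test in ShEx selectors harmless.
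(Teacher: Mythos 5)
Your proposal is correct and takes essentially the same route as the paper's own proof: the same per-pair translation $(\sel,\varphi)\mapsto(\sel',\varphi')$ with $\varphi'=\lnot\varphi_{\sel}\lor\hat\varphi$ obtained from Lemmas~\ref{lem:paths-shex} and~\ref{lem:shapes-shex}, and the very same six over-approximating ShEx selectors (keys, filters, and open content types all mapped to the outgoing-$k$ selector, predicates and inverses to their one-step counterparts). The only difference is cosmetic: you make explicit the logical collapse $(\sel'\land\sel)\Rightarrow\varphi \equiv \sel\Rightarrow\varphi$ justifying the over-approximation, which the paper leaves implicit.
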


\begin{proof} 
Let $\schema$ be a common schema. We obtain an equivalent ShEx schema $\schema'$ by translating each  $(\sel, \varphi) \in \schema$ to $(\sel', \varphi')$ such that for all $\graph$ and $v\in \Nodes\cup\Values$, 

\begin{center}
 $\graph, v \models \sel$ implies $\graph, v \models \varphi$ 
 
 iff 
 
 $\graph, v \models \sel'$ implies $\graph, v \models \varphi'$.
 \end{center}
 Recall that $\sel$ is a common shape of one of the following forms:
\[ 
\exists\, k \,,\;
\exists\, p \cdot \pexpr\,, \;
\exists\, p^{-}\cdot \pexpr \,,\;
\exists\, \{k:c\}\cdot \pexpr \,, \;
\exists\, \big(\{k:\vtype\}\tAnd\top\big)\cdot \pexpr \,, \;
\exists\, k^{-}\cdot \pexpr\,.
\]
If $\sel$ is of the form  \[\exists\, k\,,\quad \exists\, \{k:c\}\cdot \pi\,,\quad \text{or}\quad \exists\, \big(\{k:\vtype\}\tAnd \top\big)\cdot \pi\,,\] for $\sel'$ we take $\shexneigh{ k.\!\shextop \shexeach \shexallte}$. In the remaining cases, we take, respectively, 
\[ 
\shexneigh{ p.\!\shextop \shexeach \shexallte},\quad
\shexneigh{ p^{-}\!.\!\shextop \shexeach \shexallte},\quad
\shexneigh{ k^{-}\!.\!\shextop \shexeach \shexallte}.
\]
For $\varphi'$ we take $\lnot \varphi_{\sel} \lor \hat\varphi$ where $\varphi_{\sel}$ is obtained using Lemma~\ref{lem:paths-shex}
 and $\hat\varphi$ is obtained using Lemma~\ref{lem:shapes-shex}. 
 \end{proof}

\end{document}